\newcolumntype{d}{D{.}{.}{3.1}}
\newtheorem{theorem}{Theorem}[section]
\newtheorem{corollary}[theorem]{Corollary}
\newtheorem{lemma}[theorem]{Lemma}
\newtheorem{proposition}[theorem]{Proposition}
\theoremstyle{definition}
\theoremstyle{remark}
\newtheorem{remark}[theorem]{Remark}
\numberwithin{equation}{section}
\newcommand{\Law}{\mathrm{Law}}
\newcommand{\ud}{\,\mathrm{d}}
\newcommand{\e}{\mathrm{e}}
\DeclareMathOperator*{\esssup}{ess\,sup}
\begin{document}

\title[ESO with drift change point]{Executive stock option exercise
  with full and partial information on a drift change point}


\author[Vicky Henderson]{Vicky Henderson}

\address{Vicky Henderson\\
Department of Statistics \\
Zeeman Building \\
University Of Warwick \\
Coventry CV4 7AL \\
UK}

\email[]{Vicky.Henderson@warwick.ac.uk}

\author[Kamil Klad{\'i}vko]{Kamil Klad{\'i}vko}

\address{Kamil Klad{\'i}vko\\
School of Business\\
\"{O}rebro University \\
701 82 \"{O}rebro, Sweden}

\email[]{kladivko@gmail.com}

\author[Michael Monoyios]{Michael Monoyios}

\address{Michael Monoyios\\
Mathematical Institute \\ 
University of Oxford \\
Radcliffe Observatory Quarter \\
Woodstock Road\\ 
Oxford OX2 6GG \\
UK}

\email[]{monoyios@maths.ox.ac.uk}

\author[Christoph Reisinger]{Christoph Reisinger}

\address{Christoph Reisinger\\
Mathematical Institute \\ 
University of Oxford \\
Radcliffe Observatory Quarter \\
Woodstock Road\\ 
Oxford OX2 6GG \\
UK}

\email[]{reisinge@maths.ox.ac.uk}

\date{\today}

\begin{abstract}

 We analyse the optimal exercise of an American call executive stock
option (ESO) written on a stock whose drift parameter falls to a lower
value at a \emph{change point}, an exponentially distributed random
time independent of the Brownian motion driving the stock. Two agents,
who do not trade the stock, have differing information on the change
point, and seek to optimally exercise the option by maximising its
discounted payoff under the physical measure. The first agent has full
information, and observes the change point. The second agent has
partial information and filters the change point from price
observations. This scenario is designed to mimic the positions of two
employees of varying seniority, a fully informed executive and a
partially informed less senior employee, each of whom receives an
ESO\@. The partial information scenario yields a model under the
observation filtration $\widehat{\mathbb{F}}$ in which the stock drift
becomes a diffusion driven by the innovations process, an
$\widehat{\mathbb{F}}$-Brownian motion also driving the stock under
$\widehat{\mathbb{F}}$, and the partial information optimal stopping
value function has two spatial dimensions. We rigorously characterise
the free boundary PDEs for both agents, establish shape and regularity
properties of the associated optimal exercise boundaries, and prove
the smooth pasting property in both information scenarios, exploiting
some stochastic flow ideas to do so in the partial information
case. We develop finite difference algorithms to numerically solve
both agents' exercise and valuation problems and illustrate that the
additional information of the fully informed agent can result in
exercise patterns which exploit the information on the change point,
lending credence to empirical studies which suggest that privileged
information of bad news is a factor leading to early exercise of ESOs
prior to poor stock price performance.

\end{abstract}

\maketitle

{\small \noindent Keywords: optimal stopping, free boundary problems,
executive stock options, American options, smooth pasting,
stochastic flows, Kalman-Bucy filter
\newline AMS Subject classifications: 91G80, 93E11, 93E20}

\section{Introduction}
\label{sec:intro}


In this paper we consider two pure optimal stopping problems involving
a constant volatility stock whose drift parameter suffers a
\emph{change point}. At an exponentially distributed random time
$\theta$ (the change point), independent of the Brownian motion $W$
driving the stock, its drift falls from its initial constant value
$\mu_{0}$ to a lower constant value $\mu_{1}<\mu_{0}$. The two
problems we study are distinguished by \emph{full information}, in
which the change point is observed, or by \emph{partial information},
in which the change point is not observable, and so is filtered from
observations of the stock price.

The optimal stopping problems arise from the exercise of an executive
stock option (ESO), an American call on a stock that is not traded by
the option holders. Such a scenario is sometimes referred to as a
``pure buyer's position'', wherein an agent acquires an option, is not
able to hedge the option due to trading
restrictions, and seeks only to optimally exercise the
claim. The objective we use for this completely unhedgeable payoff is
to maximise the discounted payoff under the physical measure
$\mathbb{P}$ over stopping times of the agent's filtration. Our two
ESO-holding agents thus differ only in the respective filtrations to
which each has access, and one of our goals is to understand
how this information differential affects their exercise
strategies. 
Our aim is to capture a firm specific disastrous event, which 
happens at a random time, and is immediately known by the firm's top executives, but it is not revealed publicly, at least not immediately, and thus it is unknown to less senior employees. 
Recent examples of such disastrous events could be the Volkswagen emissions scandal (Dieselgate), the Facebook-Cambridge Analytica data scandal, or 
Boeing 737 MAX groundings. 

The first agent has ``full information''. He observes the change point
process $Y\in\{0,1\}$ (the indicator that the change point has
occurred) as well as the Brownian motion $W$, so his filtration,
$\mathbb{F}$ (the ``large'' filtration, or \emph{background
  filtration}), is the augmentation of the filtration generated by
$(W,Y)$. In this case, the (random) drift process of the stock is
$\mu(Y)$, given by a linear function of the change point process
$Y\in\{0,1\}$, such that at all times the drift is equal to one of the
distinct values ($\mu_{0}$ before the change point, $\mu_{1}$
afterwards, see \eqref{eq:mufi}).

The second agent has ``partial information''. She does not observe the
change point, and filters $Y$ (and thus the change point) from stock
price observations. The partially informed agent's filtration,
$\widehat{\mathbb{F}}$ (the \emph{observation filtration}), is thus
the augmentation of the stock price filtration, and
$\widehat{\mathbb{F}}\subset\mathbb{F}$. In this partial information
scenario, the filtered change point process $\widehat{Y}$ turns out to
be a diffusion in $[0,1]$ driven by the innovations process
$\widehat{W}$, which is the $\widehat{\mathbb{F}}$-Brownian motion
also driving the stock under the observation filtration. In this case,
the random drift turns out to be $\mu(\widehat{Y})$, featuring the
same linear function as in the full information case, but now of the
filtered process $\widehat{Y}$ (see \eqref{eq:dSobs}). The process
$\widehat{Y}$, adapted to the stock price filtration, turns out to be
a functional of the path-history of the stock price.

For both the full and partial information problems, we carry out a
detailed and rigorous free boundary analysis of the associated value
function for the option. For each problem this involves a classical
program of steps, which we generalise from the (typical) constant
drift case to each of our two random drift scenarios, as follows. The
two-state drift of the full information problem naturally leads to a
pair of value functions (one for each possible initial drift state
$i\in\{0,1\}$) characterising the ESO value. Equally naturally, in the
partial information problem, dependent on the diffusion
$\widehat{Y}\in[0,1]$, the value function depends on a variable
$y\in[0,1]$, representing the initial value of the change point
process (in addition to the usual temporal and stock price
dependence).

We first derive basic convexity, monotonicity and time decay
properties of the value functions (Lemma \ref{lem:cmtdfi} (full
information) and Lemma \ref{lem:cmtdpi} (partial information)), the
latter using some stochastic flow ideas applied to $\widehat{Y}(y)$,
the filtered change point process viewed as a function of its initial
value $y$. From these results we infer the form of the continuation
and stopping regions, the existence and form of optimal exercise
thresholds and (later) their limiting values as we approach the ESO
maturity time.

We show that, for the full information problem, there are a pair of
ordered, non-increasing, time-dependent exercise boundaries
$x^{*}_{0}(\cdot)\geq x^{*}_{1}(\cdot)$, such that optimal early
exercise can occur in the state where the drift is
$\mu_{i},i\in\{0,1\}$, when the stock breaches $x^{*}_{i}(\cdot)$ from
below, or if such a breach is triggered by the change point. On the
other hand, in the partial information case the exercise boundary
$x^{*}(\cdot,\cdot)$ is a \emph{surface}, with an additional spatial,
non-increasing dependence on the variable $y\in[0,1]$, arising from
the dependence of the drift on the filtered change point process, and
such that the partial information exercise surface lies between the
full information exercise thresholds. This can lead to an interesting
range of possible exercise patterns (such as immediate exercise by the
fully informed agent in response to the change point, a strategy
unavailable to the agent who does not see the jump in drift), which we
describe (and later examine numerically). We also consider how our
stopping problems are changed with the inclusion of an option vesting
period. In practice, vesting periods during which the option holder is
not permitted to exercise, are used by the company to maintain the
employee's incentives or exposure to the stock price.

We then give a rigorous characterisation of the ESO value functions in
terms of free boundary PDEs (Proposition \ref{prop:fbpfivf} (full
information) and Proposition \ref{prop:fbppivf} (partial information))
with associated smooth pasting conditions at the exercise thresholds
(Theorem \ref{thm:spfi} (full information) and Theorem \ref{thm:sppi}
(partial information)). Using these results we are able to derive
Doob-Meyer decompositions of the supermartingales which represent the
discounted ESO value processes (Theorem \ref{thm:dmdecomp} (full
information) and Lemma \ref{lem:dmdpise} (partial information)). These
in turn are used in proving the results on the limiting values of the
boundaries as we approach maturity $T$ (Proposition \ref{prop:tvfull}
(full information) and Lemma \ref{lem:tvpartial} (partial
information)). Although not needed elsewhere, we also show that the
boundaries for the full information problem are continuous over
$[0,T)$, as stated in Proposition \ref{prop:tvfull}.

Our mathematical results are obtained by implementing, broadly
speaking, the classical program for obtaining properties of American
options (see for example Karatzas and Shreve \cite[Chapter 2]{ks98}
for the American put in the Black-Scholes model), and carefully
modifying and extending these arguments to our random drift scenarios,
augmenting them in places with new tools, such as the stochastic flow
ideas mentioned above. These results are novel compared to existing
literature, as we now describe.

The full information case has some similarities with papers on
American option valuation with regime switching, such as the infinite
horizon put in Guo and Zhang \cite{gz04} and the finite horizon put in
Buffington and Elliott \cite{be02} (who assume all required regularity
properties of the value function). Le and Wang \cite{lw10} also treat
the American put with regime switching, and do prove the smooth
pasting property, by extending a fairly involved iterative procedure
originally due to Bayraktar \cite{bayraktar09}. As well as being
lengthy, some steps exploit the boundedness of the put payoff
function, so it is not clear if they are directly applicable to our
model. Here, therefore, we exploit our explicit one-switch scenario
and show how more classical techniques can be extended to the random
drift case, both for the free boundary characterisation, and then for
the smooth pasting property. The latter requires an analysis of the
optimal stopping time given a particular starting state, and here we
use our derived structures for the stopping and continuation regions.

In the partial information case, our results are entirely new. The
rigorous characterisation of the value function as a solution of a
free boundary PDE with an associated smooth pasting condition, has not
been demonstrated before to the best of our knowledge. We achieve
this, also show that the exercise surface is decreasing in time and in
the initial value $y\in[0,1]$ of the filtered change point process,
and give its limiting terminal value. An infinite horizon American
put with partial information on a switching dividend process was
studied by Gapeev \cite{gapeev12}, but the regularity of the value
function and the smooth pasting property were assumed to hold. 
We resolve these issues in our partial information
problem. Note that, with our objective of maximising the discounted
expected payoff under the physical measure, our problems map to
conventional American option pricing problems under a martingale
measure, but with a random dividend yield. Thus, our results also give
the required regularity for the problems studied in \cite{gapeev12}.

Finally, there is a strand of papers (D{\'e}camps et
al.~\cite{dmv05,dmv09}, Klein \cite{klein09}, Ekstr{\"o}m and Lu,
\cite{el11} Ekstr{\"o}m and Vannest\aa l \cite{ev19}) which study
optimal stopping problems in a partial information scenario when a
drift parameter is assumed to take on one of two values, but the agent
is unsure which value pertains in reality. These models correspond to
the limit that the parameter of the exponential time in our model
approaches zero, so an explicit change point is absent (they are
models of an uncertain drift, as opposed to uncertainty in the timing
of a change of drift). This renders them simpler than our partial
information model, because the dependence of the filtered process on
the entire history of the stock disappears. These papers are then able
to reduce the dimensionality of the problem under some circumstances,
a simplification not available in our model.

We complete the picture by solving both problems numerically, using
finite difference schemes, and carry out simulations to illustrate
some of the exercise patterns that can occur. The partial information
case is substantially more difficult numerically due to the second
spatial dimension, but with a single Brownian driver, resulting in a
reduced rank diffusion matrix, and the degeneracy of some of the
diffusion and drift coefficients at certain boundaries of the domain.
This setting requires a novel, tailored approximation scheme for the
efficient numerical solution.  We propose a first order monotone and a
second order non-monotone penalised backward diifferentiation formulae
(BDF) scheme on non-uniform meshes and prove convergence for the
former. Numerical tests demonstrate the stability and achievable
accuracy for the scheme.

One of our motivations for studying these issues is a strand of
literature in empirical finance which attributes early ESO exercise
prior to poor stock performance in part to privileged information,
particularly on imminent bad news. Early studies (Huddart and Lang
\cite{huddartlang03}, Carpenter and Remmers \cite{carpenterremmers01})
provide some evidence that this is the case. More recent works that
partition the exercises according to the particular exercise strategy
employed find much stronger evidence of informed exercise (Brooks et
al.~\cite{brooks2012}, Cicero \cite{cicero09}, Aboody et
al.~\cite{aboody08}): exercises accompanied by a sale of stock are
followed by negative abnormal returns (while other exercises are not).
We were thus motivated to construct a model where complete or
incomplete information on an adverse event could be compared in the
exercise of an American call.  Here, we think of the fully informed
agent as a senior executive who observes the change point, while the
partially informed agent is thought of as a less senior employee who
is not privy to board meetings sharing imminent bad news. Our setup
considers a stock price whose drift will jump to, and remain at, a
lower value. We do not consider a model where the drift can switch
repeatedly between two values, as this would not capture a seismic
piece of adverse news, though a rigorous analysis of such a model
would be interesting, and could potentially be built upon our analysis
here.

We use our model to conduct a study of mean post-exercise returns for
agents with full and partial information, motivated by the empirical
work of Brooks et al.~\cite{brooks2012}.  Our simulations (in Section
\ref{sec:numex}) support the conjecture that indeed, the difference
between average post-exercise returns for fully and partially informed
agents is significantly negative.  For our simulations, the difference
between mean post-exercise returns for fully and partially informed
agents varies between about -3.8\% and -9.7\%, depending on the
expected stock return $\mu_0$ and volatility, covering the range of
values reported by Brooks et al.~\cite{brooks2012}. Our model thus
provides theoretical support for the tests conducted in the empirical
literature to evidence so-called insider exercises.

Our analysis leads to our being able to characterise exercise
scenarios, and to point out scenarios where the change point can
induce exercise for the fully informed agent, but of course not
necessarily for the partially informed agent, since the change point
is not seen. We illustrate this in Section \ref{sec:numex} where we
provide simulations of various exercise scenarios and show the agent
with full information has considerable advantage in exercise
timing. An exercise surface $x^*(t,y); t \in [0,T], y \in [0,1] $ for
the agent with partial information, and thresholds
$x^*_0(t), x^*_1(t); t \in [0,T]$ for the full information case are
computed and shown to be consistent with the theoretical results in
earlier sections.

The informational advantage demonstrated in the exercise strategies is
reflected in the respective ESO values the agents place on their
options.  We document that the additional value the agent with full
information places on his ESO is significant in magnitude. The early
exercise value as a proportion of the European value can be very many
times greater for the agent with full rather than only partial
information.  In Table \ref{table_comp}, we also report comparative
statics for the ESO value as we vary stock parameters
$\mu_0, \mu_1, \sigma$, and $\lambda$.  ESO values for both agents
decrease as the magnitude of the expected return in the bad state,
$\mu_1$, increases or there is a greater probability of a downward
jump. However, the early exercise values increase, indicating that the
ability to time the exercise of the option is more valuable when the
expected return following the change point is worse, or when the
chance of entering the bad state is higher.  We also report ESO values
when option vesting is included in the model and note, as expected,
the early exercise value drops for both agents, whilst the
informational advantage of the agent with full information is still
present.

The rest of the paper is organised as follows. In Section
\ref{sec:sdcp} we introduce the model and the optimal stopping
problems under both information scenarios, and carry out a filtering
procedure to derive the model dynamics with respect to the stock price
filtration. In Sections \ref{sec:fiesop} and \ref{sec:piesop} we
analyse the full and partial information problems, respectively.
Section \ref{sec:vesting} gives a brief discussion of how a vesting
period impacts upon exercise.  In Section \ref{sec:nrsim} we construct
and describe numerical methods for solving the two optimal stopping
problems, including convergence results.  We apply the finite
difference methodology in Section \ref{sec:numex} to perform
simulations to compare the exercise patterns of the agents, undertake
an analysis of post-exercise returns, and provide ESO valuation.

\section{Stock price with a drift change point}
\label{sec:sdcp}

We model a stock price whose drift will jump to a lower value at a
random time (a \emph{change point}). The goal is to investigate
differences in the ESO exercise strategy between a fully informed
agent who observes the change point, and a partially informed agent
who has to filter the change point from stock price observations. In
particular, we seek to explore whether the fully informed agent can
exploit his additional information in the exercise strategy.

The setting is a complete probability space
$(\Omega,\mathcal{F},\mathbb{P})$ equipped with a filtration
$\mathbb{F}:=(\mathcal{F}_{t})_{t\in\mathbf{T}}$ satisfying the usual
hypotheses of right-continuity and augmentation by all the
$\mathbb{P}$-null sets of $\mathcal{F}$. The time set $\mathbf{T}$
will be the finite interval $\mathbf{T}=[0,T]$, for some
$T<\infty$. The filtration $\mathbb{F}$ will sometimes be referred to
as the {\em background filtration}. It represents the large filtration
available to a perfectly informed agent, and all processes will be
assumed to be $\mathbb{F}$-adapted in what follows.

Let $W$ denote a standard $(\mathbb{P},\mathbb{F})$-Brownian motion.
Let $\theta\in\mathbb{R}_{+}$ be a non-negative random time,
independent of $W$, with initial distribution
$\mathbb{P}[\theta=0]=:y_{0}\in[0,1)$ and subsequent distribution
\begin{equation*}
\mathbb{P}[\theta> t|\theta>0] = \e^{-\lambda t}, \quad \lambda\geq 0,
\quad t\in\mathbf{T}.   
\end{equation*}
Thus, conditional on the event
$\{\omega\in\Omega:\theta(\omega)>0\}\equiv\{\theta >0\}$, $\theta$
has exponential distribution with parameter $\lambda$. Define the
single-jump c\`adl\`ag process $Y$ by
\begin{equation}
Y_{t} := \mathbbm{1}_{\{t\geq\theta\}}, \quad t\in\mathbf{T},  
\label{eq:signal}
\end{equation}
so that $Y_{0}=\mathbbm{1}_{\{\theta=0\}}$ with
$\mathbb{E}[Y_{0}]=y_{0}$. We may (and do) take $\mathbb{F}$ to be the
$\mathbb{P}$-augmentation of $\mathbb{F}^{W,Y}$, the filtration
generated by the pair $(W,Y)$. By Karatzas and Shreve
\cite[Proposition 2.7.7] {ks91} this filtration is indeed
right-continuous, because $(W,Y)$ is a strong Markov process.

We associate with $Y$ the $(\mathbb{P},\mathbb{F})$-martingale
$M^{(Y)}$ (the compensated jump process), defined by
\begin{equation}
M^{(Y)}_{t} := Y_{t} - Y_{0} - \lambda\int_{0}^{t}(1-Y_{s})\ud s, \quad
t\in\mathbf{T}.  
\label{eq:M}
\end{equation}

A stock price process $X$ with constant volatility $\sigma>0$ has a
drift which depends on the process $Y$. We are given two real
constants $\mu_{0}>\mu_{1}$ such that the drift value falls from
$\mu_{0}$ to the lower value $\mu_{1}$ at the change point. Define the
constant $\eta>0$ by
\begin{equation}
\eta := \frac{\mu_{0}-\mu_{1}}{\sigma}.
\label{eq:eta}
\end{equation}
The stock price dynamics with respect to $(\mathbb{P},\mathbb{F})$ are
given by
\begin{equation}
\ud X_{t} = (\mu_{0} -\sigma\eta Y_{t})X_{t}\ud t + \sigma X_{t}\ud W_{t}. 
\label{eq:dS}
\end{equation}
Thus, the drift process $\mu(Y)$ of the stock is given by
\begin{equation}
\mu(Y_{t}) := \mu_{0}-\sigma\eta Y_{t} = \mu_{0}(1-Y_{t}) +
\mu_{1}Y_{t} = \left\{\begin{array}{ccc}
\mu_{0}, & \mbox{on} & \{t<\theta\}=\{Y_{t}=0\}, \\
\mu_{1}, & \mbox{on} & \{t\geq\theta\}=\{Y_{t}=1\},
\end{array}\right. \quad t\in\mathbf{T}.
\label{eq:mufi}                    
\end{equation}
Note in particular that for $y_{0}=0$ the change point $\theta$ is
almost surely strictly positive, and the stock evolution almost surely
begins with the higher drift value $\mu_{0}$.

We assume that the values of the constants
$y_{0},\mu_{0},\mu_{1},\sigma,\lambda$ are given. Finally, there is
also a cash account paying a constant interest rate $r\geq 0$.
Dividends could also be included, and there are several possibilities
as to how these could be modelled, but we do not do so for
simplicity. For example, a constant dividend yield could be included
with minor adjustments by re-interpreting the drifts as being net of
dividends.

We may write the stock price evolution as
\begin{equation}
\ud X_{t} = \sigma X_{t}\ud \xi_{t}, 
\label{eq:dX}
\end{equation}
where $\xi$ is the volatility-scaled return process given by
\begin{equation}
\xi_{t} :=
\frac{1}{\sigma}\int_{0}^{t}\frac{\ud X_{s}}{X_{s}} =
\left(\frac{\mu_{0}}{\sigma}\right) t - \eta\int_{0}^{t}Y_{s}\ud s +
W_{t} =: \int_{0}^{t}h_{s}\ud s + W_{t} , \quad t\in\mathbf T,
\label{eq:observation}
\end{equation}
with the process $h$ defined by
\begin{equation}
h_{t} := \frac{\mu_{0}}{\sigma} - \eta Y_{t}, \quad t\in\mathbf T,
\label{eq:h}
\end{equation}
so $h$ and $W$ are independent. The process $\xi $ will be used as an
observation process in a filtering algorithm in Section
\ref{subsec:duof}.

Define the {\em observation filtration}
$\widehat{\mathbb{F}}=(\widehat{\mathcal{F}}_{t})_{t\in\mathbf{T}}$ as
the $\mathbb{P}$-augmentation of the filtration generated by the stock
price (equivalently by the process $\xi$ in \eqref{eq:observation}):
\begin{equation*}
\widehat{\mathcal{F}}_{t} := \sigma(\mathcal{F}^{X}_{t}\cup
\mathcal{N}),
\quad t\in\mathbf{T},
\end{equation*}
where $\mathcal{F}^{X}_{t}:=\sigma(X_{s}:0\leq s\leq t)$, and
$\mathcal{N}$ denotes the $\mathbb{P}$-null sets of $\mathcal{F}$. We
have $\widehat{\mathbb{F}}\subset\mathbb{F}$ and, moreover, it turns
out that the filtration $\widehat{\mathbb{F}}$ is
right-continuous,\footnote{This is a consequence of the
  strong Markov property of the pair $(X,\widehat{Y})$, where
  $\widehat{Y}$ is the filtered estimate of $Y$ given
  $\widehat{\mathbb{F}}$.} as we shall justify in Remark
\ref{rem:rc}.

An executive stock option (ESO) on $X$ is an American call option with
strike $K\geq 0$ and maturity $T$, so has payoff $(X_{t}-K)^{+}$ if
exercised at $t\in\mathbf{T}$. We assume the ESO holder receives the
cash payoff on exercise. We consider two agents in this scenario,
each of whom is awarded at time zero an ESO on $X$, and who have
access to different filtrations, but are identical in other
respects. In practice, employees holding such ESOs are prohibited from
trading the company stock $X$ (see Carpenter \cite{carpenter98} and
Section 16c of the Securities and Exchange Act), and this motivates
our assumption that neither agent trades the stock.

The first agent has {\em full information}. He knows the values of all
the model parameters and has full access to the background filtration
$\mathbb{F}$, so in particular can observe the Brownian motion $W$ and
the one-jump process $Y$. The second agent has {\em partial
  information}. She also knows the values of the constant model
parameters, and observes the stock price $X$, but not the one-jump
process $Y$. The partially informed agent's filtration is therefore
the observation filtration $\widehat{\mathbb{F}}$. The only difference
between the agents is that the partially informed agent does not know
the value of the process $Y$, which she will filter from stock price
observations.

We have assumed that the stock volatility is constant, and in
particular does not depend on the single-jump process $Y$. If we
allowed the volatility process to depend on $Y$, then with continuous
stock price observations the partially informed agent could infer the
value of $Y$ from the rate of increase of the quadratic variation of
the stock. This would remove the distinction between the agents and
thus nullify our intention of building a model where the agents have
distinctly different information on the performance of the stock. In
principle, the constant volatility assumption could be relaxed to
allow the volatility to depend on $Y$, but only at the expense of
requiring a necessarily more complicated model of differential
information between the agents. For instance, the partially informed
agent could be rendered ignorant of the values $\mu_{0},\mu_{1}$, so
these could be modelled (for example) as random variables whose values
would be filtered from price observations. This would have significant
ramifications for the tractability of the ESO optimal stopping
problems, and our constant volatility model is the simplest one can
envisage with differential information on a change point.

\subsection{The ESO optimal stopping problems}
\label{subsec:osp}

We assume that each agent will maximise, over stopping times of their
respective filtration, the discounted expectation of the ESO payoff
under the physical measure $\mathbb{P}$. Given the absence of trading
opportunities, the ESO payoff constitutes a completely unhedgeable
claim, so the agents each face a pure exercise decision. In this case,
for simplicity, we take the most straightforward objective
possible. This objective was used in Monoyios and Ng \cite{mman11},
where ESO valuation with inside information was considered. It also
appears in works which consider American options in the absence of
classical hedging opportunities, sometimes called a pure buyer's
position: an agent holds a long position in an American option but,
for reasons of (say) liquidity or transaction costs, does not hedge
this position (see Ekstr\"{o}m and Vannest\aa l \cite{ev19} for
example). If we were to allow the agents to trade other securities,
one could envisage adding risk aversion by considering utility-based
valuation and hedging, yielding combined optimal stopping and control
problems. Such ESO problems have been considered for constant drift
models by Leung and Sircar \cite{ls091,ls092} and Grasselli and
Henderson \cite{gh09} using classical utility, and by Leung, Sircar
and Zariphopoulou \cite{lsz12} using forward utility. These works take
the required regularity of value functions as given. Utility-based
valuation of European claims on non-traded assets in a random
parameter framework has been considered by Monoyios \cite{mmamf10},
where both traded and non-traded assets are geometric Brownian motions
with unobserved constant drifts modelled as Gaussian random
variables. Filtering then leads to a random parameter basis risk model
that is significantly less tractable than its constant parameter
counterpart. As both our information models have random parameters,
their rigorous treatment via a risk-averse utility-based methodology,
including verification of regularity where needed, is an open problem
left for future research. Our contribution here is thus to use our
risk-neutral objective, in a random parameter framework, to give a
fully rigorous free boundary PDE treatment of both the full and
partial information ESO problems. The absence of risk aversion in our
model gives us the tractability we need for our analysis, and arguably
focuses on the informational, as opposed to risk aversion, aspects of
the agents' exercise and valuation decisions.

For $t\in[0,T]$, let $\mathcal{T}_{t,T}$ denote the set of
$\mathbb F$-stopping times with values in $[t,T]$, and let
$\widehat{\mathcal{T}}_{t,T}$ denote the corresponding set of
$\widehat{\mathbb F}$-stopping times. For any such starting time
$t\in[0,T]$, the fully informed agent's ESO value process is $V$, an
$\mathbb{F}$-adapted process defined by
\begin{equation}
V_{t} := \esssup_{\tau\in\mathcal{T}_{t,T}}\mathbb{E}\left[
\left.\e^{-r(\tau-t)}(X_{\tau}-K)^{+}\right\vert\mathcal{F}_{t}\right],
\quad t\in[0,T]. 
\label{eq:fiproblem}
\end{equation}
We shall call (\ref{eq:fiproblem}) the full information problem.

Similarly, the partially informed agent's ESO value process is $U$, an
$\widehat{\mathbb{F}}$-adapted process defined by
\begin{equation} 
U_{t} := \esssup_{\tau\in\widehat{\mathcal{T}}_{t,T}}\mathbb{E}\left[
\left.\e^{-r(\tau-t)}(X_{\tau}-K)^{+}\right\vert
\widehat{\mathcal{F}}_{t}\right], \quad t\in[0,T].
\label{eq:piproblem}
\end{equation}
We shall call (\ref{eq:piproblem}) the partial information problem.

Naturally, the salient distinction between (\ref{eq:fiproblem}) and
(\ref{eq:piproblem}) is the filtration with respect to which the
stopping time and essential supremum are defined. For the full
information problem (\ref{eq:fiproblem}) the stock dynamics will be
(\ref{eq:dS}). For the partial information problem
(\ref{eq:piproblem}) we must derive the model dynamics under the
observation filtration. This is done in Section \ref{subsec:duof}
below.

Recipients of company ESOs are often contractually restricted from
exercising their options during a vesting period, $[0,t_v)$ so that
stopping times may lie in the interval $[t_v, T]$, see for example,
Carpenter et al.~\cite{csw2015}.  Later, in Section~\ref{sec:vesting},
we outline how the problems may be modified to incorporate vesting,
and in Section \ref{subsec:val} we demonstrate the impact of vesting
on ESO values.

\begin{remark}[Formal equivalence to random-dividend no-arbitrage
  valuation]
\label{rem:maptoclassical}
  
The optimal stopping problems \eqref{eq:fiproblem} and
\eqref{eq:piproblem}, formulated under the physical measure
$\mathbb{P}$ with some random stock drift $\mu(\cdot)$, of course map
formally to problems written under a martingale measure $\mathbb{Q}$
where the stock drift will be $r-\delta(\cdot)$, for some random
dividend yield $\delta(\cdot)$, related to $\mu(\cdot)$ by
$\mu(\cdot)=r-\delta(\cdot)$. The results we obtain are thus
applicable to classical no-arbitrage valuation with a random dividend
yield.

\end{remark}

The scenario we have set up, with a drift value for a log-Brownian
motion which switches at a random time to a new value, has obvious
similarities with the so-called ``quickest detection of a Wiener
process'' problem, which has a long history and is discussed in
Chapter VI of Peskir and Shiryaev \cite{ps06} (see Gapeev and Shiryaev
\cite{gs13} for a recent example involving diffusion processes). The
difference between these problems and ours is that our objective
functional will be the expected discounted payoff of an ESO, so errors
in detecting the change point are transmitted through the prism of the
ESO exercise decision. In contrast, the classical change point
detection problem has some explicit objective functional which
directly penalises a detection delay or a false alarm (where the
change point is incorrectly deduced to have occurred).

\subsection{Dynamics under the observation filtration}
\label{subsec:duof}

Let the signal process be $Y$ in (\ref{eq:signal}), and take the
observation process to be $\xi$ in (\ref{eq:observation}), with the
augmented filtration generated by $\xi$ equivalent to the augmented
stock price filtration $\widehat{\mathbb{F}}$.

Introduce the notation
$\widehat{\phi}_{t}:=\mathbb{E}[\phi_{t}|\widehat{\mathcal{F}}_{t}]$,
$t\in\mathbf{T}$, for any process $\phi$. In particular, we are
interested in the filtered estimate of $Y$, defined by
\begin{equation*}
\widehat{Y}_{t} := \mathbb{E}[Y_{t}|\widehat{\mathcal{F}}_{t}], \quad
t\in\mathbf{T}.  
\end{equation*}
A standard filtering procedure gives the stock price dynamics with
respect to the observation filtration $\widehat{\mathbb{F}}$, along
with the dynamics of $\widehat{Y}$, resulting in the following
lemma. We give a short proof for completeness.

\begin{lemma}[Observation filtration dynamics]
\label{lem:ofd}

With respect to the observation filtration $\widehat{\mathbb{F}}$ the
stock price follows
\begin{equation}
\ud X_{t} = (\mu_{0} - \sigma\eta\widehat{Y}_{t})X_{t}\ud t + \sigma
X_{t}\ud\widehat{W}_{t}, 
\label{eq:dSobs}  
\end{equation}
where $\widehat{W}$ is the innovations process, given by
\begin{equation}
\widehat{W}_{t} := \xi_{t} - \int_{0}^{t}\widehat{h}_{s}\ud s =
\xi_{t} -
\frac{\mu_{0}}{\sigma}t +
\eta\int_{0}^{t}\widehat{Y}_{s}\ud s, \quad t\in\mathbf{T},   
\label{eq:innovation}
\end{equation}
where analogously to (\ref{eq:h}),
$\widehat{h}_{t}:=\frac{\mu_{0}}{\sigma}-\eta\widehat{Y}_{t}$,
$t\in\mathbf{T}$, and $\widehat{W}$ is a
$(\mathbb{P},\mathbb{\widehat{F}})$-Brownian motion.

The filtered process $\widehat{Y}$ has dynamics given by
\begin{equation}
\ud\widehat{Y}_{t} = \lambda(1-\widehat{Y}_{t})\ud t -
\eta\widehat{Y}_{t}(1-\widehat{Y}_{t})\ud\widehat{W}_{t}, \quad
\widehat{Y}_{0} = \mathbb{E}[Y_{0}] = y_{0} \in[0,1). 
\label{eq:whY}
\end{equation}

\end{lemma}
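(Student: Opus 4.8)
The plan is to apply the standard nonlinear filtering machinery (the Fujisaki--Kallianpur--Kunita approach) to the signal--observation pair $(Y,\xi)$. First I would record the observation dynamics $\ud\xi_{t}=h_{t}\ud t+\ud W_{t}$ from \eqref{eq:observation} and note that, since $\widehat{W}$ in \eqref{eq:innovation} is defined by subtracting the predictable part $\int_{0}^{t}\widehat{h}_{s}\ud s$ from $\xi_{t}$, establishing that $\widehat{W}$ is a $(\mathbb P,\widehat{\mathbb F})$-Brownian motion is essentially the innovations theorem. I would verify this via L\'evy's characterisation: $\widehat{W}$ is clearly continuous and $\widehat{\mathbb F}$-adapted with $\langle\widehat{W}\rangle_{t}=\langle\xi\rangle_{t}=t$ (as $\widehat{W}$ and $\xi$ differ by a finite-variation term), so it remains to show $\widehat{W}$ is a $\widehat{\mathbb F}$-martingale. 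This follows from the fact that for $s<t$, $\mathbb E[\xi_{t}-\xi_{s}\mid\widehat{\mathcal F}_{s}]=\mathbb E[\int_{s}^{t}h_{u}\ud u\mid\widehat{\mathcal F}_{s}]$, and the tower property together with $\mathbb E[h_{u}\mid\widehat{\mathcal F}_{s}]=\mathbb E[\widehat h_{u}\mid\widehat{\mathcal F}_{s}]$ (valid because $\widehat{\mathcal F}_{s}\subset\widehat{\mathcal F}_{u}$) reproduces exactly the compensator $\int_{s}^{t}\widehat h_{u}\ud u$ subtracted in the definition of $\widehat{W}$. Rewriting \eqref{eq:dS} by adding and subtracting $\sigma\eta\widehat{Y}_{t}X_{t}\ud t$ and using $\sigma\ud\widehat{W}_{t}=\sigma\ud W_{t}+\sigma\eta(Y_{t}-\widehat{Y}_{t})\ud t$ then yields \eqref{eq:dSobs} directly.

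For the filter dynamics \eqref{eq:whY}, I would apply the general Kushner--Stratonovich / FKK filtering equation to the signal $Y$, whose semimartingale decomposition is given in \eqref{eq:M}, namely $\ud Y_{t}=\lambda(1-Y_{t})\ud t+\ud M_{t}$ with $M$ a $(\mathbb P,\mathbb F)$-martingale. The FKK equation states that for a signal with drift $a_{t}=\lambda(1-Y_{t})$ and observation $\ud\xi_{t}=h_{t}\ud t+\ud W_{t}$, the filter satisfies
\begin{equation*}
\ud\widehat{Y}_{t}=\widehat{a}_{t}\ud t+\left(\widehat{Y_{t}h_{t}}-\widehat{Y}_{t}\widehat{h}_{t}+\widehat{D}_{t}\right)\ud\widehat{W}_{t},
\end{equation*}
where $\widehat{D}_{t}$ is the conditional expectation of the joint quadratic covariation rate between the martingale part of $Y$ and $W$. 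Since $M$ is a pure-jump martingale driven by $\theta$ while $W$ is the independent driving Brownian motion, this covariation term vanishes, so the correction reduces to $\widehat{Y_{t}h_{t}}-\widehat{Y}_{t}\widehat{h}_{t}$. The drift term gives $\widehat{a}_{t}=\lambda(1-\widehat{Y}_{t})$ by linearity of conditional expectation. For the diffusion coefficient, I would use $h_{t}=\frac{\mu_{0}}{\sigma}-\eta Y_{t}$ so that $\widehat{Y_{t}h_{t}}-\widehat{Y}_{t}\widehat{h}_{t}=-\eta(\widehat{Y_{t}^{2}}-\widehat{Y}_{t}^{2})$, and here the crucial simplification is that $Y_{t}\in\{0,1\}$ is an indicator, whence $Y_{t}^{2}=Y_{t}$ and $\widehat{Y_{t}^{2}}=\widehat{Y}_{t}$. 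This collapses the variance term to $\widehat{Y}_{t}-\widehat{Y}_{t}^{2}=\widehat{Y}_{t}(1-\widehat{Y}_{t})$, producing the coefficient $-\eta\widehat{Y}_{t}(1-\widehat{Y}_{t})$ of \eqref{eq:whY}. The initial condition $\widehat{Y}_{0}=\mathbb E[Y_{0}]=y_{0}$ is immediate from $\widehat{\mathcal F}_{0}$ being trivial.

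The main obstacle, and the step deserving the most care, is the rigorous justification of the filtering equation itself rather than the algebra, which is elementary once the indicator property $Y^{2}=Y$ is exploited. Specifically, I would need to confirm the integrability and regularity hypotheses under which the FKK representation holds (for instance, $\mathbb E\int_{0}^{T}h_{t}^{2}\ud t<\infty$, which is clear since $h$ is bounded), and to justify that the innovations process generates a filtration rich enough for the martingale representation underlying the derivation of the diffusion coefficient. Given the boundedness of $Y$ and $h$, these conditions are readily met, so I would cite the standard references (for example Chapter VI of Peskir and Shiryaev \cite{ps06} or the classical filtering literature) and then present the short verification above, consistent with the paper's stated intention to give "a short proof for completeness."
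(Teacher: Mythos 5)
Your proposal is correct and follows essentially the same route as the paper: the innovations theorem (which you verify via L\'evy's characterisation rather than simply citing Theorem VI.8.4 of Rogers and Williams, as the paper does) combined with the Kushner--Stratonovich equation applied to $f(y)=y$, with the signal--noise covariation term vanishing and the indicator identity $Y^{2}=Y$ collapsing the correction term to $-\eta\widehat{Y}_{t}(1-\widehat{Y}_{t})$, exactly as in the paper's proof. One small sign slip worth fixing: since $\widehat{W}_{t}=W_{t}+\int_{0}^{t}(h_{s}-\widehat{h}_{s})\ud s$ and $h_{s}-\widehat{h}_{s}=\eta(\widehat{Y}_{s}-Y_{s})$, the relation should read $\sigma\ud\widehat{W}_{t}=\sigma\ud W_{t}+\sigma\eta(\widehat{Y}_{t}-Y_{t})\ud t$, not with $(Y_{t}-\widehat{Y}_{t})$; with the correct sign your add-and-subtract substitution into \eqref{eq:dS} does yield \eqref{eq:dSobs}.
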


\begin{proof}

We use the innovations approach to filtering, as discussed in Rogers
and Williams \cite{rw00}, Chapter VI.8 or Bain and Crisan
\cite{bc09}, Chapter 3, for instance.

By Theorem VI.8.4 in \cite{rw00}, the innovations process
$\widehat{W}$, defined by (\ref{eq:innovation}), is a
$(\mathbb{P},\mathbb{\widehat{F}})$-Brownian motion. Using
(\ref{eq:innovation}) in the stock price SDE (\ref{eq:dX}) then yields
(\ref{eq:dSobs}).

It remains to prove (\ref{eq:whY}). For any bounded, measurable test
function $f$, write $f_{t}\equiv f(Y_{t})$, $t\in\mathbf{T}$, for
brevity. Define a process $(\mathcal{G}f_{t})_{t\in\mathbf{T}}$,
satisfying
$\mathbb{E}\left[\int_{0}^{t}|\mathcal{G}f_{s}|^{2}\ud
s\right]<\infty$ for all $t\in\mathbf{T}$, such that
\begin{equation*}
M^{(f)}_{t} := f_{t} - f_{0} - \int_{0}^{t}\mathcal{G}f_{s}\ud
s, \quad t\in\mathbf{T},  
\end{equation*}
is a $(\mathbb{P},\mathbb{F})$-martingale. With $h,W$ independent, we
have the (Kushner-Stratonovich) fundamental filtering equation (see
Theorem 3.30 in \cite{bc09}, for example)
\begin{equation}
\widehat{f}_{t} = \widehat{f}_{0} +
\int_{0}^{t}\widehat{\mathcal{G}f}_{s}\ud s +
\int_{0}^{t}\left(\widehat{f_{s}h_{s}} -
\widehat{f}_{s}\widehat{h}_{s}\right)\ud\widehat{W}_{s}, \quad
t\in\mathbf{T}.
\label{eq:ffeqn} 
\end{equation}
Take $f(y)=y$. Then the martingale $M^{(f)}=M^{(Y)}$, as defined in
(\ref{eq:M}), so that $\mathcal{G}f=\lambda(1-Y)$ and the filtering
equation (\ref{eq:ffeqn}) reads as
\begin{equation}
\widehat{Y}_{t} = y_{0} + \lambda\int_{0}^{t}(1-\widehat{Y}_{s})\ud s +
\int_{0}^{t}(\widehat{Y_{s}h_{s}} -
\widehat{Y}_{s}\widehat{h}_{s})\ud\widehat{W}_{s}, \quad t\in\mathbf{T},  
\label{eq:feqn}
\end{equation}
where we have used $\widehat{Y}_{0}=\mathbb{E}[Y_{0}]=y_{0}$.

Now, 
\begin{equation}
\widehat{Y_{t}h_{t}} =
\mathbb{E}\left[\left.Y_{t}\left(\frac{\mu_{0}}{\sigma} -\eta 
Y_{t}\right)\right\vert\widehat{\mathcal{F}}_{t}\right] =
\left(\frac{\mu_{0}}{\sigma}\right)\widehat{Y}_{t} - 
\eta\mathbb{E}[Y^{2}_{t}|\widehat{\mathcal{F}}_{t}] =
\left(\frac{\mu_{0}}{\sigma} - \eta\right)\widehat{Y}_{t}, \quad
t\in\mathbf{T},   
\label{eq:whYh}
\end{equation}
the last equality a consequence of $Y^{2}=Y$. 

On the other hand,
\begin{equation}
\widehat{Y}_{t}\widehat{h}_{t} =
\widehat{Y}_{t}\mathbb{E}\left[\left.\frac{\mu_{0}}{\sigma} -\eta 
Y_{t}\right\vert\widehat{\mathcal{F}}_{t}\right] =
\left(\frac{\mu_{0}}{\sigma}\right)\widehat{Y}_{t} -
\eta\left(\widehat{Y}_{t}\right)^{2}, \quad t\in\mathbf{T}.
\label{eq:whYwhh}
\end{equation}
Using (\ref{eq:whYh}) and (\ref{eq:whYwhh}) in (\ref{eq:feqn}) then
yields the integral form of (\ref{eq:whY}).

\end{proof}

\begin{remark}[Right-continuity of observation filtration]
\label{rem:rc}

Note that $\widehat{Y}$ in (\ref{eq:whY}) is an
$\widehat{\mathbb{F}}$-adapted diffusion in $[0,1]$ with an absorbing
state at $\widehat{Y}=1$. Note also that, since observations of the
stock price are sufficient to specify $\widehat{Y}$, the observation
filtration is also the $\mathbb{P}$-augmentation of the filtration
generated by the two-dimensional diffusion $(X,\widehat{Y})$. Then,
Karatzas and Shreve \cite[Proposition 2.7.7]{ks91} guarantees that
$\widehat{\mathbb{F}}$ is right-continuous, as it is the augmented
filtration generated by the Strong Markov Process $(X,\widehat{Y})$.
  
\end{remark}

\section{The full information ESO problem}
\label{sec:fiesop}

In this section we focus on the full information problem defined in
(\ref{eq:fiproblem}). Define the (continuous) \emph{reward process}
$R$ as the discounted payoff process:
\begin{equation}
R_{t} := \e^{-rt}(X_{t}-K)^{+}, \quad t\in\mathbf{T}.
\label{eq:reward}
\end{equation}
The reward process is assumed to satisfy 
\begin{equation}
\mathbb{E}\left[\sup_{t\in[0,T]}R_{t}\right] < \infty.  
\label{eq:wellposed}
\end{equation}
The discounted full information ESO value process is $\widetilde{V}$,
given by
\begin{equation}
\widetilde{V}_{t} := \e^{-rt}V_{t} =
\esssup_{t\in\mathcal{T}_{t,T}}\mathbb{E}[R_{\tau}|\mathcal{F}_{t}],
\quad t\in\mathbf{T}. 
\label{eq:fisnell}
\end{equation}

Classical optimal stopping theory for continuous time processes, as
described in Karatzas and Shreve \cite[Appendix D]{ks98},
characterises the solution to the problem (\ref{eq:fisnell}) as
follows. First, by \cite[Proposition D.2]{ks98}, $\widetilde{V}$ is a
$(\mathbb{P},\mathbb{F})$-super-martingale. Further, by
\cite[Proposition D.3 and Corollary D.4]{ks98}, there exists a
c\`adl\`ag modification $\widetilde{V}^{0}$ of $\widetilde{V}$, called
the \emph{Snell envelope} of $R$, that by \cite[Theorem D.7]{ks98}
satisfies $\widetilde{V}^{0}_{t}=\widetilde{V}_{t}$ almost surely, for
all $t\in[0,T]$, and is the smallest c\`adl\`ag
$(\mathbb{P},\mathbb{F})$-super-martingale that dominates (in the
sense of \cite[Definition D.5]{ks98}, so
$\mathbb{P}[\widetilde{V}^{0}_{t}\geq R_{t},\forall\, 0\leq t\leq
T]=1$) the reward $R$. Then, by \cite[Theorem D.9]{ks98}, a stopping
time $\tau^{*}\in\mathcal{T}$ is optimal for the problem
(\ref{eq:fisnell}) starting at time zero if and only if
$\widetilde{V}^{0}_{\tau^{*}}=R_{\tau^{*}}$ almost surely, and if and
only if the stopped super-martingale
$(\widetilde{V}^{0}_{\tau^{*}\wedge t})_{t\in[0,T]}$, is a
$(\mathbb{P},\mathbb{F})$-martingale. Finally, under
\eqref{eq:wellposed} and with a continuous reward process,
\cite[Theorem D.12]{ks98} gives that the smallest optimal stopping
time in $\mathcal{T}_{t,T}$ for the problem (\ref{eq:fisnell}) is
$\tau^{*}(t)$, the first time that the Snell envelope coincides with
the reward, so is given by
\begin{equation}
\tau^{*}(t) :=
\inf\{\tau\in[t,T): \widetilde{V}^{0}_{\tau} = R_{\tau}\}\wedge T,
\quad t\in[0,T].
\label{eq:fiost}
\end{equation}
Given this characterisation of the full information ESO problem via
the Snell envelope, from now on we identify the discounted ESO value
process with the Snell envelope, and adopt the standard notational
convention of not distinguishing between them, so
$\widetilde{V}\equiv\widetilde{V}^{0}$. The ESO value process is then
given by $V_{t}=\e^{rt}\widetilde{V}_{t},t\in[0,T]$, with the
understanding that $\widetilde{V}$ is the Snell envelope of the
reward. With this standard convention, the optimal stopping time in
\eqref{eq:fiost} is given by the first time the ESO value process hits
the payoff:
\begin{equation*}
\tau^{*}(t) =
\inf\{\tau\in[t,T): V_{\tau} = (X_{\tau}-K)^{+}\}\wedge T,
\quad t\in[0,T].
\end{equation*}

\subsection{Full information value function}
\label{subsec:fivf}

Introduce the value function
$v:[0,T]\times\mathbb R_{+}\times\{0,1\}\to\mathbb R_{+}$ for the full
information optimal stopping problem (\ref{eq:fiproblem}) as
\begin{equation}
v(t,x,i) := \sup_{\tau\in\mathcal{T}_{t,T}}\mathbb
E\left[\left.{\rm e}^{-r(\tau-t)}(X_{\tau}-K)^{+}\right\vert 
X_{t}=x,Y_{t}=i\right], \quad i=0,1, \quad t\in[0,T],
\label{eq:fivf}
\end{equation}
and write $v_{i}(\cdot,\cdot)\equiv v(\cdot,\cdot,i)$, $i=0,1$. Thus,
the value function in the full information scenario is a pair of
functions of time and current stock price, such that $v_{0}(t,x)$
(respectively, $v_{1}(t,x)$) represents the value of the ESO to the
insider at time $t\in[0,T]$ given $X_{t}=x$ and $Y_{t}=0$
(respectively, $Y_{t}=1$). In other words, the value process $V$ in
(\ref{eq:fiproblem}) has the representation
\begin{equation}
V_{t} = v(t,X_{t},Y_{t}) = (1-Y_{t})v_{0}(t,X_{t}) +
Y_{t}v_{1}(t,X_{t}), \quad t\in[0,T].  
\label{eq:fivp}
\end{equation}
Very general results on optimal stopping in a continuous-time Markov
setting (see for instance El Karoui, Lepeltier and Millet
\cite{elklm92}) imply that each $v_{i}(\cdot,\cdot)$, $i=0,1$, is a
continuous function of time and current stock price, and the process
$({\rm e}^{-rt}v(t,X_{t},Y_{t}))_{t\in[0,T]}$ is the Snell envelope of
the reward process $R$.

In what follows, we first establish, in Lemma \ref{lem:cmtdfi}, some
elementary properties of the full information value function, so as to
then characterise the nature of the continuation and stopping regions
in Corollary \ref{corr:fiexbds}. As we shall see, the two-drift model
leads to two ordered exercise thresholds
$x^{*}_{i}:[0.T]\to[K,\infty),\,i=0,1$, and we shall establish that
these thresholds are right-continuous on $[0,T)$. Later, using the
free boundary system (Proposition \ref{prop:fbpfivf}) and smooth
pasting property (Theorem \ref{thm:spfi}) satisfied by the value
function, as well as the Doob-Meyer decomposition of the
super-martingale characterising the discounted ESO value process
(Theorem \ref{thm:dmdecomp}) we shall obtain the limiting values
$x^{*}_{i}(T-)$ of the exercise boundaries, given in Proposition
\ref{prop:tvfull}, where we also show that the exercise boundaries are
continuous on $[0,T)$.

With respect to $\mathbb{F}$, the dynamics of the stock are given in
(\ref{eq:dS}). For $0\leq s\leq t\leq T$, define the accumulation
factor
\begin{equation}
H_{s,t} :=  \exp\left\{\left(\mu_{0} - \frac{1}{2}\sigma^{2}\right)(t
- s) - \sigma\eta\int_{s}^{t}Y_{u}\ud u + \sigma(W_{t} -
W_{s})\right\}, \quad 0\leq s\leq t\leq T.   
\label{eq:Hst}
\end{equation}
Then, given $X_{s}=x\in\mathbb R_{+}$, the stock price at $t\in[s,T]$
is $X_{t}\equiv X^{s,x}_{t}$, given by
\begin{equation*}
X_{t}\equiv X^{s,x}_{t} = xH_{s,t}, \quad 0\leq s\leq t\leq T.
\end{equation*}
When $s=0$, write $H_{t}\equiv H_{0,t}$ and $X^{x}_{t}\equiv
X^{0,x}_{t}$, so that 
\begin{equation*}
X^{x}_{t} = xH_{t}, \quad t\in[0,T].  
\end{equation*}
For use further below, also define the accumulation factor when the
stock is exclusively in state $i\in\{0,1\}$, by
\begin{equation}
H^{(i)}_{s,t} :=  \exp\left\{\left(\mu_{i} -
\frac{1}{2}\sigma^{2}\right)(t - s) + \sigma(W_{t} - W_{s})\right\},
\quad 0\leq s\leq t\leq T, \quad i=0,1.
\label{eq:Hi}
\end{equation}
and as before, for $s=0$ write $H^{(i)}_{t}\equiv
H^{(i)}_{0,t},\,i=0,1$ for $t\in[0,T]$.

Note, in particular, that if the stock starts at time zero at
$X_{0}=x$, and the change point occurs in $[0,T]$, then the stock
price at $t\in[\theta,T]$ (so at or beyond the change point), is
$X_{t}\equiv X^{x}_{t}$ given by
\begin{equation}
X_{t} = x\exp(\sigma\eta\theta)H^{(1)}_{t}, \quad 0\leq \theta \leq
t\leq T.
\label{eq:Xacp}
\end{equation}

With these definitions in place, the value function in (\ref{eq:fivf})
is expressed in the form
\begin{equation}
v_{i}(t,x) = \sup_{\tau\in\mathcal{T}_{t,T}}\mathbb{E}\left[
\left.\e^{-r(\tau-t)}(xH_{t,\tau} - K)^{+}\right\vert Y_{t}=i\right],
\quad (t,x)\in[0,T]\times\mathbb R_{+}, \quad i=0,1,      
\label{eq:fivfrep} 
\end{equation}
where $H_{t,\tau}$ is the process in \eqref{eq:Hst} over the interval
$[t,\tau]$:
\begin{equation}
H_{t,\tau} :=  \exp\left\{\left(\mu_{0} -
\frac{1}{2}\sigma^{2}\right)(\tau - t) -
\sigma\eta\int_{t}^{\tau}Y_{u}\ud u + \sigma(W_{\tau} - 
W_{t})\right\}, \quad \tau\in[t,T].
\label{eq:Httau}
\end{equation}
Now, the Brownian increment $W_{\tau}-W_{t}$ in the interval
$[t,\tau]$ is identical in Law to
$W_{\tau-t}-W_{0}=W_{\tau-t}$. Further, the integral over $Y$ in
\eqref{eq:Httau} may be re-written according to
$\int_{t}^{\tau}Y_{u}\ud u=\int_{0}^{\tau-t}Y_{t+s}\ud s$, and the
absence of memory property of the exponential distribution
($\mathbb{P}[\theta>t+s|\theta>t]=\mathbb{P}[\theta>s]$ for any
$s,t\geq 0$) means that
$\Law(Y_{t+s}|Y_{t}=i)=\Law(Y_{s}|Y_{0}=i)$. Therefore, in
\eqref{eq:fivfrep}, the integral of $Y$ over $[t,\tau]$ with
conditioning on the value of $Y_{t}$ may be replaced by one over
$[0,\tau-t]$ with conditioning on the value of $Y_{0}$.  In other
words, stationarity of Brownian increments and the memoryless property
of the exponential distribution imply that optimising over
$\mathcal{T}_{t,T}$ is equivalent to optimising over
$\mathcal{T}_{0,T-t}$, so the value function in \eqref{eq:fivfrep} may
be re-cast into the form
\begin{equation}
v_{i}(t,x) = \sup_{\tau\in\mathcal{T}_{0,T-t}}\mathbb
E\left[\left.\e^{-r\tau}(xH_{\tau} - K)^{+}\right\vert Y_{0}=i\right],
\quad (t,x)\in[0,T]\times\mathbb R_{+}, \quad i=0,1.      
\label{eq:fivfrep2}
\end{equation}
Thus, the ESO value with maturity $T$ and starting time $t\in[0,T]$ is
the same as the ESO value with maturity $T-t$ and initial time zero.
This re-casting of the ESO value will be helpful below in
demonstrating some properties of the value function, and is frequently
utilised in American option valuation problems (see for example the
proof of Proposition 31 in Detemple \cite[Chapter 4]{detemple06} for
the same re-casting in the (simpler) case of a stock with constant
drift).

The following lemma gives the elementary properties of the full
information value function.

\begin{lemma}[Convexity, monotonicity, time decay: full information]
\label{lem:cmtdfi}

The functions
$v(\cdot,\cdot,i)\equiv v_{i}:[0,T]\times\mathbb R_{+},i=0,1$ in
\eqref{eq:fivfrep2} or \eqref{eq:fivf} characterising the full
information ESO value function (and the ESO value process via
(\ref{eq:fivp})) have the following properties:

\begin{enumerate}

\item For $i=0,1$ and $t\in[0,T]$, the map $x\to v_{i}(t,x)$ is convex
and non-decreasing. 

\item For any fixed $(t,x)\in[0,T]\times\mathbb{R}_{+}$,
$v_{0}(t,x)\geq v_{1}(t,x)$.

\item For $i=0,1$ and $x\in\mathbb R_{+}$, the map $t\to v_{i}(t,x)$
is non-increasing.

\end{enumerate}  

\end{lemma}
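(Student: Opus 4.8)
The plan is to read off all three properties from the time-homogeneous representation (\ref{eq:fivfrep2}), exploiting pathwise properties of the discounted payoff together with the stability of convexity, monotonicity and suprema.

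For part (1) I would fix $t$, $i$ and a stopping time $\tau\in\mathcal{T}_{0,T-t}$, and observe that for each fixed $\omega$ the map $x\mapsto\e^{-r\tau}(xH_{\tau}-K)^{+}$ is convex and non-decreasing, since $H_{\tau}>0$ and $(\cdot)^{+}$ composed with an increasing affine map of $x$ is convex and non-decreasing. Both properties are preserved under multiplication by the positive discount factor, under taking the (conditional) expectation appearing in (\ref{eq:fivfrep2}), and finally under taking the supremum over $\tau$. Hence $x\mapsto v_{i}(t,x)$ is convex and non-decreasing.

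For part (3) I would use the nesting of the stopping-time families in (\ref{eq:fivfrep2}): if $t_{1}\leq t_{2}$ then $T-t_{1}\geq T-t_{2}$, so $\mathcal{T}_{0,T-t_{2}}\subseteq\mathcal{T}_{0,T-t_{1}}$, and the supremum of the same functional over the larger family dominates that over the smaller one, giving $v_{i}(t_{1},x)\geq v_{i}(t_{2},x)$ at once, so $t\mapsto v_{i}(t,x)$ is non-increasing. This is the step where the absence-of-memory of the exponential and stationarity of the Brownian increments, already invoked to derive (\ref{eq:fivfrep2}), do the real work.

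Part (2) is the main obstacle, and I would prove it by a pathwise coupling. I would realise both scenarios on one probability space carrying $W$ and an independent exponential $\theta$: set $Y^{(1)}\equiv1$ (the case $Y_{0}=1$, where $\theta=0$) and $Y^{(0)}_{u}=\mathbbm{1}_{\{u\geq\theta\}}$ (the case $Y_{0}=0$), and let $X^{(i)}_{\tau}=xH^{(i)}_{\tau}$ be the corresponding stock prices driven by the same $W$. Since $\int_{0}^{\tau}Y^{(0)}_{u}\ud u\leq\tau=\int_{0}^{\tau}Y^{(1)}_{u}\ud u$ and $\mu_{1}=\mu_{0}-\sigma\eta$ by (\ref{eq:eta}), the drift-integral term in $H^{(0)}_{\tau}$ is pathwise at least as large as in $H^{(1)}_{\tau}$, yielding the domination $X^{(0)}_{\tau}\geq X^{(1)}_{\tau}$ for every $\tau$. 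In the $Y_{0}=1$ scenario the signal is constant, so the optimiser $\tau^{*}$ for $v_{1}(t,x)$ may be taken to be an $\mathbb{F}^{W}$-stopping time (the hitting time of the GBM exercise boundary), and is therefore admissible for the $Y_{0}=0$ problem as well, since $\mathbb{F}^{W}\subseteq\mathbb F$. Applying $\tau^{*}$ in the $Y_{0}=0$ problem and using monotonicity of the payoff in the stock price then gives
\[
v_{0}(t,x)\geq\mathbb E\bigl[\e^{-r\tau^{*}}(X^{(0)}_{\tau^{*}}-K)^{+}\bigr]\geq\mathbb E\bigl[\e^{-r\tau^{*}}(X^{(1)}_{\tau^{*}}-K)^{+}\bigr]=v_{1}(t,x).
\]
The point requiring care is the admissibility transfer of the stopping rule: the insider has strictly more information in the $Y_{0}=0$ regime, so one must argue on the side of the lower bound, using an optimiser that does not exploit observation of $Y$, rather than attempting to match the two filtrations directly.
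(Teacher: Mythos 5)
Your proposal is correct and follows essentially the same route as the paper: parts (1) and (3) are argued identically from the representation (\ref{eq:fivfrep2}), and part (2) rests on the paper's central pathwise bound $\int_{t}^{\tau}Y_{s}\,\mathrm{d}s\leq\tau-t$, which forces the state-$0$ stock price to dominate the state-$1$ price path by path. The only cosmetic difference is that you evaluate this domination at the $v_{1}$-optimiser (carefully noting it can be taken $\mathbb{F}^{W}$-adapted, a point the paper leaves implicit when it drops the conditioning on $Y_{t}=0$), whereas the paper applies the domination uniformly over all stopping times before taking suprema.
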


\begin{proof}

\begin{enumerate}

\item Convexity and monotonicity of the map $x\to v_{i}(t,x)$ follow
from the representation \eqref{eq:fivfrep2}, along with convexity
and monotonicity properties of the payoff function $x\to(x-K)^{+}$
and the linearity of the map $x\to X^{x}_{\tau}=xH_{\tau}$. For
example, to show convexity, consider $0\leq x_{1}<x_{2}<\infty$ and
some $\gamma\in[0,1]$. For each $i\in\{0,1\}$ we then have, on using
\eqref{eq:fivfrep2}, that
\begin{eqnarray*}
&& \gamma v_{i}(t,x_{1}) + (1-\gamma)v_{i}(t,x_{2}) \\
& = & \sup_{\tau\in\mathcal{T}_{0,T-t}}\mathbb{E}\left[\left.
\e^{-r\tau}\left(\gamma(x_{1}H_{\tau}-K)^{+} +
(1-\gamma)(x_{2}H_{\tau}-K)^{+}\right)\right\vert Y_{0}=i\right] \\
& \geq & \sup_{\tau\in\mathcal{T}_{0,T-t}}\mathbb{E}\left[\left.  
\e^{-r\tau}\left((\gamma x_{2} +
(1-\gamma)x_{2})H_{\tau}-K\right)^{+}\right\vert Y_{0}=i\right] \\
& = & v_{i}(t,\gamma x_{1} + (1-\gamma)x_{2}),   
\end{eqnarray*}
where the inequality follows from convexity of the payoff
function. This establishes convexity of $x\to
v_{i}(t,x)$. Monotonicity is established in the same manner.

\item At maturity we have $v_{0}(T,x)=v_{1}(T,x)=(x-K)^{+}$ for all
$x\in\mathbb{R}_{+}$. For $t\in[0,T)$, using the representation
\eqref{eq:fivfrep} and the definition \eqref{eq:Hi} for $i=0$ we
have
\begin{eqnarray}
v_{0}(t,x) & = & \sup_{\tau\in\mathcal{T}_{t,T}}\mathbb{E}\left[
\left.\e^{-r(\tau-t)}(xH_{t,\tau} - K)^{+}\right\vert Y_{t}=0\right]
\nonumber \\
& = & \sup_{\tau\in\mathcal{T}_{t,T}}\mathbb{E}\left[
\left.\e^{-r(\tau -
t)}\left(xH^{(0)}_{t,\tau}\exp\left(-\sigma\eta\int_{t}^{\tau}Y_{u}\ud 
u\right) - K\right)^{+}\right\vert Y_{t}=0\right].
\label{eq:v0rep}
\end{eqnarray}
Now, if $Y_{t}=0$ (so $\theta>t$) then for any $\mathbb{F}$-stopping
time $\tau\in[t,T)$ we have
$\int_{t}^{\tau}Y_{u}\ud
u=(\tau-\theta)\mathbbm{1}_{\{\tau\geq\theta\}}\leq \tau-t$, which
implies that
\begin{equation*}
H_{t,\tau}\equiv
H^{(0)}_{t,\tau}\exp\left(-\sigma\eta\int_{t}^{\tau}Y_{u}\ud u\right)
\geq H^{(0)}_{t,\tau}\e^{-\sigma\eta(\tau-t)}=H^{(1)}_{t,\tau}.  
\end{equation*}
Using this in the representation \eqref{eq:v0rep} we have
\begin{equation*}
v_{0}(t,x) \geq \sup_{\tau\in\mathcal{T}_{t,T}}\mathbb{E}\left[\left.
\e^{-r(\tau-t)}(xH^{(1)}_{t,\tau}-K)^{+}\right\vert Y_{t}=0\right].
\end{equation*}
But $xH^{(1)}_{t,\tau}$ is also the value of the stock at time $\tau$
given $X_{t}=x$ and $Y_{t}=1$ (since the drift appearing in $H^{(1)}$
is $\mu_{1}$), so we have
\begin{equation*}
v_{0}(t,x) \geq \sup_{\tau\in\mathcal{T}_{t,T}}\mathbb{E}\left[
\left.\e^{-r(\tau - t)}(xH_{t,\tau}-K)^{+}\right\vert
Y_{t}=1\right]  = v_{1}(t,x), \quad t\in[0,T).  
\end{equation*}

\item This is the classical time decay property of American claims,
which follows from the representation \eqref{eq:fivfrep2} and the
fact that $\mathcal{T}_{0,T-t^{\prime}}\subseteq\mathcal{T}_{0,T-t}$
for $t^{\prime}\geq t$. That is, given the time-homogeneity of the
stock price model (that is, the absence of explicit time dependence in
the model parameters), the possible stopping strategies starting at
the later time $t^{\prime}$ are a subset of the available strategies
starting at an earlier time, leading immediately to
$v_{i}(t^{\prime},x)\leq v_{i}(t,x)$ for any fixed $x$ and
$t^{\prime}\geq t$. This time decay property is well-known to hold in
time-homogeneous models, as discussed by Ekstr\"om \cite{ekstrom04}
and Monoyios and Ng \cite{mman11}.
  
\end{enumerate}

\end{proof}

\subsection{Full information continuation and stopping
  regions}
\label{subsec:ficsr}

Define the continuation regions $\mathcal{C}_{i}$ and stopping regions
$\mathcal{S}_{i}$ when the one-jump process $Y$ is in state
$i\in\{0,1\}$ by
\begin{eqnarray*}
\mathcal{C}_{i} :=
\{(t,x)\in[0,T)\times\mathbb{R}_{+}:v_{i}(t,x)>(x-K)^{+}\}, \quad
i=0,1, \\ 
\mathcal{S}_{i} :=
\{(t,x)\in[0,T)\times\mathbb{R}_{+}:v_{i}(t,x)=(x-K)^{+}\}, \quad
i=0,1. 
\end{eqnarray*}
Since the functions $v_{i}(\cdot,\cdot) $ are continuous, the
continuation regions $\mathcal{C}_{i},\,i=0,1$ are open sets and their
respective complements $\mathcal{S}_{i},\,i=0,1$ are closed sets. At
maturity, by definition one cannot continue, so exercise takes place
if the terminal stock price exceeds the strike.

\begin{remark}[Minimal conditions for early exercise: full information]
\label{rem:murfi}

If the drift process $\mu(Y)$ of the stock in \eqref{eq:mufi}
satisfies $\mu(Y)\geq r$ almost surely, then the reward process is a
$(\mathbb{P},\mathbb{F})$-sub-martingale, so no early exercise is
optimal, and the American ESO value coincides with that of its
European counterpart. In particular, if $\mu_{0}\geq r$, then we
expect no early exercise when $Y=0$ (so before the change point).
 
\end{remark}

The properties in Lemma \ref{lem:cmtdfi} imply that for each $i=0,1$,
the boundary between $\mathcal{C}_{i},\mathcal{S}_{i}$ will take the
form of a non-increasing critical stock price function (or exercise
boundary) $x^{*}_{i}:[0,T)\to[K,\infty)$, with
$x^{*}_{0}(t)\geq x^{*}_{1}(t)\geq K$ for all $t\in[0,T)$. The optimal
exercise policy when $Y$ is in state $i\in\{0,1\}$ is to exercise the
ESO the first time the stock price crosses $x^{*}_{i}(\cdot)$ from
below, unless the change point occurs at a juncture when the exercise
boundaries are strictly ordered and the stock price satisfies
$x^{*}_{1}(\theta)\leq X_{\theta}<x^{*}_{0}(\theta)$, in which case
the change point causes the system to immediately switch from being in
$\mathcal{C}_{0}$ to $\mathcal{S}_{1}$, and the ESO is exercised
immediately after the change point. At the maturity time itself,
exercise takes place if the terminal stock price exceeds the strike,
so the exercise boundaries may be extended to maturity by defining
$x^{*}_{i}(T):=K,\,i=0,1$ (though as we shall see shortly in
Proposition \ref{prop:tvfull} there exists the possibility of a
discontinuity in the boundaries at maturity, with $x^{*}_{i}(T-)$
possibly not equal to $K$). We formalise these properties in the
corollary below.

\begin{corollary}
\label{corr:fiexbds}

For $i=0,1$, if $\mu_{i}<r$, then there exist two non-increasing
right-continuous functions $x^{*}_{i}:[0,T)\to[K,\infty)$, $i=0,1$,
satisfying
\begin{equation}
x^{*}_{1}(t) \leq x^{*}_{0}(t), \quad t\in[0,T),  
\label{eq:ordering}
\end{equation}
such that the continuation and stopping regions in state $i\in\{0,1\}$
are given by
\begin{eqnarray}
\mathcal{C}_{i} = \{(t,x)\in[0,T)\times\mathbb
R_{+}:x < x^{*}_{i}(t)\}, \quad i=0,1, \label{eq:Ci} \\
\mathcal{S}_{i} = \{(t,x)\in[0,T)\times\mathbb
R_{+}:x \geq x^{*}_{i}(t)\}, \quad i=0,1. \label{eq:Si}
\end{eqnarray}
The smallest optimal stopping time for the full information problem
\eqref{eq:fiproblem} starting at time zero is
$\tau^{*}(0)\equiv\tau^{*}$, given by
\begin{equation*}
\tau^{*} = \inf\left\{t\in[0,T):\mathbbm{1}_{\{Y_{t}=0\}}X_{t}\geq
x^{*}_{0}(t) + \mathbbm{1}_{\{Y_{t}=1\}}X_{t}\geq x^{*}_{1}(t)\right\}\wedge
T.
\end{equation*}
For $i=0,1$, if $\mu_{i}\geq r$, then the exercise thresholds satisfy
$x^{*}_{i}(t)=+\infty$ for $t\in[0,T)$, in accordance with Remark
\ref{rem:murfi}.

At maturity, regardless of the values of $\mu_{i},\,i=0,1$, we have
$x^{*}_{i}(T)=K,\,i=0,1$.

\end{corollary}

Before giving the proof of this corollary, we state in Proposition
\ref{prop:tvfull} below some further properties of the exercise
boundaries which it is natural to give here, and which we shall prove
later, after establishing free boundary PDEs and smooth pasting
properties for the value functions in Sections \ref{subsec:fifbs} and
\ref{subsec:fisp}, along with the Doob-Meyer decomposition of the
Snell envelope of the reward process (that is, the discounted full
information ESO process) in Section \ref{subsec:dmdfise}.

When $\mu_{i}<r$, $i=0,1$, so that bounded exercise thresholds exist
prior to maturity, it turns out that the exercise boundaries are
continuous over $[0,T)$, with a possible discontinuity at $T$, as we
show below in Proposition \ref{prop:tvfull}. This mirrors the
classical situation in the Black-Scholes model for an American call,
in which the critical stock price satisfies
$x^{*}_{\mathrm{BS}}(T-)=\max(K,(r/\delta)K)$ and
$x^{*}_{\mathrm{BS}}(T)=K$, where $\delta$ is the dividend yield (see
for example Detemple \cite[Chapter 4, Proposition
33]{detemple06}). The proposition below shows that these formulae
extend to the random dividend yield case, where the dividend yield
can switch from its initial value to another, and where we invoke
Remark \ref{rem:maptoclassical} to map our problem to a classical
no-arbitrage valuation of an American call. A similar remark will
pertain to the partial information problem as well, where the random
dividend yield will depend on a diffusion with values in $[0,1]$.

\begin{proposition}
\label{prop:tvfull}

Suppose, for $i=0,1$, that $\mu_{i}<r$. The optimal exercise
boundaries $x^{*}_{i}(\cdot),\,i=0,1$ for the full information ESO
problem are continuous over $[0,T)$, with limiting values as we
approach maturity given by
\begin{equation}
\lim_{t\uparrow T}x^{*}_{i}(t) \equiv x^{*}_{i}(T-) =
\max\left(K,\frac{r}{r-\mu_{i}}K\right), \quad i=0,1.   
\label{eq:xistm}
\end{equation}
At maturity itself, we have $x^{*}_{i}(T)=K$, for $i=0,1$.

\end{proposition}

The proof of this proposition will be given later in Section
\ref{subsec:dmdfise}, after we establish the free boundary PDE for the
full information value function in Proposition~\ref{prop:fbpfivf}, the
smooth pasting condition in Theorem \ref{thm:spfi}, as well as the
Doob-Meyer decomposition of the Snell envelope process in Theorem
\ref{thm:dmdecomp}, these results being utilised in the proof of
Proposition~\ref{prop:tvfull}.

We now turn to proving Corollary \ref{corr:fiexbds}.
  
\begin{proof}[Proof of Corollary \ref{corr:fiexbds}]

For $i=0,1$, take $\mu_{i}<r$, as the case $\mu_{i}\geq r$ is
covered by Remark \ref{rem:murfi}. First, if early exercise has not
occurred prior to maturity, then it will occur at maturity provided
the stock price is not below the strike, so we have terminal critical
stock prices $x^{*}_{i}(T)=K,\,i=0,1$.

Next, let us show that the continuation and stopping regions have the
threshold forms shown in in \eqref{eq:Ci} and \eqref{eq:Si},
respectively. Fix $i\in\{0,1\}$ and $t\in[0,T)$, and suppose that
$(t,x)\in[0,T)\times\mathbb{R}_{+}$ is such that
$(t,x)\in\mathcal{S}_{i}$, so we have $v_{i}(t,x)=x-K$. Now take
$\bar{x}>x$. We want to show that
$(t,\bar{x})\in\mathcal{S}_{i}$. Suppose, to the contrary, that
$(t,\bar{x})\notin\mathcal{S}_{i}$, so that
$v_{i}(t,\bar{x})>\bar{x}-K$. But we also have, with $\bar{\tau}$
denoting the time interval to the optimal exercise time for starting
state $(t,\bar{x},i)$ in the representation \eqref{eq:fivfrep2}, that
\begin{eqnarray*}
v_{i}(t,\bar{x}) & = &
\mathbb{E}\left[\left.\e^{-r\bar{\tau}}(\bar{x}H_{\bar{\tau}}-K)^{+}\right\vert
Y_{0}=i\right] \\
& = & \mathbb{E}\left[\left.\e^{-r\bar{\tau}}\left(xH_{\bar{\tau}} +
(\bar{x}-x)H_{\bar{\tau}}-K\right)^{+}\right\vert Y_{0}=i\right] \\
& \leq & \mathbb{E}\left[\left.\e^{-r\bar{\tau}}(xH_{\bar{\tau}}-K)^{+}\right\vert
Y_{0}=i\right] + \mathbb{E}[\left.
\e^{-r\bar{\tau}}(\bar{x}-x)H_{\bar{\tau}}\right\vert Y_{0}=i] \\
& \leq & v_{i}(t,x) +
(\bar{x}-x)\mathbb{E}[\left.\e^{-r\bar{\tau}}H_{\bar{\tau}}\right\vert
Y_{0}=i] \\
& < & v_{i}(t,x) + \bar{x}-x \\
& = & \bar{x} - K.
\end{eqnarray*}
Above, the first inequality follows from the inequality
$(a+b)^{+}\leq a^{+}+b^{+}$, the second inequality follows from the
sub-optimality of $\bar{\tau}$ for starting state $(t,x,i)$, and the
third inequality is due to the strict super-martingale property of
$(\e^{-rt}H_{t})_{t\in[0,T]}$ when $\mu_{i}<r$, which we now show.

If $Y_{0}=0$, then for $t\in[0,T]$ we have, with $\mathcal{E}(\cdot)$
denoting the stochastic exponential,
\begin{equation*}
\e^{-rt}H_{t} = \e^{-(r-\mu_{0})t}\mathcal{E}(\sigma
W)_{t}\exp\left(-\sigma\eta\int_{0}^{t}Y_{s}\ud s\right) \leq
\e^{-(r-\mu_{0})t}\mathcal{E}(\sigma W)_{t}, \quad t\in[0,T],
\end{equation*}
which for $\mu_{0}<r$ yields a strict super-martingale. If $Y_{0}=1$
the argument is yet simpler, as in that case we obtain
\begin{equation*}
\e^{-rt}H_{t} = \e^{-(r-\mu_{1})t}\mathcal{E}(\sigma W)_{t}, \quad t\in[0,T],
\end{equation*}
again yielding a strict super-martingale. We thus obtain
$v_{i}(t,\bar{x})<\bar{x}-K$, which contradicts
$v_{i}(t,\bar{x})>\bar{x}-K$. Hence, $(t,\bar{x})\in\mathcal{S}_{i}$,
which establishes \eqref{eq:Ci} and \eqref{eq:Si}.

Next, let us show that the exercise boundaries are non-increasing. Fix
$i\in\{0,1\}$ and $(t,x)\in(0,T)\times\mathbb{R}_{+}$ such that
$(t,x)\in\mathcal{C}_{i}$, so that $v_{i}(t,x)>(x-K)^{+}$ and
$x<x^{*}_{i}(t)$. Consider a time $t_{0}$ satisfying
$0\leq t_{0}<t<T$. By the time decay property in Lemma
\ref{lem:cmtdfi} we have $v_{i}(t_{0},x)\geq v_{i}(t,x)$, and
therefore,
\begin{equation*}
v_{i}(t_{0},x) - (x-K)^{+} \geq v_{i}(t,x) - (x-K)^{+} > 0,
\end{equation*}
so that we also have $(t_{0},x)\in\mathcal{C}_{i}$. In other words,
$x<x^{*}_{i}(t)\implies x<x^{*}_{i}(t_{0})$, which can only be true if
$x^{*}_{i}(\cdot)$ is non-increasing.

Let us now show the ordering of the boundaries as expressed in
\eqref{eq:ordering}. Suppose
$[0,T)\times\mathbb{R}_{+}\ni(t,x)\in\mathcal{C}_{1}$, so that
$x<x^{*}_{1}(t)$ and $v_{1}(t,x)>(x-K)^{+}$. We then have, using the
ordering of the value functions established in Lemma \ref{lem:cmtdfi},
that $v_{0}(t,x) \geq v_{1}(t,x) > (x-K)^{+}$, so that we also have
$(t,x)\in\mathcal{C}_{0}$ and hence $x<x^{*}_{0}(t)$, which implies
that $x^{*}_{0}(t)\geq x^{*}_{1}(t)$ over $[0,T)$.

Finally, let us show that the exercise boundaries are right-continuous
over $[0,T)$. Fix $i\in\{0,1\}$ and $t\in[0,T)$,
and consider a sequence $(t_{n})_{n\in\mathbb{N}}$ of times converging
from above to $t$, that is, $t_{n}\downarrow t$ as $n\to\infty$. Since
$x^{*}_{i}(\cdot)$ is non-increasing, we know that the right-hand
limit $x^{*}_{i}(t+)$ exists. Now, for each $n\in\mathbb{N}$,
$(t_{n},x^{*}_{i}(t_{n}))\in\mathcal{S}_{i}$, and because the stopping
region $\mathcal{S}_{i}$ is a closed set, we get that
$(t,x^{*}_{i}(t+))\in\mathcal{S}_{i}$. Then, recalling that
$\mathcal{S}_{i}$ has the up-connected representation \eqref{eq:Si},
we see that we have $x^{*}_{i}(t+)\geq x^{*}_{i}(t)$. But we also have
the reverse inequality $x^{*}_{i}(t+)\leq x^{*}_{i}(t)$ from the fact
that $x^{*}_{i}(\cdot)$ is non-increasing, so we obtain
$x^{*}_{i}(t+)=x^{*}_{i}(t)$, showing that $x^{*}_{i}(\cdot)$ is
right-continuous.

\end{proof}

\subsection{Full information free boundary system}
\label{subsec:fifbs}

Let us now proceed to the free boundary characterisation
  of the full information value function. Define differential
operators $\mathcal{L}_{i}$, $i=0,1$, acting on functions
$f\in C^{1,2}([0,T])\times\mathbb R_{+})$, by
\begin{equation*}
\mathcal{L}_{i}f(t,x) := \left(\frac{\partial}{\partial t} +
\mu_{i}x\frac{\partial}{\partial x} +
\frac{1}{2}\sigma^{2}x^{2}\frac{\partial^{2}}{\partial
x^{2}} - r\right)f(t,x),  \quad i=0,1.
\end{equation*}
The free boundary problem for the full information value function then
involves a pair of coupled PDEs as given in Proposition
\ref{prop:fbpfivf} below. The proof illustrates that a classical
approach, akin to the proof of Theorem 2.7.7 of Karatzas and Shreve
\cite{ks98} in the Black-Scholes model, can be extended in our random
drift scenario. This is in marked contrast to the much more involved
proof of the free boundary system satisfied by finite maturity
American put options in regime switching models given by Le and Wang
\cite[Proposition 1]{lw10}. To the best of our knowledge, our result
below constitutes the first time the classical method of proof is
extended to a finite horizon American option model with regime
switching (for example, no such regularity is established in
Buffington and Elliott \cite{be02}).

\begin{proposition}[Free boundary problem: full information] 
\label{prop:fbpfivf}

The full information value function $v(t,x,i)\equiv v_{i}(t,x)$,
$i=0,1$, defined in (\ref{eq:fivf}) is the unique solution in
$[0,T]\times\mathbb R_{+}\times\{0,1\}$ of the free boundary problem
\begin{eqnarray}
\mathcal{L}_{0}v_{0}(t,x) & = &
- \lambda\left(v_{1}(t,x) - v_{0}(t,x)\right), \quad 0\leq x <
x^{*}_{0}(t), \quad t\in[0,T), \label{eq:pde1} \\   
\mathcal{L}_{1}v_{1}(t,x) & = & 0, \quad 0\leq x < x^{*}_{1}(t), \quad
t\in[0,T), \label{eq:pde2} \\
v_{i}(t,x) & = & x-K, \quad x \geq x^{*}_{i}(t), \quad t\in[0,T), \quad
i=0,1, \label{eq:bci} \\ 
v_{i}(T,x) & = & (x-K)^{+}, \quad x\in\mathbb R_{+}, \quad
i=0,1, \label{eq:vibcT} \\
\lim_{x\downarrow 0}v_{i}(t,x) & = & 0, \quad t\in[0,T), \quad
i=0,1. \label{eq:vibc0} 
\end{eqnarray}

\end{proposition}

\begin{proof}

It is clear that $v_{i}(\cdot,\cdot),\,i=0,1$ satisfy the boundary
conditions \eqref{eq:bci}, \eqref{eq:vibcT} and \eqref{eq:vibc0}. It
remains to verify the PDEs \eqref{eq:pde1} and \eqref{eq:pde2}. To
this end, take a pair of points
$(t_{i},x_{i})\in\mathcal{C}_{i},\,i=0,1$ and a pair of rectangles
$\mathcal{R}_{i}:=(t^{\min}_{i},t^{\max}_{i})\times
(x^{\min}_{i},x^{\max}_{i}),\,i=0,1$,
with
$(t_{i},x_{i})\in\mathcal{R}_{i}\subset\mathcal{C}_{i},\,i=0,1$.
Let $\partial\mathcal{R}_{i},\,i=0,1$ denote the boundaries of these
rectangles, and denote by
$\partial_{0}\mathcal{R}_{i}:=
\partial\mathcal{R}_{i}\setminus[\{t^{\min}_{i}\}\times
(x^{\min}_{i},x^{\max}_{i})]$
the so-called parabolic boundaries of these rectangles. With this
set-up, consider the terminal-boundary value problem
\begin{eqnarray}
\mathcal{L}_{0}f_{0} & = & -\lambda(f_{1}-f_{0}),  \quad \mbox{in
$\mathcal{R}_{0}$}; \quad f_{0}=v_{0}, \quad \mbox{on
$\partial_{0}\mathcal{R}_{0}$}, \label{eq:tbvp0} \\
\mathcal{L}_{1}f_{1} & = & 0, \quad \mbox{in 
$\mathcal{R}_{1}$}; \quad f_{1}=v_{1}, \quad \mbox{on
$\partial_{0}\mathcal{R}_{1}$}. \label{eq:tbvp1}
\end{eqnarray}
Classical theory for parabolic PDEs (for example, Friedman
\cite[Chapter 3]{friedman64}) guarantees the existence of a unique
solution to \eqref{eq:tbvp0}--\eqref{eq:tbvp1} with all derivatives
appearing in $\mathcal{L}_{i},\,i=0,1$ being continuous. We wish to
show that $f_{i}$ and $v_{i}$ agree on $\mathcal{R}_{i},\,i=0,1$,
respectively.

With $(t_{i},x_{i})\in\mathcal{R}_{i},\,i=0,1$ given, define stopping
times $\tau_{i},\,i=0,1$ by
\begin{equation*}
\tau_{i} := \inf\{\rho\in[0,t^{\max}_{i}-t):
(t_{i}+\rho,x_{i}H_{\rho}) \in\partial_{0}\mathcal{R}_{i}\} \wedge
(t^{\max}_{i}-t), \quad i=0,1, 
\end{equation*}
and processes $N^{i},\,i=0,1$ by
\begin{equation*}
N^{i}_{\rho} := \e^{-r\rho}f_{i}(t_{i}+\rho,x_{i}H_{\rho}),
\quad 0\leq \rho\leq t^{\max}_{i}-t, \quad i=0,1.  
\end{equation*}
where $H_{\rho}\equiv H_{0,\rho}$ is the accumulation factor in
\eqref{eq:Hst} for the interval $[0,\rho]$. The stopped processes
$(N^{i}_{\rho\wedge\tau_{i}})_{0\leq\rho\leq
  t^{\max}_{i}-t_{i}},\,i=0,1$ are
$(\mathbb{P},\mathbb{F})$-martingales by virtue of the It\^o formula
and the system \eqref{eq:tbvp0}--\eqref{eq:tbvp1} satisfied by
$f_{i},\,i=0,1$, and therefore
\begin{equation}
f_{i}(t_{i},x_{i}) = N^{i}_{t_{i}} =  \mathbb{E}[N^{i}_{\tau_{i}}] =
\mathbb{E}[\e^{-r\tau_{i}}v_{i}(t_{i}+\tau_{i},x_{i}H_{\tau_{i}})],
\quad i=0,1,
\label{eq:fiexrep}
\end{equation}
where we have used the boundary conditions in
\eqref{eq:tbvp0}--\eqref{eq:tbvp1}  to obtain the last equality for
each $i=0,1$. 

But $\mathcal{R}_{i}\subset\mathcal{C}_{i},\,i=0,1$ implies that
$(t_{i}+\tau_{i},x_{i}H_{\tau_{i}})\in\mathcal{C}_{i},\,i=0,1$, which
implies that $\tau_{i},\,i=0,1$ must be less than or equal to the smallest
optimal stopping time for starting state $(t_{i},x_{i}),\,i=0,1$, that
is
\begin{equation*}
\tau_{i} \leq \tau^{*}_{i}(t_{i},x_{i}) := \inf\{\rho\in[0,T-t_{i}):
v_{i}(t_{i}+\rho,x_{i}H_{\rho}) = (x_{i}H_{\rho} - K)^{+}\}\wedge
(T-t_{i}), \quad i=0,1.  
\end{equation*}
Now, the stopped processes
\begin{equation*}
\e^{-r(\rho\wedge\tau^{*}_{i}(t_{i},x_{i}))}v_{i}\left(t_{i} +
(\rho\wedge\tau^{*}_{i}(t_{i},x_{i})),
x_{i}H_{\rho\wedge\tau^{*}_{i}(t_{i},x_{i})}\right), \quad 0\leq
\rho\leq T-t_{i}, \quad i=0,1,
\end{equation*}
are martingales, so this and the optional sampling theorem yield that
\begin{equation}
\mathbb{E}\left[\e^{-r\tau_{i}}v_{i}(t_{i}+\tau_{i},x_{i}H_{\tau_{i}})\right]
= v_{i}(t_{i},x_{i}),
\label{eq:viexrep}
\end{equation}
Then, \eqref{eq:fiexrep} and \eqref{eq:viexrep} show that, for each
$i=0,1$, $f_{i}$ and $v_{i}$ agree on $\mathcal{R}_{i}$ (and hence
also on $\mathcal{C}_{i}$ since
$\mathcal{R}_{i}\subset\mathcal{C}_{i}$ and
$(t_{i},x_{i})\in\mathcal{R}_{i}$ were arbitrary). Thus,
$v_{i},\,i=0,1$ satisfy the PDEs \eqref{eq:pde1} and \eqref{eq:pde2}.

Finally, to show uniqueness, let $g_{i},\,i=0,1$ defined on the
closure of $\mathcal{C}_{i},\,i=0,1$ respectively, be solutions to the
system \eqref{eq:pde1}--\eqref{eq:vibc0}. For starting states
$(0,x_{i},i),\,i=0,1$ such that $x_{i}<x^{*}_{i}(0),\,i=0,1$, define
\begin{equation*}
L^{i}_{t} := \e^{-rt}g_{i}(t,x_{i}H_{t}), \quad t\in[0,T], \quad i=0,1,  
\end{equation*}
as well as the smallest optimal stopping times for
$v_{i}(0,x_{i}),\,i=0,1$, given by
\begin{eqnarray*}
\tau^{*}_{0}(x_{0}) & := & \inf\{t\in[0,T): x_{0}H^{(0)}_{t}\geq
x^{*}_{0}(t)\}\wedge \inf\{t\in[0,T): x_{0}H^{(1)}_{t}\e^{\sigma\eta\theta}\geq
x^{*}_{1}(t)\} \wedge T, \\
\tau^{*}_{1}(x_{1}) & := & \inf\{t\in[0,T): x_{1}H^{(1)}_{t}\geq
x^{*}_{1}(t)\}\wedge T. \nonumber
\end{eqnarray*}
In the first equation above, the early exercise times on the
right-hand side correspond to exercise before the change point (for
$x_{0}H^{(0)}_{t}\geq x^{*}_{0}(t)$) and after the change point (for
$x_{0}H^{(1)}_{t}\e^{\sigma\eta\theta}\geq x^{*}_{1}(t)$), where we
have used the form \eqref{eq:Xacp} of the stock price after the change
point.

The It\^o formula yields that each
$(L^{i}_{t\wedge\tau^{*}_{i}(x_{i})})_{t\in[0,T]}$ is a
martingale. Then, optional sampling along with the fact that
$\tau^{*}_{i}(x_{i}),\,i=0,1$ attain the respective suprema in
\eqref{eq:fivfrep2} starting at time zero, yields that
\begin{eqnarray*}
g_{i}(0,x_{i}) = L^{i}_{0} & = & \mathbb{E}[L^{i}_{\tau^{*}_{i}(x_{i})}] \\
& = & \mathbb{E}\left[
\e^{-r\tau^{*}_{i}(x_{i})}g_{i}(\tau^{*}_{i}(x_{i}),x_{i}H_{\tau^{*}_{i}(x_{i})})\right] \\ 
& = & \mathbb{E}\left[\e^{-r\tau^{*}_{i}(x_{i})}(x_{i}H_{\tau^{*}_{i}(x_{i})} -
K)^{+}\right] \\
& = & v_{i}(0,x_{i}), \quad i=0,1, 
\end{eqnarray*}
so that the solution is unique.

\end{proof}

\subsection{Full information smooth fit condition}
\label{subsec:fisp}

Proposition \ref{prop:fbpfivf} shows that for $i=0,1$, each
$v_{i}(\cdot,\cdot)$ is $C^{1,2}([0,T)\times\mathbb R_{+})$ in the
corresponding continuation region $\mathcal{C}_{i}$. In the stopping
region we know that $v_{i}(t,x)=x-K$, which is also smooth. At issue
then is the smoothness of $v_{i}(\cdot,\cdot)$ across the exercise
boundaries $x^{*}_{i}(\cdot)$. This is settled by the smooth pasting
property in Theorem \ref{thm:spfi} below. This property has been
established for an American put in a model with multiple
regime-switching by Le and Wang \cite[Lemma 8]{lw10}, though the
method of proof is complicated, relying on extending an iterative
procedure first developed by Bayraktar \cite{bayraktar09}, and relies
on the boundedness of the put payoff as well. Our proof is more
direct, exploiting our specific one-switch model, and showing how
classical techniques developed for the Black-Scholes model (see for
example the proof of Lemma 2.7.8 in Karatzas and Shreve \cite{ks98}),
which proceed by analysing properties of the smallest optimal stopping
time from a given starting state, can be extended to the random drift
scenario.

\begin{theorem}[Smooth pasting: full information value function] 
\label{thm:spfi}

The functions $v_{i}(\cdot,\cdot)$, $i=0,1$, satisfy the smooth pasting
property at the optimal exercise thresholds $x^{*}_{i}(\cdot)$:
\begin{equation*}
\frac{\partial v_{i}}{\partial x}(t,x^{*}_{i}(t)) = 1, \quad t\in[0,T),
\quad i=0,1.  
\end{equation*}

\end{theorem}

\begin{proof}

It entails no loss of generality in this proof if we use the
starting time $t=0$, so for simplicity of presentation we do so, and
write $v_{i}(x)\equiv v_{i}(0,x),\,i=0,1,\,x\in\mathbb{R}_{+}$, and
$x^{*}_{i}\equiv x^{*}_{i}(0)$ for brevity.

For $x\in\mathbb{R}_{+}$ and for each $i\in\{0,1\}$, the map
$x\to v_{i}(x)$ is convex and non-decreasing, so we have
$0\leq v^{\prime}_{i}(x)\leq 1$ in the continuation region at time
zero, $\mathcal{C}^{0}_{i}:=\{x\in\mathbb{R}_{+}:x<x^{*}_{i}\}$, and
thus $v^{\prime}_{i}(x^{*}_{i}-)\leq 1$. We also have
$v^{\prime}_{i}(x)=1$ in the corresponding stopping region
$\mathcal{S}^{0}_{i}:=\{x\in\mathbb{R}_{+}:x\geq x^{*}_{i}\}$ and thus
$v^{\prime}_{i}(x^{*}_{i}+)=1$. Hence, the proof will be complete if
we can show that $v^{\prime}_{i}(x^{*}_{i}-)\geq 1$.

First consider the case $i=1$, that is, the stock price evolution
begins in the low-drift regime, so the change point happens at the
initial time. The stock drift is thus equal to $\mu_{1}$ throughout
$[0,T]$ and the relevant value function is $v_{1}(\cdot)$. Denote by
$\tau_{1}(x)$ the smallest optimal stopping time given an initial
stock price $x\in\mathbb{R}_{+}$, given by the first time the stock
breaches the boundary $x^{*}_{1}(\cdot)$:
\begin{equation*}
\tau_{1}(x) :=  \inf\{t\in[0,T):xH^{(1)}_{t}\geq x^{*}_{1}(t)\}\wedge
T,
\end{equation*}
where $H^{(1)}$ is the process in \eqref{eq:Hi} for $s=0$ and $i=1$,
giving the multiplicative random factor by which the stock
price appreciates, so that, given $Y_{0}=1$ and $X_{0}=x$, the stock
price at $t\in[0,T]$ is $X_{t}\equiv X^{x}_{t}$, given by
\begin{equation*}
X_{t} = xH^{(1)}_{t} =
x\exp\left[\left(\mu_{1}-\frac{1}{2}\sigma^{2}\right)t + \sigma
  W_{t}\right], \quad t\in[0,T].  
\end{equation*}
Set $x=x^{*}_{1}\geq K$ (the last inequality due to the fact that
exercise below the strike is never optimal), fixed for the remainder
of the proof for the case $i=1$, and define
\begin{equation*}
\tau_{1}(x-\epsilon) :=  \inf\{t\in[0,T):(x-\epsilon)H^{(1)}_{t}\geq
x^{*}_{1}(t)\}\wedge T,
\end{equation*}
for $\epsilon\geq 0$, so that $\tau_{1}(x)\equiv 0$ and
$\tau_{1}(x-\epsilon)$ is non-decreasing in $\epsilon$. Because
$x^{*}_{1}(\cdot)$ is non-increasing, we have
\begin{equation}
\tau_{1}(x-\epsilon) \leq \inf\{t\in[0,T):(x-\epsilon)H^{(1)}_{t}\geq
x\}\wedge T.
\label{eq:tau1xe}
\end{equation}
The Law of the Iterated Logarithm for the Brownian motion $W$
(Karatzas and Shreve \cite[Theorem 2.9.23]{ks91}) implies that
$\mathbb{P}[\sup_{0\leq t\leq a}H^{(1)}_{t}>1]=1$ for every $a>0$, so
there will exist a sufficiently small $\epsilon>0$ such that
$\sup_{0\leq t\leq a}(x-\epsilon)H^{(1)}_{t}\geq x$ almost surely for
every $a>0$. Thus, the right-hand-side of \eqref{eq:tau1xe} tends to
zero as $\epsilon\downarrow 0$, and therefore
\begin{equation}
\tau_{1}(x-\epsilon) \downarrow 0 \quad \mbox{as} \quad
\epsilon\downarrow 0, \quad \mbox{almost surely}.  
\label{eq:limtau1}
\end{equation}
Using the fact that $\tau_{1}(x-\epsilon)$ will be sub-optimal for the
starting state $(X_{0},Y_{0})=(x,1)$ we have
\begin{eqnarray}
&& v_{1}(x) - v_{1}(x-\epsilon) \label{eq:v1xv1xe} \\
& \geq & \mathbb{E}\left[
\e^{-r\tau_{1}(x-\epsilon)}\left((xH^{(1)}_{\tau_{1}(x-\epsilon)}-K)^{+}
-   ((x-\epsilon)H^{(1)}_{\tau_{1}(x-\epsilon)}-K)^{+}\right)\right]
\nonumber \\
& \geq & \mathbb{E}\left[
\e^{-r\tau_{1}(x-\epsilon)}\left((xH^{(1)}_{\tau_{1}(x-\epsilon)}-K)^{+}
- ((x-\epsilon)H^{(1)}_{\tau_{1}(x-\epsilon)}-K)^{+}\right)
\mathbbm{1}_{\{(x-\epsilon)H^{(1)}_{\tau_{1}(x-\epsilon)}\geq
K\}}\right] \nonumber \\
& = & \epsilon\mathbb{E}\left[
\e^{-r\tau_{1}(x-\epsilon)}H^{(1)}_{\tau_{1}(x-\epsilon)}
\mathbbm{1}_{\{(x-\epsilon)H^{(1)}_{\tau_{1}(x-\epsilon)}\geq
K\}}\right]. \nonumber     
\end{eqnarray}
We now take the limit as $\epsilon\downarrow 0$. Using
\eqref{eq:limtau1} we almost surely have
$\lim_{\epsilon\downarrow 0} H^{(1)}_{\tau_{1}(x-\epsilon)}=1$ and,
since it is never optimal to exercise below the strike,
$\lim_{\epsilon\downarrow 0}
\mathbbm{1}_{\{(x-\epsilon)H^{(1)}_{\tau_{1}(x-\epsilon)}\geq K\}}
=1$. Using these properties, along with the uniform integrability of
$(H^{(1)}_{t})_{t\in[0,T]}$, in \eqref{eq:v1xv1xe}, we compute
\begin{equation*}
v^{\prime}_{1}(x-) = \lim_{\epsilon\downarrow
0}\frac{1}{\epsilon}(v_{1}(x) - v_{1}(x-\epsilon)) \geq 1,
\end{equation*}
which completes the proof in the case $i=1$.

Now consider the case $i=0$, so that the stock begins at time zero in
the high-drift state with drift $\mu_{0}$. The early exercise
scenarios bifurcate into two possibilities, either (i) before the
change point or (ii) at or after the change point. Recall that, given
$Y_{0}=0$ and $X_{0}=x$, the stock price at $t\in[0,\theta]$ (so
up to the change point) is
$X_{t}\equiv X^{x}_{t}$, given by
\begin{equation*}
X_{t} = xH^{(0)}_{t} =
x\exp\left[\left(\mu_{0}-\frac{1}{2}\sigma^{2}\right) + \sigma
  W_{t}\right], \quad 0\leq t\leq\theta,  
\end{equation*}
while at or after the change point the stock price is given by
\begin{equation*}
X_{t} = xH^{(1)}_{t}\exp(\sigma\eta\theta), \quad 0\leq\theta\leq t,
\end{equation*}
and observe that for $t=\theta$ the stock price is
$xH^{(1)}_{\theta}\e^{\sigma\eta\theta} = xH^{(0)}_{\theta}$. The
smallest optimal stopping time starting from $(0,X_{0},Y_{0})=(0,x,0)$
is then $\tau_{0}(x)$, given by
\begin{equation}
\tau_{0}(x) := \inf\{t\in[0,T):xH^{(0)}_{t}\geq x^{*}_{0}(t)\}\wedge
\inf\{t\in[0,T):xH^{(1)}_{t}\e^{\sigma\eta\theta}\geq
x^{*}_{1}(t)\}\wedge T.
\label{eq:tau0x}
\end{equation}
The first time on the right-hand-side of \eqref{eq:tau0x}
corresponds to early exercise before the change point if the stock
breaches $x^{*}_{0}(\cdot)$, while the second time corresponds to
early exercise at or after the change point if the stock breaches
$x^{*}_{1}(\cdot)$. The latter scenario includes the possibility of
early exercise at the change point itself, in which case the stock
price on exercise is
$xH^{(1)}_{\theta}\e^{\sigma\eta\theta} =
xH^{(0)}_{\theta}\in[x^{*}_{1}(\theta),x^{*}_{0}(\theta))$.

As we did for the case $i=1$, set $x=x^{*}_{0}\geq K$, fixed for the
remainder of the proof, and define
\begin{equation*}
\tau_{0}(x-\epsilon) :=  \inf\{t\in[0,T):(x-\epsilon)H^{(0)}_{t}\geq
x^{*}_{0}(t)\} \wedge
\inf\{t\in[0,T):(x-\epsilon)H^{(1)}_{t}\e^{\sigma\eta\theta}\geq
x^{*}_{1}(t)\} \wedge T,
\end{equation*}
for $\epsilon\geq 0$, so that $\tau_{0}(x)\equiv 0$ and
$\tau_{0}(x-\epsilon)$ is non-decreasing in $\epsilon$. Now,
regardless of whether exercise occurs before the change point or not,
because the exercise boundaries are non-increasing and because
$x^{*}_{1}(t)\leq x^{*}_{0}(t)$ for all $t\in[0,T)$, we always have
\begin{equation}
\tau_{0}(x-\epsilon) \leq \inf\{t\in[0,T):(x-\epsilon)H^{(0)}_{t}\geq
x\}\wedge T,
\label{eq:tau0xe}
\end{equation}
which is the analogue of \eqref{eq:tau1xe} for the case $i=0$. With
\eqref{eq:tau0xe} in place, the rest of the proof follows the same
arguments as in the $i=1$ case, so we obtain $v^{\prime}_{0}(x-)\geq
1$, and the proof of smooth fit is complete.

\end{proof}

\subsection{Doob-Meyer decomposition of full information
  Snell envelope}
\label{subsec:dmdfise}

With the free boundary PDE and smooth pasting condition established
for the full information value function, we can now turn to the proof
of Proposition \ref{prop:tvfull}, characterising the continuity over
$[0,T)$ and left limits $x^{*}_{i}(T-).\,i=0,1$ of the exercise
boundaries as we approach maturity. The key to rigorously establishing
this result turns out to be the Doob-Meyer decomposition of the
supermartingale that is the full information Snell envelope, in other
words, the discounted full information ESO value process. This in turn
leads to the decompositions below for the discounted processes
$(\e^{-rt}v_{i}(t,X_{t}))_{t\in[0,T]},\,i=0,1$, where we recall the
representation \eqref{eq:fivp} for the ESO value process $V$ in terms
of the processes $(v_{i}(t,X_{t}))_{t\in[0,T]},\,i=0,1$.

\begin{theorem}[Doob-Meyer decomposition of full information Snell
envelope] 
\label{thm:dmdecomp}
  
The processes $(\e^{-rt}v_{i}(t,X_{t}))_{t\in[0,T]},\,i=0,1$, admit
the decomposition
\begin{equation}
\e^{-rt}v_{i}(t,X_{t}) = v_{i}(0,X_{0}) + M^{i}_{t} - A^{i}_{t}, \quad
t\in[0,T], \quad  i=0,1,
\label{eq:dmdecomp}
\end{equation}
where
\begin{equation*}
M^{i}_{t} :=  \sigma\int_{0}^{t}\e^{-rs}X_{s}\frac{\partial v_{i}}{\partial
  x}(s,X_{s})\ud W_{s}, \quad t\in[0,T], \quad  i=0,1,
\end{equation*}
are $(\mathbb{P},\mathbb{F})$-martingales, and
\begin{equation*}
A^{i}_{t} :=
\int_{0}^{t}\e^{-rs}\left((r-\mu_{i})X_{s}-rK\right)
\mathbbm{1}_{\{X_{s}\geq  x^{*}_{i}(s)\}}\ud s, \quad t\in[0,T], \quad
i=0,1,
\end{equation*}
are non-decreasing finite variation processes.

Consequently, the exercise boundaries $x^{*}_{i}(\cdot),\,i=0,1$
satisfy
\begin{equation}
(r-\mu_{i})x^{*}_{i}(t) - rK \geq 0, \quad \mbox{for
  Lebesgue-almost-every $t\in[0,T)$}, \quad i=0,1,  
\label{eq:xislb}
\end{equation}
and in particular we have the terminal left-limit lower bounds
\begin{equation}
x^{*}_{i}(T-) \geq \left(\frac{r}{r-\mu_{i}}\right)K, \quad i=0,1.  
\label{eq:xilb}
\end{equation}

\end{theorem}

\begin{proof}
  
We have identified the full information discounted ESO value process
$(\e^{-rt}V_{t})_{t\in[0,T]}$ with the Snell envelope of the reward
process, the smallest c\`adl\`ag
$(\mathbb{P},\mathbb{F})$-supermartingale which dominates the reward
process. We recall the representation \eqref{eq:fivp} of the value
process $V$ in terms of the value function processes
$(v_{i}(t,X_{t}))_{t\in[0,T]},\,i=0,1$, and also recall that the
process $Y$ is equal to either $0$ (before the change point) or $1$
(from the change point onwards). The smooth fit condition in Theorem
\ref{thm:spfi}, along with the free boundary PDE system in Proposition
\ref{prop:fbpfivf}, guarantee that the first partial derivatives
$\partial v_{i}(\cdot,\cdot)/\partial x,\,i=0,1$, are continuous, even
across their respective exercise boundaries $x^{*}_{i}(\cdot)$. We
know also from Proposition \ref{prop:fbpfivf} that the second partial
derivatives $\partial^{2}v_{i}(\cdot,\cdot)/\partial x^{2},\,i=0,1$,
are continuous in their respective continuation regions
$\mathcal{C}_{i},\,i=0,1$, and equal to zero in their respective
stopping regions $\mathcal{S}_{i},\,i=0,1$. Though these second
derivatives might not be continuous across their respective exercise
boundaries, we may nevertheless apply the generalised It\^o formula
for convex functions (for instance, Karatzas and Shreve \cite [Theorem
3.7.1]{ks91}) to the (discounted) ESO value process. In differential
form, we have
\begin{eqnarray*}
\ud(\e^{rt}V_{t}) & = & \e^{-rt}\left\{(1-Y_{t})\left(\ud v_{0}(t,X_{t}) -
rv_{0}(t,X_{t})\ud t + \lambda(v_{1}(t,X_{t})-v_{0}(t,X_{t}))\ud
t\right)\right. \\
& + & \left.Y_{t}\left(\ud v_{1}(t,X_{t}) - rv_{1}(t,X_{t})\ud
t\right)\right\},     
\end{eqnarray*}
where, of course, the term involving $\lambda$ is due to the
possibility of the change point occurring in the next instant. Then,
using the generalised It\^o rule on the functions $v_{i}(\cdot,\cdot)$
and integrating over $[0,t]$ for $t\in[0,T]$, we obtain
\begin{eqnarray}
\e^{-rt}V_{t} - V_{0} & = & 
(1-Y_{t})\left(\sigma\int_{0}^{t}\e^{-rs}X_{s}\frac{\partial
v_{0}}{\partial x}(s,X_{s})\ud W_{s}\right. \nonumber \\
& - & \left.
\int_{0}^{t}\e^{-rs}\left((r-\mu_{0})X_{s}-rK\right)\mathbbm{1}_{\{X_{s}\geq
x^{*}_{0}(s)\}}\ud s\right) \nonumber \\
& + & Y_{t}\left(\sigma\int_{0}^{t}\e^{-rs}X_{s}\frac{\partial
v_{1}}{\partial x}(s,X_{s})\ud W_{s}\right. \nonumber \\
& - & \left.
\int_{0}^{t}\e^{-rs}\left((r-\mu_{1})X_{s}-rK\right)\mathbbm{1}_{\{X_{s}\geq
x^{*}_{1}(s)\}}\ud s\right), \quad t\in[0,T].
\label{eq:dmd}      
\end{eqnarray}
In applying the generalised It\^o rule to obtain \eqref{eq:dmd}, we
have used the aforementioned properties of the functions
$v_{i}(\cdot,\cdot),\,i=0,1$ (that is, the PDEs satisfied by these
functions in the respective continuation regions, along with their
analytic forms in the respective stopping regions), with the second
derivative of a convex function considered as a measure (see for
example Karatzas and Shreve \cite[equation (3.6.47)]{ks91}).

Now, in \eqref{eq:dmd}, the stochastic integral terms are
$(\mathbb{P},\mathbb{F})$-martingales, since the discount factor and
partial derivative terms are bounded and the stock price process is
square-integrable: $\mathbb{E}[X^{2}_{t}]<\infty$ for any
$t\in[0,T]$. Then, recalling once again the representation
\eqref{eq:fivp} for the value process $V$, we have that in both
\eqref{eq:fivp} and \eqref{eq:dmd} above, one either has $Y_{t}=0$ or
$Y_{t}=1$ on a mutually exclusive basis, so only one of the
martingales in \eqref{eq:dmd} contributes at any particular time. The
same also applies to the finite variation terms on the right-hand-side
of \eqref{eq:dmd}, which is thus the (unique) Doob-Meyer decomposition
of the supermartingale $(\e^{-rt}V_{t})_{t\in[0,T]}$ into a martingale
minus a non-decreasing process. This establishes the decompositions in
\eqref{eq:dmdecomp}, and also the non-decreasing property of the
finite variation processes in \eqref{eq:dmd}, and thus in
\eqref{eq:dmdecomp}. Since $\mathbb{P}[X_{t}\geq x^{*}_{i}(t)]>0$ for
$i=0,1$ and for Lebesgue-almost every $t\in[0,T)$, the non-decreasing
property implies that the exercise boundaries must satisfy
\eqref{eq:xislb} and, in particular, \eqref{eq:xilb} must hold.

\end{proof}

We can now establish Proposition \ref{prop:tvfull}.

\begin{proof}[Proof of Proposition \ref{prop:tvfull}]

It is clear that at maturity itself, exercise will not occur below the
strike, so we must have $x^{*}_{i}(T)=K,\,i=0,1$.

We have established in Corollary \ref{corr:fiexbds} that the
exercise thresholds $x^{*}_{i}(\cdot),\,i=0,1$ are non-increasing
and right-continuous over $[0,T)$, with lower bounds
$x^{*}_{i}(T-),\,i=0,1$ given in \eqref{eq:xilb}. With
$\mu_{i}<r,\,i=0,1$, we first refine this lower bound to be the
right-hand-side of \eqref{eq:xistm}, then we show that in fact we
have the equality \eqref{eq:xistm}. For $\mu_{i}<r,\,i=0,1$, we can
distinguish two cases:

\begin{itemize}
    
\item for $0\leq \mu_{i}<r$, we have $x^{*}_{i}(T-)\geq
(r/(r-\mu_{i}))K\geq K$;

\item for $\mu_{i}<0\leq r$, because it is never optimal to exercise
below the strike, we have $x^{*}_{i}(T-)\geq K >(r/(r-\mu_{i}))K$.

\end{itemize}

We thus have, in all cases, the refined lower bound
\begin{equation*}
x^{*}_{i}(T-) \geq
\max\left(K,\left(\frac{r}{r-\mu_{i}}\right)K\right), \quad i=0,1.
\end{equation*}
We now show that in fact we have the equality
\eqref{eq:xistm}. Suppose, to the contrary, that we have
$x^{*}_{i}(T-)>\max\left(K,\left(r/(r-\mu_{i})\right)K\right),\,i=0,1$. For
each $i=0,1$, consider a value
$x_{i}\in\left(
\max\left(K,\left(r/(r-\mu_{i})\right)K\right),x^{*}_{i}(T-)\right)$. Then,
for $0\leq t<T$, we have $(t,x_{i})\in\mathcal{C}_{i},\,i=0,1$, so
that $v_{i}(t,x)>(x_{i}-K)^{+}=x_{i}-K$. Using temporal continuity of
$v_{i}(\cdot,\cdot)$, we thus obtain
$v_{i}(T,x)=\lim_{t\uparrow T}v_{i}(t,x)> x_{i}-K$. But, on the other
hand, we know that at maturity we have
$v_{i}(T,x)=(x_{i}-K)^{+}=x_{i}-K$, so we have a contradiction. Thus,
\eqref{eq:xistm} holds.

Finally, let us show that the exercise thresholds
$x^{*}_{i}(\cdot),\,i=0,1$ are left-continuous over $[0,T)$, thus
establishing the claimed continuity. To prove left-continuity we shall
suppose $x^{*}_{i}(t_{i}-)>x^{*}_{i}(t_{i}),\i=0,1$ for some
$t_{i}\in(0,T)$ and obtain a contradiction. Under this assumption,
take
$x_{i}:=\frac{1}{2}(x^{*}_{i}(t_{i}-)+x^{*}_{i}(t_{i}))>x^{*}_{i}(t_{i})\geq
K,\,i=0,1$ (of course, not the same $x_{i}$ as in the previous
paragraph). Observe that $(t_{i},x_{i})\in\mathcal{S}_{i},\,i=0,1$ but
that $(t,x_{i})\in\mathcal{C}_{i},\,i=0,1$ for $t\in(0,t_{i})$. For
each $i=0,1$, let $t\in(0,t_{i})$ and $x\in(x_{i},x^{*}_{i}(t))$ be
given, so that (as for $x_{i}$) we have
$(t_{i},x)\in\mathcal{S}_{i},\,i=0,1$ but
$(t,x)\in\mathcal{C}_{i},\,i=0,1$ for $t\in(0,t_{i})$.

Now use the fact that $v_{i}(\cdot,\cdot)$ solves a given PDE in
$\mathcal{C}_{i}$, as follows: for $v_{0}(\cdot,\cdot)$, use
\eqref{eq:pde1} along with the ordering of the value functions and
time decay (properties (2) and (3) in Lemma \ref{lem:cmtdfi}), while
for $v_{1}(\cdot,\cdot)$, use \eqref{eq:pde2} and time decay, to
conclude that
\begin{equation*}
\frac{1}{2}\sigma^{2}x^{2}\frac{\partial^{2}v_{i}}{\partial
x^{2}}(t.x) \geq rv_{i}(t,x) - \mu_{i}x\frac{\partial
v_{i}}{\partial x}(t,x), \quad (t,x)\in\mathcal{C}_{i}, \quad i=0,1.
\end{equation*}

Now consider separately the cases (i) $\mu_{i}<0\leq r$ and (ii)
$0\leq \mu_{i}< r$. In case (i) we have $- \mu_{i}x\frac{\partial
  v_{i}}{\partial x}(t,x)>0$; using this and $v_{i}(t,x)>x-K$ in
$\mathcal{C}_{i}$, we conclude that $\frac{\partial^{2}v_{i}}{\partial
  x^{2}}(t.x) \geq \epsilon>0$, for some $\epsilon>0$. In case (ii),
usiing that $x\to v_{i}(\cdot,x)$ is non-decreasing and convex, so that
$0\leq \frac{\partial v_{i}}{\partial x}(t,x)\leq 1$, and once again
using $v_{i}(t,x)>x-K$, we get  
\begin{equation*}
\frac{1}{2}\sigma^{2}x^{2}\frac{\partial^{2}v_{i}}{\partial
  x^{2}}(t.x) \geq rv_{i}(t,x) - \mu_{i}x > r(x-K) - \mu_{i}x =
(r-\mu_{i})x - rK, \quad (t,x)\in\mathcal{C}_{i}, \quad i=0,1.
\end{equation*}
But $x>x^{*}_{i}(t_{i})$ implies that (with $0\leq \mu_{i}< r$),
$(r-\mu_{i})x - rK>(r-\mu_{i})x^{*}_{i}(t_{i}) - rK\geq 0$, on using
\eqref{eq:xislb}, and so once again we conclude that
$\frac{\partial^{2}v_{i}}{\partial x^{2}}(t.x) \geq \epsilon>0$, for
some $\epsilon>0$.

Thus, in either case we have
\begin{equation*}
\frac{\partial^{2}v_{i}}{\partial
  x^{2}}(t.x) \geq \epsilon>0, \quad \forall \, t\in (0,t_{i}), \quad
x\in (x_{i},x^{*}_{i}(t)), \quad i=0,1.  
\end{equation*}
Then, with $\varphi(\xi):=(\xi-K)^{+}=\xi-K$ (in the region of
interest) and $x\in (x_{i},x^{*}_{i}(t_{i}-))$ (so that
$(t,x)\in\mathcal{C}_{i}$ for $t\in(0,t_{i})$ but
$(t_{i},x)\in\mathcal{S}_{i}$), we compute
\begin{equation*}
v_{i}(t,x) - \varphi(x) =
\int_{x^{*}_{i}(t)}^{x}\int_{x^{*}_{i}(t)}^{u}\left(\frac{\partial^{2}v_{i}}{\partial
x^{2}}(t,\xi) - \varphi^{\prime\prime}(\xi)\right)\ud\xi\ud u \geq
\frac{1}{2}\epsilon(x-x^{*}_{i}(t))^{2}, \quad i=0,1,
\end{equation*}
where we have used the value-matching and smooth pasting relations
$v_{i}(t,x^{*}_{i}(t))=\varphi(x^{*}_{i}(t))$ and
$\frac{\partial v_{i}}{\partial
  x}(t,x^{*}_{i}(t))=\varphi^{\prime}(x^{*}_{i}(t))$. Finally, letting
$t\uparrow t_{i}$ and using the continuity of $v_{i}(\cdot,\cdot)$, we
get $v_{i}(t_{i},x) \geq x_{i}-K +
\frac{1}{2}\epsilon(x-x^{*}_{i}(t_{i}-))^{2} > x_{i}-K$, which implies
that $(t_{i},x)\in\mathcal{C}_{i},\,i=0,1$. But this contradicts our
earlier assertion that $(t_{i},x)\in\mathcal{S}_{i},\,i=0,1$, and the
proof is complete.

\end{proof}

\section{The partial information ESO problem}
\label{sec:piesop}

We now turn to the partial information problem (\ref{eq:piproblem}),
over $\widehat{\mathbb F}$-stopping times, with model dynamics given
by Lemma \ref{lem:ofd}. In particular, the stock price drift is
$\mu(\widehat{Y})$, defined by
\begin{equation*}
\mu(\widehat{Y}_{t}) := \mu_{0} - \sigma\eta\widehat{Y}_{t}, \quad t\in[0,T],  
\end{equation*}
which we see is the partial information analogue of the full
information drift in \eqref{eq:mufi}.

The partial information value function
$u:[0,T]\times\mathbb R_{+}\times[0,1]\to\mathbb R_{+}$ is defined by
\begin{equation}
u(t,x,y) := \sup_{\tau\in\widehat{\mathcal{T}}_{t,T}}\mathbb
E\left[\left.{\rm e}^{-r(\tau-t)}(X_{\tau}-K)^{+}\right\vert
  X_{t}=x,\widehat{Y}_{t}=y\right], \quad t\in[0,T],
\label{eq:pivf}
\end{equation}
subject to the $(\mathbb{P},\widehat{\mathbb{F}})$-dynamics of the
two-dimensional diffusion $(X,\widehat{Y})$ as given in
\eqref{eq:dSobs} and \eqref{eq:whY}, and the ESO value process $U$ in
\eqref{eq:piproblem} is given as
\begin{equation*}
U_{t} = u(t,X_{t},\widehat{Y}_{t}), \quad t\in[0,T].  
\end{equation*}
For $0\leq s\leq t\leq T$, write
$(X_{t},\widehat{Y}_{t})\equiv(X^{s,x,y}_{t},\widehat{Y}^{s,y}_{t})$
for the value of this diffusion given $(X_{s},\widehat{Y}_{s})=(x,y)$.
Define
\begin{equation*}
G^{s,y}_{t} :=  \exp\left\{\left(\mu_{0} - \frac{1}{2}\sigma^{2}\right)(t
- s) - \sigma\eta\int_{s}^{t}\widehat{Y}^{s,y}_{u}\ud u +
\sigma(\widehat{W}_{t} - \widehat{W}_{s})\right\}, \quad 0\leq s\leq
t\leq T, 
\end{equation*}
so we have
\begin{equation}
X^{s,x,y}_{t} = xG^{s,y}_{t}, \quad 0\leq s\leq t\leq T.
\label{eq:Xsolnpi} 
\end{equation}
When $s=0$, write
$(X^{x,y}_{t},\widehat{Y}^{y}_{t})\equiv(X^{0,x,y}_{t},\widehat{Y}^{0,y}_{t})$
and $G^{y}_{t}\equiv G^{0,y}_{t}$ for $t\in[0,T]$, so that
\begin{equation*}
X^{x,y}_{t} = xG^{y}_{t}, \quad t\in[0,T].  
\end{equation*}
The partial information value function in \eqref{eq:pivf} is thus
\begin{equation*}
u(t,x,y) = \sup_{\tau\in\widehat{\mathcal{T}}_{t,T}}\mathbb
E\left[\e^{-r(\tau-t)}(xG^{t,y}_{\tau} - K)^{+}\right], \quad
(t,x,y)\in[0,T]\times\mathbb R_{+}\times[0,1]. 
\end{equation*}
Using the time-homogeneity of the diffusion $(X,\widehat{Y})$,
optimising over $\widehat{\mathcal{T}}_{t,T}$ is equivalent to
optimising over $\widehat{\mathcal{T}}_{0,T-t}$, so the value function
can be re-cast into the form
\begin{equation}
u(t,x,y) = \sup_{\tau\in\widehat{\mathcal{T}}_{0,T-t}}\mathbb
E\left[\e^{-r\tau}(xG^{y}_{\tau} - K)^{+}\right].
\label{eq:pivfrep} 
\end{equation}
From this representation, elementary properties of the ESO partial
information value function can be derived, largely in a similar manner
to the proof of Lemma \ref{lem:cmtdfi} in the full information case
(but proving monotonicity in $y$ is more involved, as we shall
see).

\begin{remark}[Minimal conditions for early exercise: partial information]
\label{rem:murpi}

Similarly to the full information case, if the drift process
$\mu(\widehat{Y})$ of the stock satisfies $\mu(\widehat{Y})\geq r$
almost surely, then the reward process is a
$(\mathbb{P},\widehat{\mathbb{F}})$-sub-martingale, so no early
exercise is optimal, and the American ESO value coincides with that of
its European counterpart.
 
\end{remark}

\begin{lemma}[Convexity, monotonicity, time decay: partial information]
\label{lem:cmtdpi}

The function $u:[0,T]\times\mathbb R_{+}\times[0,1]$ in
(\ref{eq:pivf}) characterising the partial information ESO value
function has the following properties: 

\begin{enumerate}

\item For $(t,y)\in[0,T]\times[0,1]$, the map $x\to u(t,x,y)$ is
  convex and non-decreasing.

\item For $(t,x)\in[0,T]\times\mathbb R_{+}$, the map $y\to u(t,x,y)$
  is non-increasing.

\item For $(x,y)\in\mathbb R_{+}\times[0,1]$, the map $t\to u(t,x,y)$
  is non-increasing.

\end{enumerate}  

\end{lemma}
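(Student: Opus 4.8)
The plan is to dispatch Properties (1) and (3) exactly as in the full information Lemma \ref{lem:cmtdfi}, and to concentrate the real work on Property (2).

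For Property (1), I would argue directly from the representation (\ref{eq:pivfrep}). For each fixed $\omega$ and each fixed $\tau\in\widehat{\mathcal{T}}_{0,T-t}$, the map $x\mapsto xG^{y}_{\tau}$ is linear and non-negative, so composing with the convex, non-decreasing payoff $z\mapsto(z-K)^{+}$ shows $x\mapsto(xG^{y}_{\tau}-K)^{+}$ is convex and non-decreasing. Both properties survive the expectation $\mathbb{E}[\e^{-r\tau}\,\cdot\,]$ and the pointwise supremum over $\tau$, giving convexity and monotonicity of $x\mapsto u(t,x,y)$. For Property (3), note that $t'\geq t$ gives $T-t'\leq T-t$ and hence $\widehat{\mathcal{T}}_{0,T-t'}\subseteq\widehat{\mathcal{T}}_{0,T-t}$; the supremum defining $u$ is thus over a larger set for smaller $t$, so $t\mapsto u(t,x,y)$ is non-increasing. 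This is precisely the part (3) argument of Lemma \ref{lem:cmtdfi}, made available by the time-homogeneity recasting (\ref{eq:pivfrep}).

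The substance is Property (2). The $y$-dependence enters only through $\widehat{Y}^{0,y}$ inside $G^{y}_{\tau}$, via the term $-\sigma\eta\int_{0}^{\tau}\widehat{Y}^{0,y}_{u}\,\ud u$; intuitively a larger initial estimate $y$ raises the filtered probability that the drift has already fallen to $\mu_{1}$, depressing the stock and hence the call. To make this rigorous, the crucial step is a pathwise comparison (stochastic flow monotonicity) for the one-dimensional SDE (\ref{eq:whY}): if $y_{1}\leq y_{2}$ then $\widehat{Y}^{0,y_{1}}_{t}\leq\widehat{Y}^{0,y_{2}}_{t}$ almost surely for all $t\in[0,T]$. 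Both processes are driven by the \emph{same} innovations Brownian motion $\widehat{W}$ and share the same drift $b(y)=\lambda(1-y)$ and diffusion $a(y)=-\eta y(1-y)$, which are polynomial, hence Lipschitz on the compact $[0,1]$; since the dynamics keep the process in $[0,1]$ (with absorption at $1$), pathwise uniqueness holds, and two trajectories with ordered initial data cannot cross, for if they met they would coincide thereafter. I expect this to be the main obstacle: one must verify that the solution stays in $[0,1]$ so the coefficients remain Lipschitz despite the degeneracy of $a$ at the endpoints $0,1$, and then invoke the appropriate one-dimensional comparison/flow result (e.g. Karatzas and Shreve \cite{ks98}, Proposition 5.2.18, or its stochastic-flow refinement). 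This is the place where the ``stochastic flow ideas'' alluded to earlier are required. It is also worth recording that the class $\widehat{\mathcal{T}}_{0,T-t}$ is common to all $y$, since $\widehat{W}$ generates $\widehat{\mathbb{F}}$ and is held fixed under the comparison, so the two suprema may legitimately be compared term by term.

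Granting $\widehat{Y}^{0,y_{1}}_{t}\leq\widehat{Y}^{0,y_{2}}_{t}$, the remainder is routine. Monotonicity of $u\mapsto\int_{0}^{\tau}\widehat{Y}^{0,y}_{u}\,\ud u$ in $y$, together with $\sigma\eta>0$, gives $G^{y_{1}}_{\tau}\geq G^{y_{2}}_{\tau}$ pathwise, whence $xG^{y_{1}}_{\tau}\geq xG^{y_{2}}_{\tau}$ and, by monotonicity of $z\mapsto(z-K)^{+}$, the inequality $(xG^{y_{1}}_{\tau}-K)^{+}\geq(xG^{y_{2}}_{\tau}-K)^{+}$. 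Applying $\mathbb{E}[\e^{-r\tau}\,\cdot\,]$ and then taking the supremum over $\tau\in\widehat{\mathcal{T}}_{0,T-t}$ preserves the inequality, yielding $u(t,x,y_{1})\geq u(t,x,y_{2})$; that is, $y\mapsto u(t,x,y)$ is non-increasing.
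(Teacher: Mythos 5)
Your proposal is correct, and parts (1) and (3) coincide with the paper's treatment (the paper simply omits them, deferring to the full information Lemma \ref{lem:cmtdfi}). For the substantive part (2), however, you take a genuinely different route. The paper also reduces everything to monotonicity of the flow $y\to\widehat{Y}_{t}(y)$, but establishes it quantitatively: in Proposition \ref{prop:flow} it passes to the likelihood ratio $\Phi=\widehat{Y}/(1-\widehat{Y})$ and changes measure to $\mathbb{P}^{*}$ defined in (\ref{eq:Pstar}), under which $\Phi$ satisfies the decoupled linear SDE (\ref{eq:sdephi}) and admits the explicit solution (\ref{eq:Phirep}); differentiating gives the exact formula (\ref{eq:Yderiv}), $\partial\widehat{Y}_{t}/\partial y=\e^{\lambda t}\Lambda_{t}\left((1-\widehat{Y}_{t}(y))/(1-y)\right)^{2}\geq 0$, which is then transferred back to $\mathbb{P}$ by equivalence of the measures (Remark \ref{rem:flow}). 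You instead invoke the classical one-dimensional comparison theorem (same driving Brownian motion, identical Lipschitz coefficients on $[0,1]$, pathwise uniqueness forbidding crossing) to get $\widehat{Y}^{0,y_{1}}_{t}\leq\widehat{Y}^{0,y_{2}}_{t}$ a.s.\ for $y_{1}\leq y_{2}$, which suffices for the monotone-comparison-through-$G^{y}_{\tau}$ conclusion that both proofs share. Your route is more elementary and self-contained for this lemma alone; you correctly flag the only delicate points, namely invariance of $[0,1]$ (which holds, since the diffusion coefficient degenerates and the drift points inward at both endpoints) and the $y$-independence of the stopping-time class. What the paper's heavier machinery buys is the explicit derivative formula and, more importantly, the representation (\ref{eq:Phirep}) and the measure $\mathbb{P}^{*}$, which are reused later (Lemma \ref{lem:change} and the path-dependence formula (\ref{eq:PhirepX}) driving the numerical scheme), so the detour is not wasted there. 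One small slip: the comparison result you want, Proposition 5.2.18, is in Karatzas and Shreve \cite{ks91}, not \cite{ks98}.
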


\begin{proof}

The proofs of the first and third properties are virtually identical
to the proofs of the corresponding properties for the full information
case in Lemma \ref{lem:cmtdfi}: that is, convexity and monotonicity of
$x\to u(t,x,y)$ follow directly from the corresponding properties of
the payoff map $x\to(x-K)^{+}$, while the time decay property that
$t\to u(t,x,y)$ is non-increasing follows directly from the fact that
the exercise opportunities at an earlier time contain all the exercise
opportunities available at a later time, given the time-homogeneity of
the diffusion $(X,\widehat{Y})$. That is, in \eqref{eq:pivf} we have
$\widehat{\mathcal{T}}_{t,T}\supseteq\widehat{\mathcal{T}}_{t^{\prime},T}$
for $t^{\prime}\geq t$ (equivalently, in \eqref{eq:pivfrep}, we have
$\widehat{\mathcal{T}}_{0,T-t}\supseteq\widehat{\mathcal{T}}_{0,T-t^{\prime}}$).

Let us focus therefore on the second claim, that the map
$y\to u(t,x,y)$ is non-increasing. In \eqref{eq:pivfrep}, the
quantity $G^{y}_{\tau}$ is the value at
$\tau\in\widehat{\mathcal{T}}_{0,T-t}$ of the process $G^{y}$ given by
\begin{equation} 
G^{y}_{t} := \exp\left(\left(\mu_{0} - \frac{1}{2}\sigma^{2}\right)t
+ \sigma \widehat{W}_{t} - \sigma\eta\int_{0}^{t}\widehat{Y}^{y}_{s}\ud
s\right), \quad t\in[0,T].
\label{eq:Gy}  
\end{equation}
From \eqref{eq:Gy} and \eqref{eq:pivfrep}, the desired monotonicity of
the map $y\to u(t,x,y)$ will follow if we can show that the process
$\widehat{Y}^{y}\equiv\widehat{Y}(y)$, seen as a function of the
initial value $y$, that is, as a \emph{stochastic flow}, is
non-decreasing with respect to $y$:
\begin{equation}
\frac{\partial\widehat{Y}_{t}}{\partial y}(y) \geq 0, \quad
\mbox{almost surely}, \quad t\in[0,T].  
\label{eq:yflowi}
\end{equation}
The meaning of \eqref{eq:yflowi} is that for almost all
$\omega\in\Omega$, we consider the process $\widehat{Y}$ with initial
value $y\in[0,1)$ as a function of $y$, so we have
$Y_{t}(y)\equiv Y_{t}(y,\omega)$, and the theory of stochastic flows
(for example Kunita \cite[Chapter 4]{k97}) guarantees that we may
choose versions of $\widehat{Y}(y)$ which, for each $t\in[0,T]$ and
almost all $\omega\in\Omega$, are diffeomorphisms in $y$ from
$[0,1)\to[0,1]$. In other words, the map $y\to\widehat{Y}(\omega,y)$
is smooth, and one can compute the derivative of
$\widehat{Y}(\omega,y)$ with respect to $y$ for almost all
$\omega\in\Omega$. We do this Proposition \ref{prop:flow} below, to
give \eqref{eq:yflowi}, and this completes the proof.

\end{proof}

\subsection{The filtered change point stochastic flow}
\label{subsec:flow}

Consider the solution to the SDE \eqref{eq:whY} for $\widehat{Y}$ for
some initial condition $\widehat{Y}_{0}=y \in[0,1)$. Write
$\widehat{Y}(y)=(\widehat{Y}_{t}(y))_{t\in[0,T]}$ for this
process. Using the theory of stochastic flows (see for instance Kunita
\cite{k97}, Chapter 4), we may choose versions of $\widehat{Y}(y)$
which, for each $t\in[0,T]$ and almost all $\omega\in\Omega$, are
diffeomorphisms in $y$ from $[0,1)\to[0,1]$. In other words, the map
$y\to\widehat{Y}(y)$ is smooth.  (See El Karoui et al.~\cite{elkjbsh98}
and Monoyios and Ng \cite{mman11} for other applications of these
ideas to American claims and ESOs, respectively.) 

We wish to show the property (\ref{eq:yflowi}). To achieve this, we
shall look at the flow of the so-called likelihood ratio $\Phi$,
defined for $\hat{Y} \in [0, 1)$ by
\begin{equation}
\Phi_{t} := \frac{\widehat{Y}_{t}}{1-\widehat{Y}_{t}}, \quad
t\in[0,T].
\label{eq:Phi}
\end{equation}
To examine the flow of $\Phi$, it turns out to be helpful to define
the measure $\mathbb P^{*}\sim\mathbb P$ on
$\widehat{\mathcal{F}}_{T}$ by
\begin{equation}
\Gamma_{t} := \left.\frac{\ud\mathbb P^{*}}{\ud\mathbb
P}\right\vert_{\widehat{\mathcal{F}}_{t}} 
= \mathcal{E}(\eta\widehat{Y}\cdot\widehat{W})_{t}, \quad t\in[0,T],  
\label{eq:Pstar}
\end{equation}
where $\mathcal{E}(\cdot)$ denotes the stochastic exponential, and
$(\widehat{Y}\cdot\widehat{W})\equiv\int_{0}^{\cdot}\widehat{Y}_{s}\ud
\widehat{W}_{s}$ denotes the stochastic integral. Since $\widehat{Y}$
is bounded, the Novikov condition is satisfied and $\mathbb P^{*}$ is
indeed a probability measure equivalent to $\mathbb P$.

By Girsanov's Theorem the process
\begin{equation*}
W^{*}_{t} := \widehat{W}_{t} - \eta\int_{0}^{t}\widehat{Y}_{s}\ud s,
\quad t\in[0,T],  
\end{equation*}
is a $(\mathbb{P}^{*},\widehat{\mathbb{F}})$ Brownian motion. Using
this along with the It\^o formula, the dynamics of $(X,\Phi)$ with
respect to $(\mathbb P^{*},\widehat{\mathbb F})$ are given by
\begin{eqnarray}
\ud X_{t} & = & \mu_{0}X_{t}\ud t + \sigma X_{t}\ud W^{*}_{t},
\label{eq:sdeXstar} \\
\ud\Phi_{t} & = & \lambda(1+\Phi_{t})\ud t -\eta\Phi_{t}\ud W^{*}_{t}.  
\label{eq:sdephi}
\end{eqnarray}
Equations (\ref{eq:sdeXstar}) and (\ref{eq:sdephi}) exhibit an
interesting feature in that $X$ and $\Phi$ become decoupled under
$\mathbb{P}^{*}$. Similar measure changes have been employed by
D{\'e}camps et al.~\cite{dmv05,dmv09}, Klein \cite{klein09} and
Ekstr{\"o}m and Lu \cite{el11} for related optimal stopping problems
involving an investment timing decision or an optimal liquidation
decision when a drift parameter is assumed to take on one of two
values, but the agent is unsure which value pertains in reality. This
corresponds to $\lambda\downarrow 0$ in our set-up, and both $X$ and
$\Phi$ become geometric Brownian motions with respect to
$(\mathbb{P}^{*},\widehat{\mathbb{F}})$, yielding an easier problem,
in that $\Phi$ becomes a deterministic function of $X$. This property,
when combined with the linear payoff function in these papers, allows
for a reduction in dimension under some circumstances in those
works. In our problem, $\Phi$ depends on the entire history of the
Brownian paths, as exhibited in equation \eqref{eq:Phirep} below, and
hence on the history of the stock price, given that we are in the
observation filtration with driving Brownian motion
$\widehat{W}$. This, combined with the non-linear call payoff makes
the aforementioned dimension reduction impossible, and the numerical
solution of the partial information ESO problem is made more complex.

With $\Phi_{0}=\phi$, here is the result which quantifies the
derivative of $\Phi(\phi)$ and hence of $\widehat{Y}(y)$ with respect
to their respective initial conditions, a property which was used in
the proof of Lemma \ref{lem:cmtdpi}.

\begin{proposition}
\label{prop:flow}

Define $\Phi$ by (\ref{eq:Phi}), and define the exponential
$(\mathbb P^{*},\widehat{\mathbb F})$-martingale $\Lambda$ by
\begin{equation*}
\Lambda_{t} := \mathcal{E}(-\eta W^{*})_{t}, \quad t\in [0,T].  
\end{equation*}
Let $\Phi(\phi)$ denote the solution of the SDE (\ref{eq:sdephi}) with
initial condition $\Phi_{0}=\phi\in\mathbb R_{+}$. Then $\Phi(\phi)$ has
the representation
\begin{equation}
\Phi_{t}(\phi) = \e^{\lambda t}\Lambda_{t}\left(\phi +
\lambda\int_{0}^{t}\frac{\e^{-\lambda s}}{\Lambda_{s}}\ud s\right), \quad
t\in[0,T],  
\label{eq:Phirep}
\end{equation}
so that
\begin{equation}
\frac{\partial \Phi_{t}}{\partial \phi}(\phi) = \e^{\lambda
t}\Lambda_{t}, \quad t\in[0,T]. 
\label{eq:Phideriv}
\end{equation}
Consequently, if $\widehat{Y}(y)$ denotes the solution to (\ref{eq:whY})
with initial condition $\widehat{Y}_{0}=y\neq 1$, then
\begin{equation}
\frac{\partial \widehat{Y}_{t}}{\partial y}(y) =
\e^{\lambda t}\Lambda_{t}\left(\frac{1 - \widehat{Y}_{t}(y)}{1 -
y}\right)^{2} \geq 0, \quad t\in[0,T].  
\label{eq:Yderiv}
\end{equation}
  
\end{proposition}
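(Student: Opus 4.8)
The plan is to solve the linear SDE (\ref{eq:sdephi}) in closed form by variation of parameters, and then to read off the two derivatives by inspection. First I would note that (\ref{eq:sdephi}) is linear in $\Phi$, with multiplicative noise $-\eta\Phi_{t}\ud W^{*}_{t}$ and an additive drift contribution $\lambda\ud t$. Writing $N_{t}:=\e^{\lambda t}\Lambda_{t}$ and using $\ud\Lambda_{t}=-\eta\Lambda_{t}\ud W^{*}_{t}$ together with It\^o's formula, one checks that $\ud N_{t}=\lambda N_{t}\ud t-\eta N_{t}\ud W^{*}_{t}$, so $N$ solves the homogeneous version of (\ref{eq:sdephi}). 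I would then look for a solution of the form $\Phi_{t}=Z_{t}N_{t}$ with $Z$ of finite variation; since the cross-variation term vanishes, It\^o's product rule gives $\ud\Phi_{t}=Z_{t}\ud N_{t}+N_{t}\ud Z_{t}$, and matching the drift against (\ref{eq:sdephi}) forces $N_{t}\ud Z_{t}=\lambda\ud t$, i.e.\ $\ud Z_{t}=\lambda\e^{-\lambda t}\Lambda_{t}^{-1}\ud t$ with $Z_{0}=\phi$ (as $N_{0}=1$). Integrating yields exactly (\ref{eq:Phirep}), and a direct It\^o verification confirms that the right-hand side of (\ref{eq:Phirep}) solves (\ref{eq:sdephi}) with $\Phi_{0}=\phi$.

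Next, since neither $\Lambda_{t}$ nor the integral $\int_{0}^{t}\e^{-\lambda s}\Lambda_{s}^{-1}\ud s$ depends on $\phi$, the representation (\ref{eq:Phirep}) is affine in the initial value $\phi$ with slope $\e^{\lambda t}\Lambda_{t}$. Differentiating in $\phi$ therefore gives (\ref{eq:Phideriv}) immediately; moreover $\Lambda$ is a strictly positive stochastic exponential, so this derivative is strictly positive almost surely.

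Finally I would transfer the result to $\widehat{Y}$ through the two algebraic bijections linking the processes: the initial conditions are related by $\phi=y/(1-y)$, and the processes themselves by $\widehat{Y}_{t}=\Phi_{t}/(1+\Phi_{t})$, the inverse of (\ref{eq:Phi}). The chain rule then gives
\begin{equation*}
\frac{\partial\widehat{Y}_{t}}{\partial y}(y)=\frac{\ud\widehat{Y}_{t}}{\ud\Phi_{t}}\cdot\frac{\partial\Phi_{t}}{\partial\phi}(\phi)\cdot\frac{\ud\phi}{\ud y}.
\end{equation*}
Using the identity $1+\Phi_{t}=(1-\widehat{Y}_{t})^{-1}$, one computes $\ud\widehat{Y}_{t}/\ud\Phi_{t}=(1+\Phi_{t})^{-2}=(1-\widehat{Y}_{t})^{2}$, while $\ud\phi/\ud y=(1-y)^{-2}$. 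Substituting these together with (\ref{eq:Phideriv}) produces (\ref{eq:Yderiv}), and nonnegativity is clear from $\Lambda_{t}>0$.

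The solution of the linear SDE and its It\^o verification are routine; the only point requiring genuine care is the chain-rule step, where one must keep the two parametrizations distinct -- $\widehat{Y}$ indexed by $y$ and $\Phi$ indexed by $\phi$, coupled through $\phi=y/(1-y)$ -- and must legitimize differentiating the flow in its initial datum. The latter is precisely the smoothness of $y\mapsto\widehat{Y}(y)$ guaranteed by the stochastic-flow theory invoked at the start of Section \ref{subsec:flow}, so no extra work is needed there.
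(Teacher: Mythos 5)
Your proposal is correct and follows essentially the same route as the paper: the paper simply asserts that the right-hand side of (\ref{eq:Phirep}) solves (\ref{eq:sdephi}) (which you derive constructively via variation of parameters before verifying), reads off (\ref{eq:Phideriv}) from the affine dependence on $\phi$, and obtains (\ref{eq:Yderiv}) by the same chain-rule computation through $\widehat{Y}_{t}=\Phi_{t}/(1+\Phi_{t})$ and $\phi=y/(1-y)$. Your explicit appeal to the stochastic-flow smoothness for differentiating in the initial datum matches the paper's invocation of Kunita at the start of Section \ref{subsec:flow}, so no gap remains.
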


\begin{proof}

It is straightforward to show that $\Phi(\phi)$ as given in
\eqref{eq:Phirep} solves the SDE \eqref{eq:sdephi} with initial
condition $\Phi_{0}=\phi$, and the formula (\ref{eq:Phideriv}) follows
immediately. Then, using
\begin{equation*}
\widehat{Y}_{t}(y) = \frac{\Phi_{t}(\phi)}{1 + \Phi_{t}(\phi)}, \quad
y = \frac{\phi}{1 + \phi}, \quad t\in[0,T],   
\end{equation*}
an exercise in differentiation yields (\ref{eq:Yderiv}).
  
\end{proof}

Observe the second term on the right-hand-side of \eqref{eq:Phirep}
depends on the whole history of $(\Lambda_{s})_{s\in[0,t]}$ over the
time interval $[0,t]$, so that $\Phi$ (and hence $\widehat{Y}$) are
path-dependent. As we are working in the observation filtration, these
processes depend on the history of the stock price itself. This can be
made explicit in some circumstances, as we show for $\Phi$ in equation
\eqref{eq:PhirepX} of Section \ref{subsec:chv}, where the integral
term is written in terms of the stock price path. This path-dependence
is a consequence of the filtering algorithm, and in particular that we
are continuously computing an updated version at each time of the
conditional expectation of a process given observations of the stock
up to that time. It is not uncommon for this updating to generate
path-dependence. This is the ``learning'' aspect of the filtering
algorithm. For some special parameter values, the path-dependence can
sometimes disappear. In this example, for $\lambda=0$ we lose the
history-dependent term in \eqref{eq:Phirep}, reducing to the uncertain
two-value drift model alluded to after \eqref{eq:sdephi}.

\begin{remark}[Completing the proof of Lemma \ref{lem:cmtdpi}]
\label{rem:flow}

Equation \eqref{eq:Yderiv} as derived in the above proof is a
$\mathbb{P}^{*}$-almost sure relation, and so also holds under
$\mathbb{P}$ since these measures are equivalent. This is enough to
complete the proof of Lemma \ref{lem:cmtdpi} as claimed earlier.

\end{remark}

\subsection{Partial information free boundary problem}
\label{eq:fbppieso}

The properties in Lemma \ref{lem:cmtdpi} imply that there exists a
function $x^{*}:[0,T]\times[0,1]\to[K,\infty)$, the optimal exercise
boundary, which is decreasing in time and also in $y$, such that it is
optimal to exercise the ESO as soon as the stock price exceeds the
threshold $x^{*}(t,y)$. Thus, the optimal exercise boundary in the
finite horizon ESO problem under partial information is a surface, and
the continuation and stopping regions
$\widehat{\mathcal{C}},\widehat{\mathcal{S}}$ for the partial
information problem are given by
\begin{eqnarray*}
\widehat{\mathcal C} & := & \{(t,x,y)\in[0,T]\times\mathbb
R_{+}\times[0,1]:u(t,x,y) > (x-K)^{+}\} \\
& = & \{(t,x,y)\in[0,T]\times\mathbb R_{+}\times[0,1]: x<
x^{*}(t,y)\}, \\
\widehat{\mathcal S} & := & \{(t,x,y)\in[0,T]\times\mathbb
R_{+}\times[0,1]:u(t,x,y) = (x-K)^{+}\} \\
& = & \{(t,x,y)\in[0,T]\times\mathbb R_{+}\times[0,1]: x\geq
x^{*}(t,y)\}.   
\end{eqnarray*}

The following lemma gives the left-limiting terminal value
$x^{*}(T-,y)$ of the exercise surface. As in the full information
case, this requires for its proof the free boundary characterisation
of the value function along with a smooth pasting property and also
the Doob-Meyer decomposition of the (partial information) Snell
envelope, so the proof of the lemma will be given in Section
\ref{subsec:dmdpise}, once the required preparation is in place.

\begin{lemma}
\label{lem:tvpartial}

The partial information exercise surface $x^{*}(\cdot,\cdot)$ has
left-limiting value as we approach maturity, given by
\begin{equation}
x^{*}(T-,y) = \max\left(K,\left(\frac{r}{r-(\mu_{0}-\sigma\eta
y)}\right)K\right), \quad y\in[0,1], \quad \mu_{0}-\sigma\eta y <r. 
\label{eq:xstm}
\end{equation}

\end{lemma}

Observe that, since the drift of the stock under the observation filtration
is $\mu(\widehat{Y}):=\mu_{0}-\sigma\eta\widehat{Y}$, the limiting
value in \eqref{eq:xstm} is
\begin{equation*}
x^{*}(T-,y) = \max\left(K,\left(\frac{r}{r-\mu(y)}\right)K\right),  
\end{equation*}
where $\mu(y)$ is the $\widehat{\mathbb{F}}$-drift of the stock when
the filtered change point is equal to $y\in[0,1]$. The last condition
in \eqref{eq:xstm} therefore corresponds to the region of the state
space where the filtered stock drift is less than the interest rate,
and Lemma \ref{lem:tvpartial} is in a similar spirit to the full
information result in Proposition \ref{prop:tvfull}, where we replace
the distinct values $i=0,1$ of the change point process by the
continuum of values in $[0,1]$ for filtered change point process.

Also, by Remark \ref{rem:maptoclassical}, if we invoke a fictitious
``dividend yield'' $\delta(\cdot):=r-\mu(\cdot)$, then we have
$x^{*}(T-y)=\max(K,(r/\delta(y))K)$, so the classical result for the
exercise boundary value at $(T-)$ for no-arbitrage call valuation
extends to the scenario a with a random dividend yield
$\delta(\widehat{Y})$, the same pattern we saw in the full information
problem with random drift $\mu(Y)$.  

We now turn to the free boundary characterisation of the partial
information value function. Let $\mathcal{L}_{X,\widehat{Y}}$ denote
the generator under $\mathbb{P}$ of the two-dimensional process
$(X,\widehat{Y})$ with respect to the observation filtration
$\widehat{\mathbb{F}}$, with dynamics given by (\ref{eq:dSobs}) and
(\ref{eq:whY}). Thus, $\mathcal{L}_{X,\widehat{Y}}$ is defined by
\begin{equation*}
\mathcal{L}_{X,\widehat{Y}}f(t,x,y) := (\mu_{0}-\sigma\eta y)xf_{x}
+ \frac{1}{2}\sigma^{2}x^{2}f_{xx} + \lambda(1-y)f_{y} +
\frac{1}{2}\eta^{2}y^{2}(1-y)^{2}f_{yy} - \sigma\eta xy(1-y)f_{xy},
\end{equation*}
acting on any sufficiently smooth function $f:[0,T]\times\mathbb
R_{+}\times[0,1]$. Define the operator $\mathcal{L}$ by
\begin{equation*}
\mathcal{L} := \frac{\partial}{\partial t} +
\mathcal{L}_{X,\widehat{Y}} - r.
\end{equation*}
The partial information free boundary problem for the ESO is then as
follows.

\begin{proposition}[Free boundary problem: partial information] 
\label{prop:fbppivf}

The partial information ESO value function $u(\cdot,\cdot,\cdot)$
defined in (\ref{eq:pivf}) is the unique solution in
$[0,T]\times\mathbb R_{+}\times[0,1]$ of the free boundary problem
\begin{eqnarray}
\mathcal{L}u(t,x,y) & = & 0, \quad 0\leq x < x^{*}(t,y), \quad
t\in[0,T), \quad y\in[0,1], \label{eq:pdepic}\\  
u(t,x,y) & = & x - K, \quad x \geq x^{*}(t,y), \quad
t\in[0,T), \quad y\in[0,1], \label{eq:bcs}\\
u(T,x,y) & = & (x-K)^{+}, \quad x\in\mathbb R_{+}, \quad
y\in[0,1], \label{eq:bcT} \\
\lim_{x\downarrow 0}u(t,x,y) & = & 0, \quad t\in[0,T), \quad
y\in[0,1]. \label{eq:bcpi0}
\end{eqnarray}

\end{proposition}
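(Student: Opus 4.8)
The plan is to establish the free boundary characterisation by exploiting the general optimal stopping theory for the two-dimensional diffusion $(X,\widehat{Y})$, whose generator $\mathcal{L}_{X,\widehat{Y}}$ has already been computed, and whose continuation/stopping regions are known to be separated by the threshold surface $x^{*}(t,y)$ via the preceding discussion. The value function $u$ is the smallest $\widehat{\mathbb F}$-superharmonic majorant of the payoff, and by the general theory of optimal stopping for Markov processes (as in El Karoui, Lepeltier and Millet \cite{elklm92}, or Peskir and Shiryaev \cite{ps06}), the discounted value process $(\e^{-rt}u(t,X_{t},\widehat{Y}_{t}))_{t\in[0,T]}$ is the Snell envelope of the reward, hence a supermartingale on $[0,T]$ that is a martingale up to the optimal stopping time. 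I would first record that in the continuation region $\widehat{\mathcal C}$, the martingale property of the stopped process forces the drift term in the It\^o expansion of $\e^{-rt}u(t,X_{t},\widehat{Y}_{t})$ to vanish, yielding the PDE $\mathcal{L}u=0$ of \eqref{eq:pdepic}; this requires that $u$ be $C^{1,2,2}$ in the interior of $\widehat{\mathcal C}$, which follows from interior parabolic regularity once one knows $\mathcal{L}u=0$ holds in the distributional or viscosity sense.

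Second, the conditions \eqref{eq:bcs}, \eqref{eq:bcT} and \eqref{eq:bcpi0} are essentially definitional or follow from elementary estimates. On the stopping region $\widehat{\mathcal S}=\{x\geq x^{*}(t,y)\}$ we have $u=(x-K)^{+}=x-K$ by definition of $x^{*}$ and the fact that $x^{*}\geq K$. The terminal condition \eqref{eq:bcT} is immediate from the definition \eqref{eq:pivf}, since at $t=T$ the only admissible stopping time is $\tau=T$, giving $u(T,x,y)=(x-K)^{+}$. The boundary behaviour \eqref{eq:bcpi0} as $x\downarrow 0$ follows from the representation \eqref{eq:pivfrep} together with $X^{x,y}_{\tau}=xG^{y}_{\tau}$ from \eqref{eq:Xsolnpi}: since $0\leq(xG^{y}_{\tau}-K)^{+}\leq xG^{y}_{\tau}$ and $\E[G^{y}_{\tau}]$ is controlled uniformly (the drift $\mu_{0}-\sigma\eta\widehat{Y}$ is bounded), dominated convergence gives $u(t,x,y)\to 0$ as $x\downarrow 0$.

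Third, and this is the genuinely delicate part, I would address uniqueness of the solution to the free boundary problem. The system \eqref{eq:pdepic}--\eqref{eq:bcpi0} is a degenerate parabolic obstacle problem: the operator $\mathcal{L}_{X,\widehat{Y}}$ degenerates at the boundaries $y=0$ and $y=1$ (the coefficient $\frac{1}{2}\eta^{2}y^{2}(1-y)^{2}$ of $f_{yy}$ and the cross term both vanish there), so standard uniformly parabolic uniqueness theory does not apply directly. The plan is to recast the problem as a variational inequality, $\min\{-\mathcal{L}u,\,u-(x-K)^{+}\}=0$, and appeal to a comparison/verification argument: given any solution $\tilde u$ of the free boundary problem with the stated regularity, apply It\^o's formula to $\e^{-rt}\tilde u(t,X_{t},\widehat{Y}_{t})$, use $\mathcal{L}\tilde u\leq 0$ everywhere (with equality on $\widehat{\mathcal C}$) together with $\tilde u\geq(x-K)^{+}$ to deduce that $\e^{-rt}\tilde u$ is a supermartingale dominating the reward, hence $\tilde u\geq u$; then run the stopping time $\tau^{*}=\inf\{s:X_{s}\geq x^{*}(s,\widehat{Y}_{s})\}$ to obtain the reverse inequality $\tilde u\leq u$, giving $\tilde u=u$.

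The main obstacle I anticipate is justifying the application of It\^o's formula across the free boundary and near the degenerate boundaries $y\in\{0,1\}$. Because $u$ need only be $C^{1}$ (not $C^{2}$) across the exercise surface $x^{*}(t,y)$ — the smooth-pasting being exactly the content of the separate Theorem \ref{thm:sppi} — I would invoke an It\^o--Tanaka or Peskir change-of-variable argument (as in Peskir and Shiryaev \cite{ps06}) so that the local-time contribution on $\{x=x^{*}(t,y)\}$ vanishes, or alternatively first mollify and pass to the limit. The degeneracy at $y=0,1$ means the process $\widehat{Y}$ is absorbed at $1$ and the coefficient vanishes, so I would argue that the boundary $y=1$ is not reached from the interior in a way that causes trouble (consistent with the absorbing-state remark following Lemma \ref{lem:ofd}), and that no boundary condition in $y$ need be imposed there precisely because the operator degenerates — a Fichera-type condition analysis confirms that $y=0$ and $y=1$ are non-characteristic inflow/outflow boundaries requiring no extra data. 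Establishing that the candidate regularity genuinely holds up to these degenerate faces is where the real work lies, and I would lean on the smoothness already afforded by the stochastic-flow representation of $\widehat{Y}(y)$ in Proposition \ref{prop:flow} to control the relevant derivatives.
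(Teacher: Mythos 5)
Your proposal has two genuine gaps, both traceable to a misreading of the structure of the problem. First, the diffusion $(X,\widehat{Y})$ is driven by the \emph{single} Brownian motion $\widehat{W}$ (see \eqref{eq:dSobs} and \eqref{eq:whY}), so the second-order coefficient matrix of $\mathcal{L}_{X,\widehat{Y}}$ is rank one at \emph{every} point of the state space: its determinant is $\sigma^{2}x^{2}\cdot\eta^{2}y^{2}(1-y)^{2}-\bigl(\sigma\eta xy(1-y)\bigr)^{2}=0$. The operator is therefore degenerate throughout, not merely on $\{y=0\}\cup\{y=1\}$ as you assert. This breaks your first step as written: ``interior parabolic regularity'' for uniformly parabolic equations cannot upgrade a distributional or viscosity solution of $\mathcal{L}u=0$ to the $C^{1,2,2}$ regularity needed for It\^o, and your Fichera-type analysis at $y\in\{0,1\}$ does not address the rank-one degeneracy in the interior (one would need a hypoellipticity/H\"ormander argument, which you do not supply). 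The paper derives \eqref{eq:pdepic} by a different, load-bearing device that your sketch omits: fix $(t,x,y)\in\widehat{\mathcal{C}}$, solve the terminal-boundary value problem $\mathcal{L}f=0$ on a small cuboid $\mathcal{R}\subset\widehat{\mathcal{C}}$ with $f=u$ on the parabolic boundary, and then \emph{identify} $f\equiv u$ on $\mathcal{R}$ via a martingale and optional-sampling argument using the exit time of $\mathcal{R}$ together with the martingale property of the stopped discounted value process up to $\tau^{*}$. (The paper's own appeal to classical theory of Friedman \cite{friedman64} faces the same degeneracy caveat, but the construction-plus-identification logic is the part your argument is missing.)

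Second, your uniqueness argument proves a different statement, and circularly. The proposition asserts uniqueness for the system \eqref{eq:pdepic}--\eqref{eq:bcpi0} with the boundary $x^{*}$ \emph{given}; the paper's proof accordingly never crosses the boundary: for any solution $g$, It\^o applies classically inside $\widehat{\mathcal{C}}$ where $\mathcal{L}g=0$, so the process $L$ stopped at $\tau^{*}(x,y)$ is a martingale, and optional sampling plus the fact that $\tau^{*}$ attains the supremum in \eqref{eq:pivfrep} gives $g(0,x,y)=\mathbb{E}[\e^{-r\tau^{*}}(xG^{y}_{\tau^{*}}-K)^{+}]=u(0,x,y)$ in one stroke --- no smooth fit, no local time, no supersolution property in $\widehat{\mathcal{S}}$ is needed. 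Your variational-inequality route instead requires $\mathcal{L}\tilde u\leq 0$ globally and $\tilde u\geq(x-K)^{+}$, conditions not contained in \eqref{eq:pdepic}--\eqref{eq:bcpi0}; note that in the stopping region $\mathcal{L}(x-K)=(\mu_{0}-\sigma\eta y-r)x+rK$, whose sign is a nontrivial fact about the location of $x^{*}$ and the parameters, not something you may assume. Moreover, the supermartingale half ($\tilde u\geq u$) needs It\^o across the free boundary of the \emph{competitor} $\tilde u$; Theorem \ref{thm:sppi} establishes smooth pasting only for the value function $u$ itself (and is proved after this proposition), so invoking it for $\tilde u$ is circular, and the mollification alternative is left unexecuted over an everywhere-degenerate operator. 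The one direction you share with the paper --- running $\tau^{*}$ --- is in fact the entirety of the paper's uniqueness proof.
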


\begin{proof}

It is clear that $u$ satisfies the boundary conditions
(\ref{eq:bcs}), (\ref{eq:bcT}) and (\ref{eq:bcpi0}). To verify
(\ref{eq:pdepic}), take a point $(t,x,y)\in\widehat{\mathcal{C}}$ (so
that $x<x^{*}(t,y)$) and a rectangular cuboid
$\mathcal{R}=(t_{\min},t_{\max}) \times (x_{\min},x_{\max}) \times
(y_{\min},y_{\max})$, with
$(t,x,y)\in\mathcal{R}\subset\widehat{\mathcal{C}}$. Let
$\partial\mathcal{R}$ denote the boundary of this region, and let
$\partial_{0}\mathcal{R}:=\partial\mathcal{R}\setminus\left(\{t_{\min}\}
\times (x_{\min},x_{\max}) \times (y_{\min},y_{\max})\right)$ denote
the so-called parabolic boundary of $\mathcal{R}$. Consider the
terminal-boundary value problem
\begin{equation}
\mathcal{L}f=0, \quad \mbox{in} \quad \mathcal{R}, \quad f=u, \quad
\mbox{on} \quad \partial_{0}\mathcal{R}.  
\label{eq:bvp1}
\end{equation}
Classical theory for parabolic PDEs (for instance, Friedman
\cite[Chapter 3]{friedman64}) guarantees the existence of a unique
solution to \eqref{eq:bvp1} with all derivatives appearing in
$\mathcal{L}$ being continuous. We wish to show that $f$ and $u$ agree
on $\mathcal{R}$.

With $(t,x,y)\in\mathcal{R}$ given, define the stopping time
$\tau\in\widehat{\mathcal{T}}_{0,t_{\max}-t}$ by
\begin{equation*}
\tau := \inf\{\rho\in[0,t_{\max}-t):
(t+\rho,xG^{y}_{\rho},\widehat{Y}^{y}_{\rho})
\in\partial_{0}\mathcal{R}\} \wedge (t_{\max}-t),  
\end{equation*}
where the process $G^{y}$ is defined in (\ref{eq:Gy}), 
and define the process $N$ by
\begin{equation*}
N_{\rho} := \e^{-r\rho}f(t+\rho,xG^{y}_{\rho},\widehat{Y}^{y}_{\rho}),
\quad 0\leq \rho\leq t_{\max}-t.  
\end{equation*}
The stopped process $(N_{\rho\wedge\tau})_{0\leq\rho\leq t_{\max}-t}$
is a $(\mathbb{P},\widehat{\mathbb{F}})$-martingale by virtue of the
It\^o formula and the system \eqref{eq:bvp1} satisfied by $f$, and
therefore
\begin{equation}
f(t,x,y) = N_{t} = \mathbb{E}[N_{\tau}] =
\mathbb{E}[\e^{-r\tau}u(t+\tau,xG^{y}_{\tau},\widehat{Y}^{y}_{\tau})], 
\label{eq:fexrep}
\end{equation}
where we have used the boundary condition in (\ref{eq:bvp1}) to obtain
the last equality. 

Since $\mathcal{R}\subset\widehat{\mathcal{C}}$,
$(t+\tau,xG^{y}_{\tau},\widehat{Y}^{y}_{\tau})\in\widehat{\mathcal{C}}$,
so $\tau$ must satisfy
\begin{equation*}
\tau\leq\tau^{*}(t,x,y) := \inf\{\rho\in[0,T-t):
u(t+\rho,xG^{y}_{\rho},\widehat{Y}^{y}_{\rho}) = (xG^{y}_{\rho} -
K)^{+}\}\wedge (T-t).
\end{equation*}
In other words, $\tau$ must be less than or equal to the smallest
optimal stopping time $\tau^{*}(t,x,y)$ for the starting state
$(t,x,y)$. Now, the stopped process
\begin{equation*}
\e^{-r(\rho\wedge\tau^{*}(t,x,y))}u\left(t +
(\rho\wedge\tau^{*}(t,x,y)), xG^{y}_{\rho\wedge\tau^{*}(t,x,y)},
\widehat{Y}^{y}_{\rho\wedge\tau^{*}(t,x,y)}\right), \quad 0\leq \rho\leq T-t,
\end{equation*}
is a martingale, so this and the optional sampling theorem yield that
\begin{equation}
\mathbb{E}\left[\e^{-r\tau}u(t+\tau,xG^{y}_{\tau},\widehat{Y}^{y}_{\tau})\right]
= u(t,x,y).
\label{eq:uexrep}
\end{equation}
Then \eqref{eq:fexrep} and \eqref{eq:uexrep} show that $f$ and $u$
agree on $\mathcal{R}$ (and hence also on $\widehat{\mathcal{C}}$
since $\mathcal{R}\subset\widehat{\mathcal{C}}$ and
$(t,x,y)\in\mathcal{R}$ were arbitrary). Thus, $u$ satisfies
\eqref{eq:pdepic}.

Finally, to show uniqueness, let $g$ defined on the closure of
$\widehat{\mathcal C}$ be a solution to the system
(\ref{eq:pdepic})--(\ref{eq:bcpi0}). For starting state $(0,x,y)$ such
that $x<x^{*}(0,y)$ define
\begin{equation*}
L_{t} := \e^{-rt}g(t,xG^{y}_{t},\widehat{Y}^{y}_{t}), \quad t\in[0,T],  
\end{equation*}
as well as the optimal stopping time for $u(0,x,y)$, given by
\begin{equation*}
\tau^{*}(x,y) := \inf\{t\in[0,T): xG^{y}_{t}\geq
x^{*}(t,\widehat{Y}^{y}_{t})\}\wedge T.
\end{equation*}
The It\^o formula yields that
$(L_{t\wedge\tau^{*}(x,y)})_{t\in[0,T]}$ is a martingale. Then,
optional sampling along with the fact that $\tau^{*}(x,y)$ attains the
supremum in (\ref{eq:pivfrep}) starting at time zero, yields that
\begin{eqnarray*}
g(0,x,y) = L_{0} & = & \mathbb E[L_{\tau^{*}(x,y)}] \\
& = & \mathbb
E\left[\e^{-r\tau^{*}(x,y)}g(\tau^{*}(x,y),xG^{y}_{\tau^{*}(x,y)},
\widehat{Y}^{y}_{\tau^{*}(x,y)})\right] \\
& = & \mathbb E\left[\e^{-r\tau^{*}(x,y)}(xG^{y}_{\tau^{*}(x,y)} -
      K)^{+}\right] \\
& = & u(0,x,y), 
\end{eqnarray*}
so that the solution is unique.

\end{proof}

\subsection{Partial information smooth fit condition}
\label{subsec:sfcpi}

We establish, in Theorem \ref{thm:sppi} below, a smooth pasting
property for the partial information value function. This is a natural
property one might expect to hold, but to the best of our knowledge
has not been established before in a diffusion model model such as our
partial information model. In stochastic volatility models, Touzi
\cite{touzi99} has used variational inequality techniques to show the
smooth pasting property. It may be that this method could be adapted
to our setting.

We shall employ a method more akin to the classical proof of smooth
fit in American option problems, in a similar spirit to Karatzas and
Shreve \cite[Lemma 2.7.8]{ks98} (for the case of the Black-Scholes
put) or Monoyios and Ng \cite[Theorem 3.4]{mman11} (in a model with
inside information). The proof of Theorem \ref{thm:sppi} is simplified
by using the measure $\mathbb{P}^{*}\sim\mathbb{P}$ defined in
\eqref{eq:Pstar}. Because the proof involves analysing the first time
the stock almost surely breaches a surface, and as we are working in
the observation filtration, any early exercise crossing point must
ultimately depend only on the stock price path, so moving to a measure
where $X$ has constant drift (equal to $\mu_{0}$ under
$\mathbb{P}^{*}$, recall the SDE \eqref{eq:sdeXstar}) simplifies
matters.

Put explicitly, any optimal early exercise time will be the first time
$t\in[0,T)$ that we have $X_{t}\geq x^{*}(t,\widehat{Y}_{t})$. In this
relation, the process $\widehat{Y}$ depends on the history of the
stock price, through the history-dependence of the process
$\Phi\equiv\widehat{Y}/(1-\widehat{Y})$ in \eqref{eq:Phirep} (see also
equation \eqref{eq:PhirepX} in Section \ref{subsec:chv}, where we make
explicit the dependence of $\Phi$ on the history of the stock price),
so the early exercise crossing point is indeed dependent only on the
stock price (albeit in a path-dependent manner) and this makes our
method of proof work. This in turn can ultimately be traced to the
fact that, under the observation filtration, both the stock $X$ and
the filtered change point process $\widehat{Y}$ are driven by the
\emph{same} one-dimensional Brownian motion. Put yet another way, the
full information incomplete model with an observed but unhedgeable
change point has been rendered into a complete model with two
diffusion processes driven by one Brownian motion. This is a not
uncommon feature in filtering models. The price one pays for this
induced market completeness is that the second factor $\widehat{Y}$
depends on the entire history of the stock price, also a not uncommon
feature of models with filtering -- this is the ``learning'' aspect of
filtering coming to the fore.

\begin{theorem}[Smooth pasting: partial information value function]
\label{thm:sppi}

The partial information value function defined in (\ref{eq:pivf})
satisfies the smooth pasting property
\begin{equation*}
\frac{\partial u}{\partial x}(t,x^{*}(t,y),y) =  1, \quad
t\in[0,T), \quad y\in[0,1],
\end{equation*}
at the optimal exercise threshold $x^{*}(t,y)$. 

\end{theorem}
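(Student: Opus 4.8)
The plan is to establish smooth pasting by the classical ``sandwiching'' argument of Karatzas and Shreve. We already know from convexity in $x$ (Lemma \ref{lem:cmtdpi}) that $u(t,\cdot,y)$ is convex, hence $\frac{\partial u}{\partial x}(t,x^{*}(t,y)-,y)$ exists as a one-sided derivative, and on the stopping side $u(t,x,y)=x-K$ gives $\frac{\partial u}{\partial x}(t,x^{*}(t,y)+,y)=1$. Convexity of $u(t,\cdot,y)$ forces
\begin{equation*}
\frac{\partial u}{\partial x}(t,x^{*}(t,y)-,y) \leq \frac{\partial u}{\partial x}(t,x^{*}(t,y)+,y) = 1,
\end{equation*}
so the whole content of the theorem is the reverse inequality
$\frac{\partial u}{\partial x}(t,x^{*}(t,y)-,y) \geq 1$, which I would prove for fixed $(t,y)$ with $x^{*}\equiv x^{*}(t,y)$ finite.

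First I would set up a difference quotient. For small $\varepsilon>0$, use the representation (\ref{eq:pivfrep}) together with the linearity $X^{x,y}_{\tau}=xG^{y}_{\tau}$ to compare $u$ at $x^{*}$ and at $x^{*}-\varepsilon$. Let $\tau_{\varepsilon}$ be (close to) the optimal stopping time for the starting state $(t,x^{*}-\varepsilon,y)$; since $(t,x^{*},y)$ lies on the boundary, immediate exercise is optimal there, so $u(t,x^{*},y)=x^{*}-K$. Then
\begin{equation*}
u(t,x^{*},y) - u(t,x^{*}-\varepsilon,y) \leq \mathbb{E}\left[\e^{-r\tau_{\varepsilon}}\left((x^{*}G^{y}_{\tau_{\varepsilon}}-K)^{+} - ((x^{*}-\varepsilon)G^{y}_{\tau_{\varepsilon}}-K)^{+}\right)\right],
\end{equation*}
where on the right I have used that $\tau_{\varepsilon}$ is suboptimal (or exactly optimal) for the larger starting value $x^{*}$, which is what gives an \emph{upper} bound for $u(t,x^{*}-\varepsilon,y)$ and hence an upper bound on the increment that translates to a lower bound on the left-derivative. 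The payoff difference inside is bounded above by $\varepsilon G^{y}_{\tau_{\varepsilon}}\mathbbm{1}_{\{x^{*}G^{y}_{\tau_{\varepsilon}}\geq K\}}$, giving after dividing by $\varepsilon$
\begin{equation*}
\frac{u(t,x^{*},y) - u(t,x^{*}-\varepsilon,y)}{\varepsilon} \leq \mathbb{E}\left[\e^{-r\tau_{\varepsilon}}G^{y}_{\tau_{\varepsilon}}\mathbbm{1}_{\{x^{*}G^{y}_{\tau_{\varepsilon}}\geq K\}}\right].
\end{equation*}

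The crux is then to let $\varepsilon\downarrow 0$ and show the right side converges to a limit that is at least $1$. The key observation is that as $\varepsilon\downarrow 0$ the near-optimal stopping times $\tau_{\varepsilon}$ converge to $0$ almost surely: because the starting point tends to the boundary $x^{*}$ where immediate stopping is optimal, and the continuation region is open while $x\to u(t,x,y)$ is continuous, $\tau_{\varepsilon}\to 0$. Taking the limit, $\e^{-r\tau_{\varepsilon}}G^{y}_{\tau_{\varepsilon}}\to G^{y}_{0}=1$ and the indicator $\mathbbm{1}_{\{x^{*}G^{y}_{\tau_{\varepsilon}}\geq K\}}\to \mathbbm{1}_{\{x^{*}\geq K\}}=1$ (since $x^{*}\geq K$), so the right-hand side tends to $1$, yielding $\frac{\partial u}{\partial x}(t,x^{*}-,y)\geq 1$ once I justify passing the limit through the expectation.

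I expect the main obstacle to be exactly this limit interchange: establishing $\tau_{\varepsilon}\to 0$ rigorously and securing a dominating function for the family $\e^{-r\tau_{\varepsilon}}G^{y}_{\tau_{\varepsilon}}\mathbbm{1}_{\{\cdots\}}$ so that dominated (or uniform-integrability) convergence applies. For the former I would lean on continuity of $u$ and openness of $\widehat{\mathcal{C}}$ plus the convexity/monotonicity structure of the boundary established before Proposition \ref{prop:fbppivf}. For the latter, the stock dynamics (\ref{eq:sdeXstar}) under $\mathbb{P}^{*}$ (where $X$ is a clean geometric Brownian motion with drift $\mu_{0}$) provide clean moment bounds on $\sup_{\rho\leq T-t}G^{y}_{\rho}$, and a change of measure back to $\mathbb{P}$ via $\Gamma$ from (\ref{eq:Pstar}) — whose density is bounded in $L^{p}$ since $\widehat{Y}$ is bounded — supplies the integrability needed to dominate the family uniformly in $\varepsilon$. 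Combining the easy convexity inequality with this limiting lower bound closes the argument.
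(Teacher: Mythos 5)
Your strategy is exactly the paper's: convexity gives $u_{x}(t,x^{*}(t,y)-,y)\leq 1$, the content is the reverse inequality, and you attack it through a difference quotient at $x^{*}-\epsilon$ using the optimal stopping time $\tau_{\epsilon}$ for the lower starting point, the convergence $\tau_{\epsilon}\to 0$ a.s., and uniform integrability of $G^{y}$. However, your central inequality chain points the wrong way, and as written the argument fails. Since $\tau_{\epsilon}$ is optimal for $x^{*}-\epsilon$ and merely admissible (suboptimal) for $x^{*}$, the correct comparison is
\begin{equation*}
u(t,x^{*},y) - u(t,x^{*}-\epsilon,y) \;\geq\; \mathbb{E}\left[\e^{-r\tau_{\epsilon}}\left((x^{*}G^{y}_{\tau_{\epsilon}}-K)^{+} - ((x^{*}-\epsilon)G^{y}_{\tau_{\epsilon}}-K)^{+}\right)\right],
\end{equation*}
a \emph{lower} bound on the increment. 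Your displayed ``$\leq$'' amounts to asserting $u(t,x^{*},y)\leq \mathbb{E}[\e^{-r\tau_{\epsilon}}(x^{*}G^{y}_{\tau_{\epsilon}}-K)^{+}]$, which is the reverse of the suboptimality bound and false in general; the accompanying prose (``suboptimal \ldots gives an upper bound \ldots translates to a lower bound on the left-derivative'') is internally inconsistent, since an upper bound on the increment can only bound the left derivative from above. Compounding this, you then bound the payoff difference \emph{above} by $\epsilon G^{y}_{\tau_{\epsilon}}\mathbbm{1}_{\{x^{*}G^{y}_{\tau_{\epsilon}}\geq K\}}$ and propose to show the resulting upper bound on the difference quotient tends to a limit at least $1$ --- but an upper bound tending to $1$ only re-derives $u_{x}\leq 1$, which you already have from convexity, and yields nothing about $u_{x}(x^{*}-,y)\geq 1$. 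The repair is mechanical and recovers precisely the paper's proof: keep the ``$\geq$'' direction and bound the payoff difference \emph{below} by $\epsilon G^{y}_{\tau_{\epsilon}}\mathbbm{1}_{\{(x^{*}-\epsilon)G^{y}_{\tau_{\epsilon}}\geq K\}}$ (on that event the difference equals $\epsilon G^{y}_{\tau_{\epsilon}}$, off it it is nonnegative), divide by $\epsilon$, and pass to the limit.

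Two further points where your justifications are thinner than the paper's. First, your argument for $\tau_{\epsilon}\to 0$ (``the continuation region is open and $u$ is continuous'') is not sufficient on its own: optimality of immediate stopping at the limit point does not force nearby paths to exit quickly. The paper uses the monotonicity of $x^{*}(t,y)$ in $t$ and $y$ to dominate $\tau_{\epsilon}$ by the first time $(x^{*}-\epsilon)G^{y}$ reaches the level $x^{*}$, and then the law of the iterated logarithm to show this hitting time vanishes as $\epsilon\downarrow 0$; you need that path-crossing argument. Second, for the indicator limit you should invoke $x^{*}(t,y)>K$ for $t\in[0,T)$ (the paper phrases this as ``it is never optimal to exercise when the stock price is below the strike''); your ``since $x^{*}\geq K$'' is not quite enough in the borderline case $x^{*}=K$, as $G^{y}_{\tau_{\epsilon}}$ may approach $1$ from below. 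Your uniform integrability argument via $\mathbb{P}^{*}$-moments and the bounded-density change of measure is fine and matches the paper's appeal to uniform integrability of $(G^{y}_{t})_{t\in[0,T]}$.
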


\begin{proof}

In this proof it entails no loss of generality if we set $r=0$ and
$t=0$, but this considerably simplifies notation, so let us proceed in
this way. Write $u(x,y)\equiv u(0,x,y)$ and $x^{*}(y)\equiv
x^{*}(0,y)$ for brevity.

The map $x\to u(x,y)$ is convex and non-decreasing, so we have
$u_{x}(x,y)\leq 1$ in the continuation region
$\widehat{\mathcal{C}}=\{(x,y)\in\mathbb
R_{+}\times[0,1]:x<x^{*}(y)\}$,
and thus $u_{x}(x^{*}(y)-,y)\leq 1$. We also have $u_{x}(x,y)=1$ in
the stopping region
$\widehat{\mathcal{S}}=\{(x,y)\in\mathbb R_{+}\times[0,1]:x\geq
x^{*}(y)\}$,
and thus $u_{x}(x^{*}(y)+,y)=1$. Hence, the proof will be complete if
we can show that $u_{x}(x^{*}(y)-,y)\geq 1$.
Recall the measure $\mathbb{P}^{*}$ defined in \eqref{eq:Pstar}, and the
$(\mathbb{P}^{*},\widehat{\mathbb{F}})$-dynamics of the stock in
\eqref{eq:sdeXstar}. Given $X_{0}=x$, the stock price at time $t\in[0,T]$ is
\begin{equation*} 
X_{t} = xG_{t} := x\exp\left(\left(\mu_{0} - \frac{1}{2}\sigma^{2}\right)t
+ \sigma {W}^{*}_{t}  \right), \quad t\in[0,T].
\end{equation*}
For $(x,y)\in\mathbb R_{+}\times(0,1)$, denote by $\tau(x,y)$ the
optimal $\widehat{\mathbb{F}}$-stopping time for $u(x,y)$, given by
the first time the stock breaches the exercise surface at the
prevailing value of $\widehat{Y}$. Working under $\mathbb{P}^{*}$, we
thus have
\begin{equation*}
\tau(x,y) = \inf\{t\in[0,T):xG_{t}\geq
x^{*}(t,\widehat{Y}^{y}_{t})\}\wedge T, 
\end{equation*}
where $\widehat{Y}^{y}$ denotes the filtered change point process with
initial condition $Y_{0}=y$.

Set $x=x^{*}(y)\geq K$, which will be fixed for the remainder of the
proof, and define
\begin{equation*}
\tau(x-\epsilon,y) :=  \inf\{t\in[0,T):(x-\epsilon)G_{t}\geq
x\}\wedge T,
\end{equation*}
for $\epsilon\geq 0$, and the dependence on $y$ on the right-hand-side
is of course suppressed in $x\equiv x^{*}(y)$. We have
$\tau(x,y)\equiv 0$ and that $\tau(x-\epsilon,y)$ is non-decreasing in
$\epsilon$. Moreover, because the exercise surface is non-increasing
in time and in $y$, we have
\begin{equation}
\tau(x-\epsilon,y) \leq \inf\{t\in[0,T):(x-\epsilon)G_{t}\geq
x\}\wedge T.
\label{eq:tauxey}
\end{equation}
The Law of the Iterated Logarithm for the Brownian motion $W^{*}$
(Karatzas and Shreve \cite[Theorem 2.9.23]{ks91}) implies that
\begin{equation*}
\sup_{0\leq t\leq a}G_{t} > 1, \quad \mbox{$\mathbb{P}^{*}$-a.s.}  
\end{equation*}
for every $a>0$, so there will exist a sufficiently small $\epsilon>0$
such that
\begin{equation*}
\sup_{0\leq t\leq a}(x-\epsilon)G^{y}_{t} \geq x, \quad
\mbox{$\mathbb{P}^{*}$-a.s.}
\end{equation*}
for every $a>0$. Thus, the right-hand-side of \eqref{eq:tauxey} tends
to zero as $\epsilon\downarrow 0$, and therefore
$\tau(x-\epsilon,y) \downarrow 0$ as $\epsilon\downarrow 0$,
$\mathbb{P}^{*}$-almost surely and, since
$\mathbb{P}^{*}\sim\mathbb{P}$, this is also true $\mathbb{P}$-almost
surely:
\begin{equation}
\tau(x-\epsilon,y) \downarrow 0 \quad \mbox{as} \quad
\epsilon\downarrow 0, \quad \mbox{$\mathbb{P}$-almost surely}.
\label{eq:limtau}
\end{equation}
Using the fact that $\tau(x-\epsilon,y)$ will be sub-optimal for the
starting state $(X_{0},\widehat{Y}_{0})=(x,y)$, we have
\begin{eqnarray}
&& u(x,y) - u(x-\epsilon,y) \label{eq:uxuxe} \\
& \geq & \mathbb{E}\left[\left((xG_{\tau(x-\epsilon,y)} - K)^{+} -
((x-\epsilon)G_{\tau(x-\epsilon,y)} - K)^{+}\right)\right] \nonumber \\
& \geq & \mathbb{E}\left[\left((xG_{\tau(x-\epsilon,y)} - K)^{+} -
((x-\epsilon)G_{\tau(x-\epsilon,y)} -K)^{+}\right)
\mathbbm{1}_{\{(x-\epsilon)G_{\tau(x-\epsilon,y)}\geq K\}}\right]
\nonumber \\
& = & \epsilon\mathbb{E}\left[G_{\tau(x-\epsilon,y)}
\mathbbm{1}_{\{(x-\epsilon)G_{\tau(x-\epsilon,y)}\geq
K\}}\right]. \nonumber 
\end{eqnarray}
We now take the limit as $\epsilon\downarrow 0$. Using
\eqref{eq:limtau} we almost surely have
$\lim_{\epsilon\downarrow 0}G_{\tau(x-\epsilon,y)}=1$ and, since it is
never optimal to exercise below the strike,
$\lim_{\epsilon\downarrow 0}
\mathbbm{1}_{\{(x-\epsilon)G_{\tau(x-\epsilon,y)}\geq K\}}=1$. Using
these properties, along with the uniform integrability of
$(G_{t})_{t\in[0,T]}$, in \eqref{eq:uxuxe}, we compute
\begin{equation*}
u_{x}(x-,y) = \lim_{\epsilon\downarrow 0}\frac{1}{\epsilon}(u(x,y) -
u(x-\epsilon,y)) \geq 1,  
\end{equation*}
which completes the proof.

\end{proof}

\subsection{Doob-Meyer decomposition of partial information
  Snell envelope}
\label{subsec:dmdpise}

As was done in the full information case, with the free boundary PDE
and smooth pasting condition established for the partial information
value function, we can now derive a Doob-Meyer decomposition for the
partial information Snell envelope of the reward process, and this
allows us to prove Lemma \ref{lem:tvpartial} on the left-limiting
value of the partial information exercise surface as we approach
maturity.

Recall that the partial information Snell envelope is the c\'adl\'ag
supermartingale identified with the discounted ESO value process
$(\e^{-rt}U_{t})_{t\in[0,T]}$, with $U_{t}=u(t,X_{t},\widehat{Y}_{t})$. 

\begin{lemma}[Doob-Meyer decomposition of partial information
Snell envelope]
\label{lem:dmdpise}

The process $(\e^{-rt}u(t,X_{t},\widehat{Y}_{t}))_{t\in[0,T]}$ admits
the decomposition
\begin{equation}
\e^{-rt}u(t,X_{t},\widehat{Y}_{t}) = u(0,X_{0},\widehat{Y}_{0}) +
M_{t} - A_{t}, \quad t\in[0,T],
\label{eq:dmdecomppi}
\end{equation}
where
\begin{equation*}
M_{t} := \int_{0}^{t}\e^{-rs}\left((\sigma
X_{s}u_{x}(s,X_{s},\widehat{Y}_{s}) -
\eta\widehat{Y}_{s}(1-\widehat{Y}_{s})u_{y}(s,X_{s},\widehat{Y}_{s}))\right)\ud
\widehat{W}_{s}, \quad t\in[0,T], 
\end{equation*}
is a $(\mathbb{P},\widehat{\mathbb{F}})$-martingale, and
\begin{equation*}
A_{t} :=
\int_{0}^{t}\e^{-rs}\left((r-\mu_{0}+\sigma\eta\widehat{Y}_{s})X_{s}-rK\right)
\mathbbm{1}_{\{X_{s}\geq  x^{*}(s,\widehat{Y}_{s})\}}\ud s, \quad t\in[0,T],
\end{equation*}
is a non-decreasing finite variation process.

\end{lemma}

\begin{proof}

The proof is similar to the corresponding proof of Theorem
\ref{thm:dmdecomp} in the full information scenario, so we shall be
more brief here. Using the generalised It\^o formula for convex
functions, the PDE \eqref{eq:pdepic} satisfied by
$u(\cdot,\cdot,\cdot)$ in the continuation region
$\widehat{\mathcal{C}}$ and the fact that $u(t,x,y)=x-K$ in the
stopping region, we obtain the decomposition
\eqref{eq:dmdecomppi}. The square integrability of the stock price and
bounded nature of the derivatives $u_{x},u_{y}$ in $M$ imply that $M$
is indeed a martingale. Since the Snell envelope is a super-martingale
with a unique Doob-Meyer decomposition into a martingale minus a
non-decreasing process of finite variation, we conclude that $A$ is a
non-decreasing process.
  
\end{proof}

Some observations on the parameter values for which we obtain a
bounded exercise surface are in order. With
$\mu(\widehat{Y})\equiv \mu_{0}-\sigma\eta\widehat{Y}$ the partial
information stock price drift, the non-decreasing property of the
process $A$ in Lemma \ref{lem:dmdpise} means that we have
$((r-\mu(\widehat{Y}_{t}))X_{t}-rK) \mathbbm{1}_{\{X_{t}\geq
x^{*}(t,\widehat{Y}_{t})\}}\geq 0$ almost surely, for all
$t\in[0,T]$, and hence we also have
$(r-\mu(\widehat{Y}_{t}))x^{*}(t,\widehat{Y}_{t})-rK\geq 0$. Now,
suppose we have $\mu(\widehat{Y_{t}})\geq r$ almost surely for all
$t\in[0,T]$. We then compute that
$x^{*}(t,\widehat{Y}_{t})\leq -(r/(\mu(\widehat{Y}_{t})-r))K$, which
is impossible, since the exercise surface cannot lie below the
strike. We conclude that, when the stock drift exceeds the interest
rate, the finite variation process in the Doob-Meyer decomposition
will be zero, and the ESO value process is a martingale. This is of
course exactly in line with Remark \ref{rem:murpi}, that early
exercise will not occur if the stock drift dominates the interest
rate, in which case the ESO value process is a martingale and equal to
the European version of the ESO.

We are now ready to prove Lemma \ref{lem:tvpartial}.

\begin{proof}[Proof of Lemma \ref{lem:tvpartial}]

From the non-decreasing property of the process $A$ in Lemma
\ref{lem:dmdpise} we have $((r-\mu(\widehat{Y}_{t}))X_{t}-rK)
\mathbbm{1}_{\{X_{t}\geq x^{*}(t,\widehat{Y}_{t})\}}\geq 0$ almost
surely, for all $t\in[0,T]$, and hence we also have
$(r-\mu(\widehat{Y}_{t}))x^{*}(t,\widehat{Y}_{t})-rK\geq 0$.

Suppose that $\mu(\widehat{Y}_{t})<r$. In this case, we conclude that
$x^{*}(t,\widehat{Y}_{t})\geq (r/(r-\mu(\widehat{Y}_{t})))K$. From the
fact that the exercise surface is non-increasing in time, we conclude
that we have the terminal left-limit lower bound
\begin{equation*}
x^{*}_{i}(T-,y) \geq \left(\frac{r}{r-\mu_{0}+\sigma\eta y}\right)K,
\end{equation*}
for all values of $y\in[0,1]$ satisfying $\mu_{0}-\sigma\eta
y<r$. There are now two cases to consider separately, which lead to a
refinement of this lower bound:

\begin{itemize}
\item for $0\leq \mu_{0}-\sigma\eta y <r$, we obtain $x^{*}(T-,y)\geq
\left(r/(r-\mu_{0}+\sigma\eta y)\right)K\geq K$;

\item for $\mu_{0}-\sigma\eta y \leq 0 <r$, because it is never
optimal to exercise below the strike, we have
$x^{*}(T-,y)\geq K >(r/(r-\mu_{0}+\sigma\eta y))K$.
  
\end{itemize}

We thus have, in all cases, the refined lower bound
\begin{equation*}
x^{*}(T-,y) \geq
\max\left(K,\left(\frac{r}{r-\mu_{0}+\sigma\eta y}\right)K\right),
\quad \mu_{0}-\sigma\eta y <r.
\end{equation*}
We now show that in fact we have equality here, thus establishing
\eqref{eq:xstm}. Suppose, to the contrary, that we have
$x^{*}(T-,y)>\max\left(K,\left(r/(r-\mu_{0}+\sigma\eta
y)\right)K\right)$. Fixing $y\in[0,1]$, consider a value
$x\in\left( \max\left(K,\left(r/(r-\mu_{0}+\sigma\eta
y)\right)K\right),x^{*}(T-,y)\right)$. Then, for $0\leq t<T$, we
have $(t,x,y)\in\widehat{\mathcal{C}}$, so that
$u(t,x,y)>(x-K)^{+}=x-K$. Using temporal continuity of
$u(\cdot,\cdot,\cdot)$, we thus obtain
$u(T,x,y)=\lim_{t\uparrow T}u(t,x,y)> x-K$. But, on the other hand, we
know that at maturity we have $u(T,x,y)=(x-K)^{+}=x-K$, so we have a
contradiction. Thus, \eqref{eq:xstm} holds.

\end{proof}

\subsection{A comment on a change of state variable}
\label{subsec:chv}

In this section, we illustrate the inherent complexity of the partial
information case, due to its path-dependent structure.  Consider the
partial information problem (\ref{eq:piproblem}). We shall change
measure to $\mathbb{P}^{*}$ defined in (\ref{eq:Pstar}), and this
naturally leads to a change of state variable from $(X,\widehat{Y})$
to $(X,\Phi)$, with $\Phi$ defined in (\ref{eq:Phi}). This leads to
the following lemma.

\begin{lemma}
\label{lem:change}

Let $\Phi$ be the likelihood ratio process defined in
(\ref{eq:Phi}). The partial information ESO value process $U$ in
(\ref{eq:piproblem}) satisfies
\begin{equation}
\e^{-(r+\lambda)t}(1+\Phi_{t})U_{t} =
\esssup_{\tau\in\widehat{\mathcal{T}}_{t,T}}\mathbb
E^{*}\left[\e^{-(r+\lambda)\tau}(1+\Phi_{\tau})
(X_{\tau}-K)^{+}\vert\widehat{\mathcal{F}}_{t}\right], \quad t\in[0,T],
\label{eq:Urep}
\end{equation}
where $\mathbb E^{*}[\cdot]$ denotes expectation with respect to
$\mathbb P^{*}$ in (\ref{eq:Pstar}), and the $(\mathbb
P^{*},\widehat{\mathbb F})$-dynamics of $X,\Phi$ are given in
(\ref{eq:sdeXstar}) and (\ref{eq:sdephi}). 

\end{lemma}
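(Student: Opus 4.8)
The plan is to transfer the essential supremum in the definition (\ref{eq:piproblem}) of $U$ from $\mathbb{P}$ to $\mathbb{P}^{*}$ using the abstract Bayes rule, and then to absorb the resulting Radon--Nikodym ratio into the factors $\e^{-(r+\lambda)t}(1+\Phi_{t})$ by means of a single It\^o identity.

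First I would fix $t\in[0,T]$ and a stopping time $\tau\in\widehat{\mathcal{T}}_{t,T}$. Since $\Gamma$ in (\ref{eq:Pstar}) is a strictly positive $(\mathbb{P},\widehat{\mathbb F})$-martingale (Novikov holds because $\widehat{Y}$ is bounded), the measures $\mathbb{P}$ and $\mathbb{P}^{*}$ are equivalent, with $1/\Gamma$ the density process of $\mathbb{P}$ relative to $\mathbb{P}^{*}$, and Bayes' formula for conditional expectations, applied to the nonnegative $\widehat{\mathcal{F}}_{\tau}$-measurable reward, gives
\[
\mathbb{E}\left[\e^{-r(\tau-t)}(X_{\tau}-K)^{+}\,\big|\,\widehat{\mathcal{F}}_{t}\right] = \mathbb{E}^{*}\left[\frac{\Gamma_{t}}{\Gamma_{\tau}}\,\e^{-r(\tau-t)}(X_{\tau}-K)^{+}\,\big|\,\widehat{\mathcal{F}}_{t}\right],
\]
where the $\widehat{\mathcal{F}}_{t}$-measurable factor $\Gamma_{t}$ has been taken inside the $\mathbb{P}^{*}$-conditional expectation.

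The key step is to identify the ratio $\Gamma_{t}/\Gamma_{\tau}$, and I would do this by showing that the process $N_{t}:=\e^{-\lambda t}\Gamma_{t}(1+\Phi_{t})$ is almost surely constant. Writing $1+\Phi_{t}=(1-\widehat{Y}_{t})^{-1}$ and using $\ud\Gamma_{t}=\eta\widehat{Y}_{t}\Gamma_{t}\,\ud\widehat{W}_{t}$ together with the dynamics (\ref{eq:whY}) of $\widehat{Y}$, an application of It\^o's formula to $\Gamma_{t}(1-\widehat{Y}_{t})^{-1}$ shows that its martingale part vanishes and its drift reduces to $\lambda\,\Gamma_{t}(1-\widehat{Y}_{t})^{-1}$, so that $\ud\big(\Gamma_{t}(1-\widehat{Y}_{t})^{-1}\big)=\lambda\,\Gamma_{t}(1-\widehat{Y}_{t})^{-1}\,\ud t$; the $\e^{-\lambda t}$ prefactor then cancels this drift and $\ud N_{t}=0$. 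Hence $N_{t}=N_{\tau}$ almost surely, which rearranges to
\[
\frac{\Gamma_{t}}{\Gamma_{\tau}} = \e^{-\lambda(\tau-t)}\,\frac{1+\Phi_{\tau}}{1+\Phi_{t}},
\]
so that $(\Gamma_{t}/\Gamma_{\tau})\,\e^{-r(\tau-t)}=\e^{-(r+\lambda)(\tau-t)}(1+\Phi_{\tau})/(1+\Phi_{t})$.

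Finally I would substitute this into the Bayes identity above, take the essential supremum over $\tau\in\widehat{\mathcal{T}}_{t,T}$ on both sides (legitimate since the identity holds termwise in $\tau$ and the transformation is order preserving, so the two families of conditional expectations coincide as families), and multiply through by the strictly positive $\widehat{\mathcal{F}}_{t}$-measurable quantity $\e^{-(r+\lambda)t}(1+\Phi_{t})$. Being $\widehat{\mathcal{F}}_{t}$-measurable, this factor passes inside both the essential supremum and the conditional expectation, cancelling the corresponding $(1+\Phi_{t})^{-1}$ and recombining $\e^{-(r+\lambda)t}\e^{-(r+\lambda)(\tau-t)}=\e^{-(r+\lambda)\tau}$, which delivers exactly (\ref{eq:Urep}). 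I expect the main obstacle to be the It\^o identity of the third step: arranging that the cross-variation between $\Gamma$ and $(1-\widehat{Y})^{-1}$ together with the $\eta^{2}\widehat{Y}^{2}$ curvature term cancel, so that only the $\lambda$-drift survives. The remaining ingredients --- the Bayes rule and the measurability bookkeeping in the last step --- are routine given the boundedness of $\widehat{Y}$ and the martingale property of $\Gamma$.
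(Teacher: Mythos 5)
Your proof is correct and takes essentially the same approach as the paper's: your observation that $N_{t}:=\e^{-\lambda t}\Gamma_{t}(1+\Phi_{t})$ is constant is precisely the paper's identity $Z_{t}=1/\Gamma_{t}=\e^{-\lambda t}(1+\Phi_{t})/(1+\Phi_{0})$ in (\ref{eq:Zrep}), which the paper verifies by checking that this expression satisfies the SDE (\ref{eq:Zsde}) under $\mathbb{P}^{*}$ while you verify it by an It\^o computation on $\Gamma_{t}(1-\widehat{Y}_{t})^{-1}$ under $\mathbb{P}$. Both arguments then conclude with the same application of Bayes' formula to the definition (\ref{eq:piproblem}) of $U$, so there is no substantive difference.
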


\begin{proof}

Let $Z$ denote the change of measure martingale defined by
\begin{equation}
Z_{t} := \frac{1}{\Gamma}_{t} = \left.\frac{\ud\mathbb P}{\ud\mathbb
P^{*}}\right\vert_{\widehat{\mathcal{F}}_{t}} 
= \mathcal{E}(-\eta\widehat{Y}\cdot W^{*})_{t}, \quad t\in[0,T], 
\label{eq:zed}
\end{equation}
satisfying
\begin{equation}
\ud Z_{t} = -\eta\widehat{Y}_{t}Z_{t}\ud W^{*}_{t}, \quad Z_{0}=1.  
\label{eq:Zsde}
\end{equation}
The It\^o formula along with the dynamics of $\Phi$ in
(\ref{eq:sdephi}) yields that $Z$ is given in terms of $\Phi$ as
\begin{equation}
Z_{t} = \e^{-\lambda t}\left(\frac{1 + \Phi_{t}}{1 + \Phi_{0}}\right),
\quad t\in[0,T],
\label{eq:Zrep}
\end{equation}
because the right-hand-side of (\ref{eq:Zrep}) satisfies the SDE
(\ref{eq:Zsde}). Then an application of the Bayes formula to the
definition of $U$ in (\ref{eq:piproblem}) yields the result.

\end{proof}

The point of \eqref{eq:Urep} is that the state variables in the
objective function have decoupled dynamics under $\mathbb{P}^{*}$
(recall \eqref{eq:sdeXstar} and \eqref{eq:sdephi}). However, the
problematic feature of the history dependence of $\Phi$ remains, as
exhibited in \eqref{eq:Phirep}, inheriting this feature from the
filtered change-point process $\widehat{Y}$. Indeed, using the solution
of the stock price SDE \eqref{eq:sdeXstar}, the representation
\eqref{eq:Phirep} may be converted to one involving the stock price
and its history, as follows.

With $X_{0}=x$, from \eqref{eq:sdeXstar} we have
$X_{t}=x\exp\left(\mu_{0}-\frac{1}{2}\sigma^{2}\right)t
+ \sigma W^{*}_{t},\,t\geq 0$, so that
\begin{equation*}
\exp(\sigma W^{*}_{t}) = \left(\frac{X_{t}}{x}\right)\exp\left(\mu_{0} -
    \frac{1}{2}\sigma^{2}\right)t, \quad t\geq 0.  
\end{equation*}
Using this relation to compute the process
$\Lambda=\mathcal{E}(-\eta W^{*}) $ we get
\begin{equation}
\Lambda_{t} = \exp\left(-\eta W^{*}_{t} - \frac{1}{2}\eta^{2}t\right)
= \left(\frac{X_{t}}{x}\right)^{-\eta/\sigma}\exp\left(\eta\nu_{0} -
\frac{1}{2}\eta^{2}\right)t, \quad t\geq 0,
\label{eq:LambdaX}
\end{equation}
where
\begin{equation*}
\nu_{0} := \frac{\mu_{0}}{\sigma} - \frac{1}{2}\sigma.  
\end{equation*}
Then,  with $\Phi_{0}=\phi$, substituting \eqref{eq:LambdaX} into
\eqref{eq:Phirep}, we obtain
\begin{equation}
\Phi_{t}(\phi) = \phi\e^{\kappa t}\left(\frac{X_{t}}{x}\right)^{-\eta/\sigma} + 
\lambda\int_{0}^{t}\e^{\kappa(t-s)}\left(\frac{X_{t}}{X_{s}}\right)^{-\eta/\sigma}\ud
s, \quad t\in[0,T],
\label{eq:PhirepX}
\end{equation}
where $\kappa$ is a constant given by
\begin{equation*}
\kappa := \lambda + \eta\nu_{0} - \frac{1}{2}\eta^{2}.
\end{equation*}
The second term on the right-hand-side of \eqref{eq:PhirepX} is the
awkward history-dependent term which makes numerical solution of the
partial information ESO problem difficult. For $\lambda=0$, we see
that $\Phi$ becomes a deterministic function of the current stock
price, and this limit corresponds to a simpler model in which an
unknown drift is assumed to take one of two values, but the agent is
unsure which value pertains in reality, and so filtering is used to
estimate the drift. A number of papers have used such a model and
exploited the absence of path-dependence to reduce the dimension of
the problem (see D{\'e}camps et al.~\cite{dmv05,dmv09}, Klein
\cite{klein09}, and Ekstr{\"o}m and co-authors
\cite{el11,el13,ev19}). This simplification is not available to us, so
the partial information problem is potentially more challenging to
solve numerically.

\section{On the effect of a vesting period on ESO exercise}
\label{sec:vesting}

ESOs often include a contractual feature called a vesting period, a
period of time during which option exercise is not permitted.  In this
section, we briefly describe the effect of a vesting period on the
exercise of ESOs in the full and partial information models. In
Section \ref{sec:numex} we shall also demonstrate the impact of
vesting on ESO value.

Suppose there is a vesting period $[0,t_{v})$, so that the ESO can
only be exercised in the time interval $[t_{v},T]$. Then we seek
optimal stopping times, with respect to the appropriate filtration,
lying in the exercise interval $[t_{v},T]$. Thus, for $0\leq t<t_{v}$,
the discounted full information ESO value process is
\begin{equation}
\e^{-rt}\check{V}_{t} =
\esssup_{\tau\in\mathcal{T}_{t_{v},T}}\mathbb{E}[R_{\tau}|\mathcal{F}_{t}],
\quad t\in[0,t_{v}),
\label{eq:prevesting}
\end{equation}
while for $t\in[t_{v},T]$, the vesting period is over, and we have
reverted back to our original problem without a vesting period with
value process $(V_{t})_{t\in[t_{v},T]}$, given by
\begin{equation}
\e^{-rt}V_{t} =
\esssup_{\tau\in\mathcal{T}_{t,T}}\mathbb{E}[R_{\tau}|\mathcal{F}_{t}],
\quad t\in[t_{v},T].  
\label{eq:postvesting}
\end{equation}
Note that $\check{V}_{t}\leq V_{t}$ for $t<t_{v}$ (the value with
vesting is clearly dominated by the one without vesting, due to the
extra exercise opportunities).

Similarly, for $t\in[0,t_{v})$, the discounted partial information
value process is
\begin{equation*}
\e^{-rt}\check{U}_{t} =
\esssup_{\tau\in\widehat{\mathcal{T}}_{t_{v},T}}\mathbb{E}
[R_{\tau}|\widehat{\mathcal{F}}_{t}], \quad t\in[0,t_{v}),  
\end{equation*}
satisfying $\check{U}_{t}\leq U_{t}$ for $t<t_{v}$ while for
$t\in[t_{v},T]$ we are back to our original problem without a vesting
period:
\begin{equation*}
\e^{-rt}U_{t} =
\esssup_{\tau\in\widehat{\mathcal{T}}_{t,T}}\mathbb{E}
[R_{\tau}|\widehat{\mathcal{F}}_{t}], \quad t\in[t_{v},T].  
\end{equation*}

The key overall idea is well expressed by Leung and Sircar \cite[Section
5.1.1]{ls092}, as follows: ``When a vesting period of $t_{v}$ years is
imposed, the employee cannot exercise the ESO during $[0,t_{v})$, but
the post-vesting exercising strategy will be unaffected.''

In what follows, we examine the situation where we have
$y_{0}=0$, $\mu_{0}>r$, $\mu_{1}<r$. Thus, for the full information
problem, no exercise will occur before the strictly positive change
point $\theta\sim\mathrm{Exp}(\lambda)$, as the reward process
$(R_{t})_{0,\theta}$ over the time interval up to the change point is
a sub-martingale.

\subsection{The full information case}
\label{subsec:tfic}

First, consider the case that the change point occurs after the
vesting period has elapsed, that is, $\theta\geq t_{v}$. For
$t<t_{v}$, no exercise can occur, and at $t=t_{v}$ we revert back to
our original problem, the vesting period having elapsed. The
post-vesting exercise strategy will then be as in the no-vesting case.

Next, consider the case $\theta<t_{v}$, that is, the change point
occurs during the vesting period. For $t\in[0,t_{v})$ there is no
exercise as we are still in the vesting period. At $t=t_{v}$, we are now in the
low-drift state, so the stock is a GBM with drift $\mu_{1}<r$. There
will now be an exercise boundary $(x^*_{1}(t))_{t_{v}\leq t\leq T}$. If
$X_{t_{v}}\geq x^*_{1}(t_{v})$, then we are in the exercise region as
soon as the vesting period has elapsed, and immediate exercise occurs
at $t=t_{v}$. If, on the other hand, $X_{t_{v}}< x^*_{1}(t_{v})$, then
there is no immediate exercise at $t_{v}$, and exercise occurs the
first time that the stock breaches the boundary from below, at time
$\bar{\tau}=\inf\{t\in[t_{v},T):X_{t}\geq x^*_{1}(t)\}\wedge T$.

Thus, the overall conclusion is: the exercise boundary is infinite
over $[0,t_{v})$, regardless of when the change point occurs. If the
change point has occurred by time $t_{v}$, then immediate exercise
occurs at time $t_{v}$ if the prevailing stock price at $t_{v}$ is
higher than or equal to the exercise boundary $x^*_{1}(t_{v})$ at that
point. If the change point has not occurred by time $t_{v}$, we are
back to our original problem over the interval $[t_{v},T]$.

\subsection{The partial information case}
\label{subsec:tpic}

Regardless of when the change point occurs, if we are in the vesting
period $[0,t_{v})$, no exercise can occur, so the partially informed
agent's exercise surface is infinite.

At $t=t_{v}$ we revert back to our original problem, the vesting
period having elapsed. Again, this is regardless of whether the change
point has occurred or not (the partially informed agent is not aware
of the change point having occurred or not, and is therefore filtering
it from stock price observations). We now have an optimal exercise
surface $(x^{*}(t,y))_{t_{v}\leq t\leq T,0\leq y\leq 1}$, and exercise
occurs the first time that the stock breaches the exercise surface
evaluated at the prevaling value of $\widehat{Y}$, that is, at
$\tau^{*}=\inf\{t\in[t_{v},T):X_{t}\geq
x^{*}(t,\widehat{Y}_{t})\}\wedge T$.

In other words, the post-vesting exercise strategy will then be as in
the no-vesting case, with the pre-vesting boundary set to infinity.

\section{Numerical scheme and convergence tests}
\label{sec:nrsim}

In this section, we describe numerical schemes for the PDEs in the
full and partial information case, and present numerical studies to
illustrate the convergence and computational complexity.  We present
our novel algorithm for the two-dimensional, degenerate free boundary
value problem in the partial information case in some detail and
analyse its convergence properties, while we only state the simple
scheme for the full information case.  Note that alternative numerical
methods could be employed, for example, a binomial scheme
(non-recombining for the partial information case) or a
Longstaff-Schwartz Monte Carlo approach. However, the finite
difference schemes we propose are far superior in terms of speed and
accuracy.

\subsection{The partial information case}
\label{subsec:picn}

We begin by noting that the partial information ESO value function
$u(\cdot,\cdot,\cdot)$ satisfying (\ref{eq:pdepic})-(\ref{eq:bcpi0})
is also the unique solution in $[0,T]\times\mathbb R_{+}\times[0,1]$
of the equivalent linear complementarity problem
\begin{eqnarray}
\label{lcp-incomp}
\min\left(-\mathcal{L}u(t,x,y), u - (x-K)^{+}  \right) & = & 0, \; 
t\in[0,T), \; \quad x\in \mathbb R_{+}, \; y\in[0,1], \label{eq:pdepic2}\\  
u(T,x,y) & = & (x-K)^{+}, \quad x\in\mathbb R_{+}, \quad
y\in[0,1], \label{eq:bcT2}
\end{eqnarray}
where we repeat for convenience that
\begin{equation}
\label{rep_def:L}
\mathcal{L} = \frac{\partial}{\partial t} +
\mathcal{L}_{X,\widehat{Y}} - r
\end{equation}
with
\begin{equation*}
\mathcal{L}_{X,\widehat{Y}}f(t,x,y) = (\mu_{0}-\sigma\eta y)xf_{x}
+ \frac{1}{2}\sigma^{2}x^{2}f_{xx} + \lambda(1-y)f_{y} +
\frac{1}{2}\eta^{2}y^{2}(1-y)^{2}f_{yy} - \sigma\eta xy(1-y)f_{xy}
\end{equation*}
for any sufficiently smooth function $f:[0,T]\times\mathbb
R_{+}\times[0,1]$.

The degeneracy of the equation requires the notion of viscosity
solutions for a rigorous analysis. A general framework of so-called
monotone schemes for the approximation of viscosity solutions to
nonlinear PDEs was first introduced and analysed in Barles and
Souganidis \cite{barles1991convergence}.  It is well-documented in the
literature that the monotone approximation of degenerate diffusion
problems in multiple dimensions generally requires complicated,
so-called `wide stencil' schemes (see, for example, Debrabant and
Jacobsen \cite{debrabant2013semi}, Ma and Forsyth
\cite{ma2016unconditionally}).  The analysis in Reisinger
\cite{reisinger2018non} demonstrates clearly that the construction
becomes more difficult when the correlation approaches $\pm 1$, the
above case being such a singular limit of perfect negative correlation
between the driver of $X$ and $Y$.  Moreover, all schemes known to us
which are monotone for general, possibly degenerate multidimensional
equations, have convergence order no larger than 1 in the mesh size
and time step.

Initial numerical experiments with standard, non-monotone finite
difference schemes for the above PDE, in particular the 7-point and
9-point stencils for the diffusion term, exhibited severe
instabilities for small mesh sizes.

In the following construction, we take advantage of a problem-specific
coordinate transformation which allows us to define a monotone, second
order accurate approximation to the second order terms. This will be
supplemented with either monotone and first order, or non-monotone and
second order, backward differentiation formulae (BDF) for the first
order derivative terms.

The second order version of the method is not theoretically guaranteed
to converge to the viscosity solution in the degenerate case, however,
recent results in Bokanowski and Debrabant \cite{bokanowski2018high}
and Bokanowski et al.~\cite{bokanowski2018stability} show stability of
BDF schemes in more regular cases and we will demonstrate excellent
empirical properties of the scheme below.

\subsubsection{Mesh construction and diffusion approximation}
\label{subsubsec:mcda}

We begin by simultaneously constructing a computational domain
$[K^2/x_{\max},x_{\max}]\times [y_{\min},1-y_{\min}] \subset
\mathbb{R}_+ \times (0,1)$ and a non-uniform tensor-product mesh on
that domain, where $x_{\max}$ and $y_{\min}$ will be chosen so as to
make the impact that imposing approximate data at the boundary has on
the quantities of interest negligible.

We first fix $x_{\max}$ and a positive integer $N$ to define the
$x$-coordinates of the mesh nodes by
\begin{eqnarray}
\label{x_mesh}
x_i = K \exp(\sigma (i-N/2) h), \quad 0\le i \le N,
\end{eqnarray}
so that $x_{N/2}=K$ for even $N$ and $h$ is chosen such that
$x_N= x_{\max}$. This non-uniform mesh is motivated by the
observation that the log transform $X\rightarrow \log X/\sigma$ leads
to a standard Brownian motion with stochastic drift, i.e.\ satisfying
the SDE
\begin{eqnarray}
\label{dlogX}
\ud \left( \frac{1}{\sigma} \log X_t \right) = \ud\widehat{W}_t +
\left( \frac{1}{\sigma} \left(\mu_0 - \sigma \eta
\widehat{Y}_t\right) - \frac{1}{2} \sigma \right) \, \ud t, 
\end{eqnarray}
and turns the differential operator $\mathcal{L}_{X,\widehat{Y}}$ into
one with constant coefficients in $x$.

By a similar application of It{\^o}'s formula, one can further derive
that, for $\hat{Y} \ne 0$ or $1$,
\begin{eqnarray}
\label{dlogY}
\ud \left( \frac{1}{\eta} \log\left(\frac{\widehat{Y}_t}{1-\widehat{Y}_t}\right) \right) = -\ud\widehat{W}_t + \left(\frac{1}{2} \eta (2\widehat{Y}_t-1) +
 \lambda \frac{1}{\eta \widehat{Y}_t} \right) \, \ud t.
\end{eqnarray}
Inverting the map on the left-hand side, we define a mesh for the
$y$-coordinate by
\begin{eqnarray}
\label{eq:y_i}
y_j = \frac{\exp(\eta (j-L/2) h)}{1+\exp(\eta (j-L/2) h)}, \quad 0\le j \le L,
\end{eqnarray}
where $L$ is chosen such that $y_0=y_{\min}$ (and hence
$y_L = 1-y_{\min}$), a sufficiently small value, and centered at
$y_{L/2}=1/2$ for even $L$.

The purpose of these transformations is to fix the principal component
of the diffusion matrix to $(-1,1)$ and facilitate the construction of
a monotone, second order, narrow (i.e., using only neighbouring mesh
points) scheme.  More concretely, combining the identities above, we
obtain by simple Taylor expansion for smooth $f$,
\begin{eqnarray}
\label{2d-stencil}
(D^2 f)(t,x_i,y_j) &:=&
\frac{f(t,x_{i-1},y_{j+1})-2 f(t,x_{i},y_{j}) + f(t,x_{i+1},y_{j-1})}{h^2} \\
\nonumber
&=&\hspace{0 cm} \frac{1}{2}\sigma^{2}x_i^{2}f_{xx} +
\frac{1}{2}\eta^{2}y_j^{2}(1-y_j)^{2}f_{yy} - \sigma \eta x_i y_j(1-y_j)f_{xy} \\
\nonumber
&&\hspace{0 cm} + \frac{1}{2} \sigma^2 x_i f_{x} + \frac{1}{2} \eta^2 y_j (1-y_j) (1-2 y_j) f_{y} + O(h^2),
\end{eqnarray}
where the derivatives on the right-hand side are evaluated at $(t,x_i,y_j)$.

The important feature of (\ref{2d-stencil}) is that the second-order
part of the operator is approximated up to order two in $h$ by a
one-dimensional finite difference in a diagonal direction, plus some
first order terms.

\subsubsection{Drift approximation}
\label{subsubsec:da}

We define the drift coefficients in \eqref{dlogX} and \eqref{dlogY} by
\begin{eqnarray*}
\mu_x(t,x,y) := \frac{1}{\sigma}(\mu_{0}-\sigma\eta y) - \frac{1}{2}
\sigma, \qquad \mu_y(t,x,y) := \frac{\lambda}{\eta} \frac{1}{y}  - \frac{1}{2}\eta
(1-2 y),  
\end{eqnarray*}
(with the subscripts on $\mu_x$ and $\mu_y$ not denoting partial
derivatives).  These are precisely the the drifts of $X$ and
$\widehat{Y}$ minus the ``correction terms'' from \eqref{2d-stencil}
which have to be subtracted from $D^2$ for a consistent discretisation
of the second order terms in the PDE.

We approximate the first derivative in $x$, with coefficient $\mu_x$,
by an ``upwinding'' approximation
\begin{eqnarray*}
(\mu_x D_x f)(t,x_i,y_j) &=& \left(\mu_x(t,x_i,y_j)\right)^+ (D_x^+
f)(t,x_i,y_j) + \left(\mu_x(t,x_i,y_j)\right)^- (D_x^-
f)(t,x_i,y_j), 
\end{eqnarray*}
where $(\cdot)^\pm$ denotes the positive and negative part,
respectively, and $D_x^\pm$ is either the one-sided first order BDF1
approximation defined by
\begin{eqnarray*}
(\overline{D}_x^\pm f)(t,x_i,y_j) := \mp \frac{f(t,x_i,y_j)-
  f(t,x_{i\pm 1},y_j)}{h} = \sigma x f_x(t,x_i,y_j) + O(h), 
\end{eqnarray*}
or the one-sided second order BDF2 approximation
\begin{eqnarray*}
(\widehat{D}_x^\pm f)(t,x_i,y_j) := \mp \frac{3 f(t,x_i,y_j)- 4
  f(t,x_{i\pm 1},y_j)+ f(t,x_{i\pm 2},y_j)}{2 h} =  \sigma x
  f_x(t,x_i,y_j) + O(h^2). 
\end{eqnarray*}
Two approximations to the first $y$-derivative are defined analogously.

\subsubsection{Timestepping and overall scheme}
\label{subsubsec:taos}

Combining the approximations above, for all points $(t,x_i,y_j)$ where
$f$ is smooth we have
\begin{eqnarray*}
\overline{L} f := D^2 f + \mu_x \overline{D}_x f + \mu_y \overline{D}_y f = 
\mathcal{L}_{X,\widehat{Y}} f +
O(h), \\
\widehat{L} f := D^2 f + \mu_x \widehat{D}_x f + \mu_y \widehat{D}_y f = 
\mathcal{L}_{X,\widehat{Y}} f +
O(h^2).
\end{eqnarray*}

For the time discretisation, we follow Forsyth and Vetzal
\cite{forsyth2002quadratic} and Reisinger and Whitley
\cite{reisinger2014impact} to define a non-uniform time mesh of $M+1$
points $t_m = T - (\sqrt{T}- m k)^2$, $m = 0,...,M$, for
$k=\sqrt{T}/M$.  This transformation is motivated by the square-root
behaviour of both the exercise boundary and the value function at the
strike close to maturity. The limited regularity prevents second order
convergence of uniform timestepping schemes (see Forsyth and Vetzal
\cite{forsyth2002quadratic}).

Taking into account this time transformation, we introduce either the
BDF1 scheme (implicit Euler scheme)
\begin{eqnarray*}
\frac{f(t_{m+1}, x_i,y_j)-f(t_{m}, x_i,y_j)}{k} + 2 m k (\overline{L} f -r f)(t_{m}, x_i,y_j) =\\
\left(\frac{\partial}{\partial t} + \mathcal{L}_{X,\widehat{Y}} - r\right) f(t_{m}, x_i,y_j) + O(k) + O(h),
\end{eqnarray*}
where $\overline{L}$ uses the BDF1 scheme for the drift also, or the
BDF2 scheme
\begin{eqnarray*}
\frac{- f(t_{m+2}, x_i,y_j) + 4 f(t_{m+1}, x_i,y_j) - 3 f(t_{m},
  x_i,y_j)}{2 k} + 2 m k (\widehat{L} f -r f)(t_{m}, x_i,y_j) = \\
\left(\frac{\partial}{\partial t} + \mathcal{L}_{X,\widehat{Y}} -
  r\right) f(t_{m}, x_i,y_j) + O(k^2) + O(h^2), 
\end{eqnarray*}
where $\widehat{L}$ uses the BDF2 scheme for the drift.  The finite
difference approximations are therefore consistent with $\mathcal{L}$
in \eqref{rep_def:L} of order 1 and 2, respectively.

We can hence define a scheme for the numerical approximation
$U^m = (U_{i,j}^m)_{i,j}$ to the ESO value function $u$ in the partial
information case in the interior of the mesh by
\begin{eqnarray}
\label{LCP}
\min\left(
\frac{U^{m}_{i,j}-U^{m+1}_{i,j}}{k} - 
2 m k
\left(
(\overline{L}- r I)  U^m\right)_{i,j},
U_{i,j}^m - \max(x_i-K,0)
\right) = 0,&& \\
\nonumber
0\le m<M, \ 0<i<N, \ 0<j<L,&&
\end{eqnarray}
in the case of BDF1, and similarly in the case of BDF2.

From the construction of $\overline{L}$, the left-hand side of
\eqref{LCP} is increasing in $U^{m}_{i,j}$, and decreasing in
$U^{m'}_{i',j'}$ for all $(m',i',j')\neq (m,i,j)$, and therefore
satisfies the definition of monotonicity in Barles and Souganidis
\cite{barles1991convergence}.  The monotonicity is violated for the
BDF2 scheme due to the alternating signs in the approximations to the
first time and space derivatives.  It is shown in Bokanowski and
Debrabant \cite{bokanowski2018high} that such schemes still have good
stability properties for American options under
Black-Scholes. Although this analysis is not applicable here due to
the degeneracy of the diffusion operator, we observe no stability
issues in the numerical tests.  We emphasise that the judicious choice
of mesh and discretisation of the second derivative terms is crucial
for the stability of the scheme, due to again the degeneracy.

Summarising, we obtain the following properties of the schemes.

\begin{proposition}
\label{prop:bdfscheme}

The BDF1 scheme \eqref{LCP} is monotone and consistent with
\eqref{lcp-incomp} in the interior
$(-K^2/x_{\max}^2,x_{\max})\times (y_{\min},1-y_{\min})\times (0,T)$,
of first order in both $h$ and $k$.  The BDF2 scheme is non-monotone
and consistent of second order in both $h$ and $k$.

\end{proposition}

\subsubsection{Boundary and terminal conditions}
\label{subsubsec:batc}

We have four spatial boundaries with different characteristics as a
result of the degeneracy of the drift and diffusion coefficients at
some of the boundaries. The appropriate approximation of the boundary
conditions is therefore essential for convergence to the correct
solution of the initial boundary value problem. We discuss the
boundaries in some detail in turn.

For $x=0$, we set
\begin{eqnarray*}
U_{0,j}^m = 0, \qquad 0\le m<M, 0\le j\le L.
\end{eqnarray*}

For $x=x_{\max}$, we set
\begin{eqnarray*}
U_{N,j}^m = \max(x_N-K, C(t_m,x_N,y_j)), \qquad 0\le m<M, 0<j<L,
\end{eqnarray*}
where $C(t,x,y)$ is the Black-Scholes price of a European call option
at time $t$ and for underlying asset price $X_0=x$, with constant
interest rate $r$ and dividend yield $r-(\mu_0-\eta\sigma y)$,
volatility $\sigma$, strike $K$ and maturity $T$.  For those $y$ where
we can choose $x_{\max}$ such that $x^\star(T,y) \le
x_{\max}$, 
the assumed boundary value coincides with the value function exactly.
Generally, if $x^\star(T,y) > x_{\max}$ for some $y$, but with
$x_{\max}$ several standard deviations away from $K$, the
approximation error in the region of interest will be small.

For $y\rightarrow 0$, we have
\begin{equation*}
\mathcal{L}_{X,\widehat{Y}}f \rightarrow \mu_{0} x f_{x} +
\frac{1}{2}\sigma^{2}x^{2} f_{xx} + \lambda f_{y},
\end{equation*}
which we approximate at $(t_{m}, x_i,y_0)=(t_{m}, x_i,y_{\min})$ for
$0<i<N$ by
\begin{equation*}
\left(\frac{\mu_{0}}{\sigma} - \sigma\eta y_{\min} -
\frac{\sigma}{2}\right) D_x f + \frac{1}{2} D_x^2 f + 
\frac{\lambda}{\eta} \frac{1}{y_{\min}}D_y^+ f,
\end{equation*}
where
$D_x^2 f(t_{m}, x_i,y_{\min}) = (f(t_{m}, x_{i+1},y_{\min})-2 f(t_{m},
x_i,y_{\min}) + f(t_{m}, x_{i-1},y_{\min}))/h^2$.  As the coefficient
of the first $y$-derivative is positive, a right-sided difference
(i.e., using only points in the interior of the domain) is appropriate
and preserves monotonicity of the scheme.

For $y\rightarrow 1$, we have
\begin{equation*}
\mathcal{L}_{X,\widehat{Y}}f \rightarrow \mu_{0} x f_{x} +
\frac{1}{2}\sigma^{2}x^{2} f_{xx},
\end{equation*}
which we approximate at $(t_{m}, x_i,y_L)=(t_{m}, x_i,1-y_{\min})$ for
$0<i<N$ by
\begin{equation*}
\left(\frac{\mu_{0}}{\sigma} - \sigma\eta y_{\min} -
\frac{\sigma}{2}\right) D_x f + \frac{1}{2} D_x^2 f, 
\end{equation*}
using only boundary points.

As $y_{\min} \rightarrow 0$, the above approximations are consistent
with the equation at $y=0$ and $y=1$, respectively.  For fixed
$y_{\min}$, to compute the solution at time $t_m$ at a spatial point
$(x_i,y) \in \{x_i\} \times [0,y_{\min})$, i.e.\ outside the
computational domain, we extrapolate linearly from $y_0=y_{\min}$ by
$U_{i,0}^m + (y-y_0) (U_{i,1}^m-U_{i,0}^m)/(y_1-y_0)$. This is of
second order accurate in $y_{\min}$ as the solution is smooth in this
region. In particular, this is how the value in the regime $Y=0$ is
computed.

Lastly, the numerical terminal condition at $t=T$ is
\begin{eqnarray*}
U_{i,j}^M = \max(x_i-K,0), \qquad 0\le i \le N, 0\le j \le L.
\end{eqnarray*}

\subsubsection{Penalisation and Newton iteration}
\label{subsybsec:pani}

We now consider the penalty approximation
\begin{eqnarray}
\label{penalty}
\qquad
\frac{V^{m+1}_{i,j}-V^m_{i,j}}{k} + 
2 m k
\left(
(\overline{L}- r I)  V^m\right)_{i,j}
  + \rho \max\left(\max(x_i-K,0)-V_{i,j}^m,0 \right) = 0
\end{eqnarray}
for a penalty parameter $\rho > 0$, in the case of BDF1, and similarly
in the case of BDF2.

Defining $P$ as the $(N+1)\times (L+1)$ vector with
$P_{i,j} = \max(x_i-K,0)$ and $D(V)$ as the $((N+1)\times (L+1))^2$
diagonal matrix with $D_{(i,j),(i,j)}(V)=1$ if $V_{i,j}<P_{i,j}$ and 0
otherwise, this can be re-written as
\begin{equation*}
\left((1 + r mk) I-2 k (m k) \overline{L}\right) +\rho k D(V^m)) V^m =
k V^{m+1} + D(V^m) P.  
\end{equation*}

The solution of this type of equation by semi-smooth Newton iterations
is discussed in \cite{forsyth2002quadratic}.  In the case of the BDF1
scheme, $-\overline{L}$ is an M-matrix and hence
$(1 + r mk) I-2 k (m k) \overline{L}$ is a strictly diagonally
dominant M-matrix. This guarantees on the one hand convergence of the
solution of the penalised solution $V=V(\rho)$ of \eqref{penalty} to
$U$ from \eqref{LCP} as $\rho\rightarrow \infty$, and also convergence
of the Newton iteration in finitely many steps.  In practice, we can
choose the penalty parameter very large (e.g., $10^{10}$) to make the
difference between $V$ and $U$ negligible, without a negative impact
on other properties of the scheme.

We end by stating without detailed proof the convergence result for
the first order scheme.

\begin{proposition}
\label{prop:convfos}
  
The solution $V$ of the penalised BDF1 scheme \eqref{penalty}
converges to the solution $u$ of \eqref{lcp-incomp} uniformly on
compact subsets of $(0,T)\times (0,\infty)\times (0,1)$ as $k$, $h$,
$y_{\min}$ $\rightarrow 0$ and $x_{\max}$, $\rho$ $\rightarrow
\infty$.

\end{proposition}

We report the number of required Newton iterations, alongside the
empirically observed convergence order, below.

\subsection{The full information case}

We begin by observing that the full information ESO value function
$v(t,x,i)\equiv v_{i}(t,x)$, $i=0,1$, satisfying
(\ref{eq:pde1})-(\ref{eq:vibc0}), is also the unique solution in
$[0,T]\times\mathbb R_{+}\times\{0,1\}$ of the equivalent linear
complementarity problem (LCP)
\begin{eqnarray*}
\min\left(-\mathcal{L}_{0}v_{0}(t,x) + \lambda\left(v_{0}(t,x) - v_{1}(t,x)\right),
v_0- (x-K)^{+} \right) = 0, && \quad x\in \mathbb R_{+}, \; t\in[0,T), \\   
\min\left(-\mathcal{L}_{1}v_{1}(t,x), v_1- (x-K)^{+} \right) = 0, &&
\quad x\in \mathbb R_{+},  \; t\in[0,T), \\  
v_{i}(T,x) =  (x-K)^{+},  &&\quad x\in\mathbb R_{+},  \; i=0,1,
\end{eqnarray*}
where we repeat for convenience
\begin{equation*}
\mathcal{L}_{i}f(t,x) = \left(\frac{\partial}{\partial t} +
\mu_{i}x\frac{\partial}{\partial x} +
\frac{1}{2}\sigma^{2}x^{2}\frac{\partial^{2}}{\partial
x^{2}} - r\right)f(t,x),  \quad i=0,1.
\end{equation*}

We approximate this LCP by
\begin{eqnarray*}
\min\left(
\frac{V^{0,m}_{i}-V^{0,m+1}_{i}}{k} - (L  V^{0,m})_{i} + \lambda
(V^{0,m}_{i}-V^{1,m}_{i}), 
V_{i}^{0,m} - \max(x_i-K,0)
\right) = 0,&& \\
\min\left(
\frac{V^{1,m}_{i}-V^{1,m+1}_{i}}{k} - (L  V^{1,m})_{i},
V_{i}^{1,m} - \max(x_i-K,0)
\right) = 0,&& \\
\nonumber
0\le m<M, \ 0<i<N,&&
\end{eqnarray*}
where $x_i$ is as in \eqref{x_mesh} and
\begin{equation*}
(L  V^{j,m})_{i} = 
\left(\mu_{j} -\frac{1}{2}\sigma^2\right)  \frac{V^{j,m}_{i+1}-
V^{j,m}_{i-1}}{2 h} + 
\frac{1}{2} \frac{V^{j,m}_{i+1}- 2 V^{j,m}_{i}+ V^{j,m}_{i-1}}{h^2} 
- r V^{j,m}_{i}.
\end{equation*}

Consistency and monotonicity, and hence convergence, follow directly
in this case.  The scheme is of first order in $k$ and of second order
in $h$. The computational complexity is smaller than in the
two-dimensional case though and we therefore do not propose a
second-order version.  Penalisation is now applied separately to the
two components, and a Newton iteration can be applied in the natural
way to the system of equations.

\subsection{Numerical tests}
\label{subsec:numtests}

We discuss here some tests for the numerical performance of the
partial information algorithm.  The full information case is
straightforward and we do not report our test results here. In this
section, we test in detail the convergence of the finite difference
scheme with respect to the discretisation parameters.  The financial
parameters chosen are $\sigma=0.3$, $\lambda=0.1$, $\mu_0=0.08$,
$\mu_1=-0.05$, $r=0.025$, $T=10$, $K=100$.  The truncation parameters
were $y_{\min}=0.02$, $x_{\max}=8 K$, and the mesh parameters $h$ and
$k$ varied as detailed below.

We list in Table \ref{tab:numer} various quantities of interest for
different mesh refinements, for both the BDF1 and BDF2 scheme, where
$N$ and $L$ are (as above) the number of mesh intervals in the $x$ and
$y$ directions, and $M$ the number of timesteps.  The numbers for $N$
and $M$ are arrived at by the rule
$N=2 \lceil N_0 \sqrt{2}^{n} \rceil$, $n\ge 0$, with $N_0=8$, and
$M= \lceil M_0 \sqrt{2}^{n} \rceil$, $n\ge 0$, with $M_0=16$. This is
motivated by the identical convergence order in $h$ and $k$ for each
of the schemes.  Then, $L$ is determined as explained below
\eqref{eq:y_i} and also proportional to $N$ and $M$.  We ensure
moreover that $N$ is even for the mesh construction above. Here, $N_0$
and $M_0$ are chosen empirically so that the errors from the time and
space discretisation are similar. The fact that we arrived at
$N=L\approx M$ for these particular model parameters is coincidental.

\begin{figure}
\includegraphics[width=0.7\textwidth]{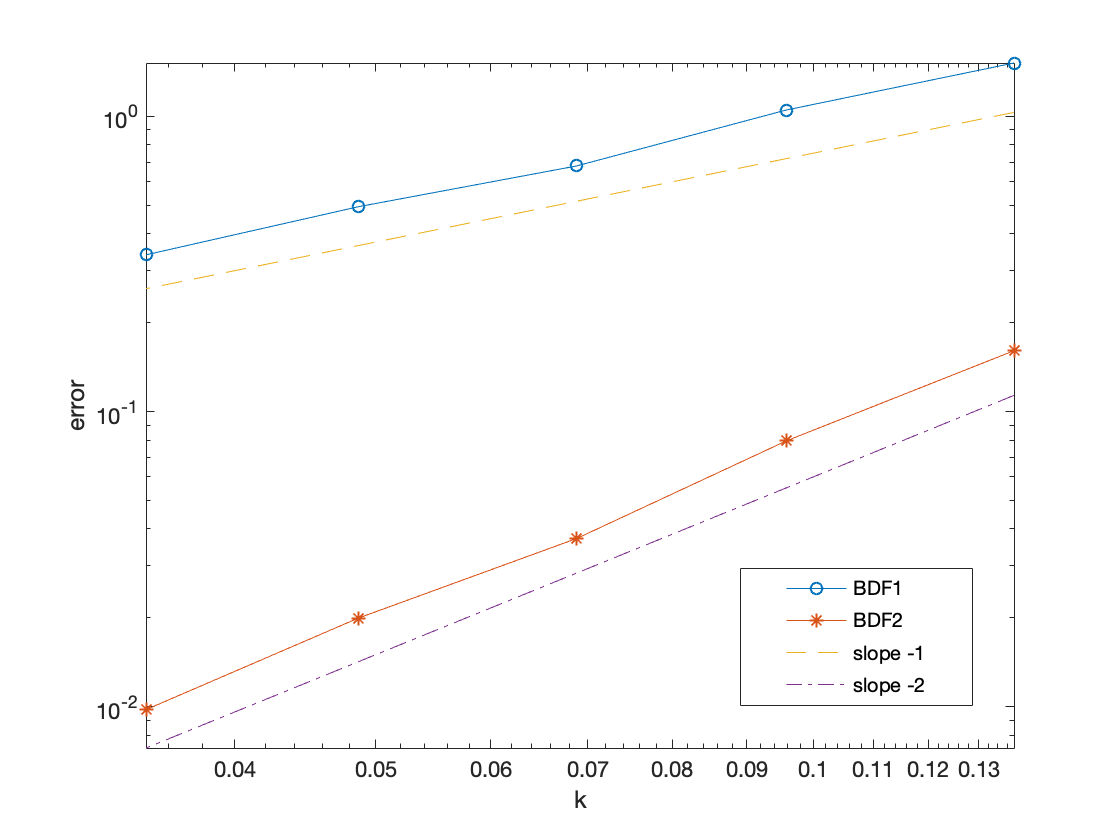}
\caption{Estimated pointwise errors for decreasing timesteps as in
  Table \ref{tab:numer}. The comparison with lines of slope -1 and -2
  in the loglog plot demonstrates first and second order convergence
  of the BDF1 and BDF2 scheme, respectively.}
\label{fig:numer}
\end{figure}

\begin{table}[h]
\begin{tabular}{|cc|cccc|cc|}
\hline
&& BDF2 &&&&BDF1& \\
\hline
$N = L$ & $M$ & error & order & av.\ iter. & CPU (s)  & error & order \\
\hline
24   &   23 &  $1.61 \cdot 10^{-1}$    &    -                &         2.4   &      0.38     &    $1.52 \cdot 10^{0}$            &         -                            \\
34   &   33 &   $7.95 \cdot 10^{-2}$  &    2.03           &           2.5   &      1.2   &        $1.05 \cdot 10^{0}$           &         1.06                                  \\
46   &   46 &    $3.71 \cdot 10^{-2}$  &    2.20           &             2.6   &    2.9      &       $6.79 \cdot 10^{-1}$           &         1.26                                 \\
66   &   65 &   $2.00 \cdot 10^{-2}$    &    1.79           &             2.6   &    7.6     &      $4.94 \cdot 10^{-1}$            &        0.92                               \\
92   &   91 &  $9.77 \cdot 10^{-3}$    &     2.06          &             2.7     &  34    &     $3.39 \cdot 10^{-1}$          &         1.08                          \\       
\hline 
\end{tabular}
\caption{\label{tab:numer} For a sequence of meshes, given are:
estimated pointwise errors of the BDF1 and BDF2 schemes; the
resulting convergence orders; the average number of Newton
iterations; and the run time.}
\end{table}

The numerical solution is evaluated at $(t,x,y)=(0,K,1/2)$ and then
the error (third and seventh column) estimated by extrapolation from
the solutions for subsequent mesh refinements; the order (fourth and
eigth column) is then estimated from the errors for consecutive
meshes.  The numbers clearly demonstrate first order and second order
convergence for the BDF1 and BDF2 scheme, respectively.  This
behaviour is further illustrated in Figure \ref{fig:numer}. The error
on the finest level is smaller than $0.01$ absolutely, or 1 basis
point given a strike of 100.

We also report in Table \ref{tab:numer} the number of Newton
iterations needed to solve the nonlinear system, averaged over all
time points. For non-uniform meshes, the number is typically higher
close to maturity due to the singular behaviour of the exercise
boundary, but this effect is alleviated by the local refinement.

The total number of unknowns increases by a factor of
$\sqrt{2}^3 \approx 2.8$ upon refinement, and this is a lower bound
for the asymptotic increase in computational complexity. In practice,
the cost of solving each linear system within the Newton iteration,
involving a sparse block-tridiagonal matrix, using the default sparse
equation solver in Matlab, increases superlinearly. For optimised
performance a multigrid solver as in Reisinger and Rotaetxe Arto
\cite{reisinger2017boundary} could be used.  Both the iteration count
and computational time are very similar between the two schemes, and
we only report the BDF2 ones.

\section{Numerical results: ESO exercise \& valuation}
\label{sec:numex}

This section demonstrates numerically the exercise policies of the
agents in Section \ref{subsec:ex}.  In Section \ref{subsec:postex}, we
undertake a study of post-exercise stock returns which supports the
approach taken in the empirical literature on private information. We
consider the impact of the information differential on ESO valuation
in Section \ref{subsec:val}.

\subsection{Difference in exercise policies due to information
  differential} \label{subsec:ex} 

\begin{figure}[!htbp]	\centering
{\includegraphics[width=\textwidth] {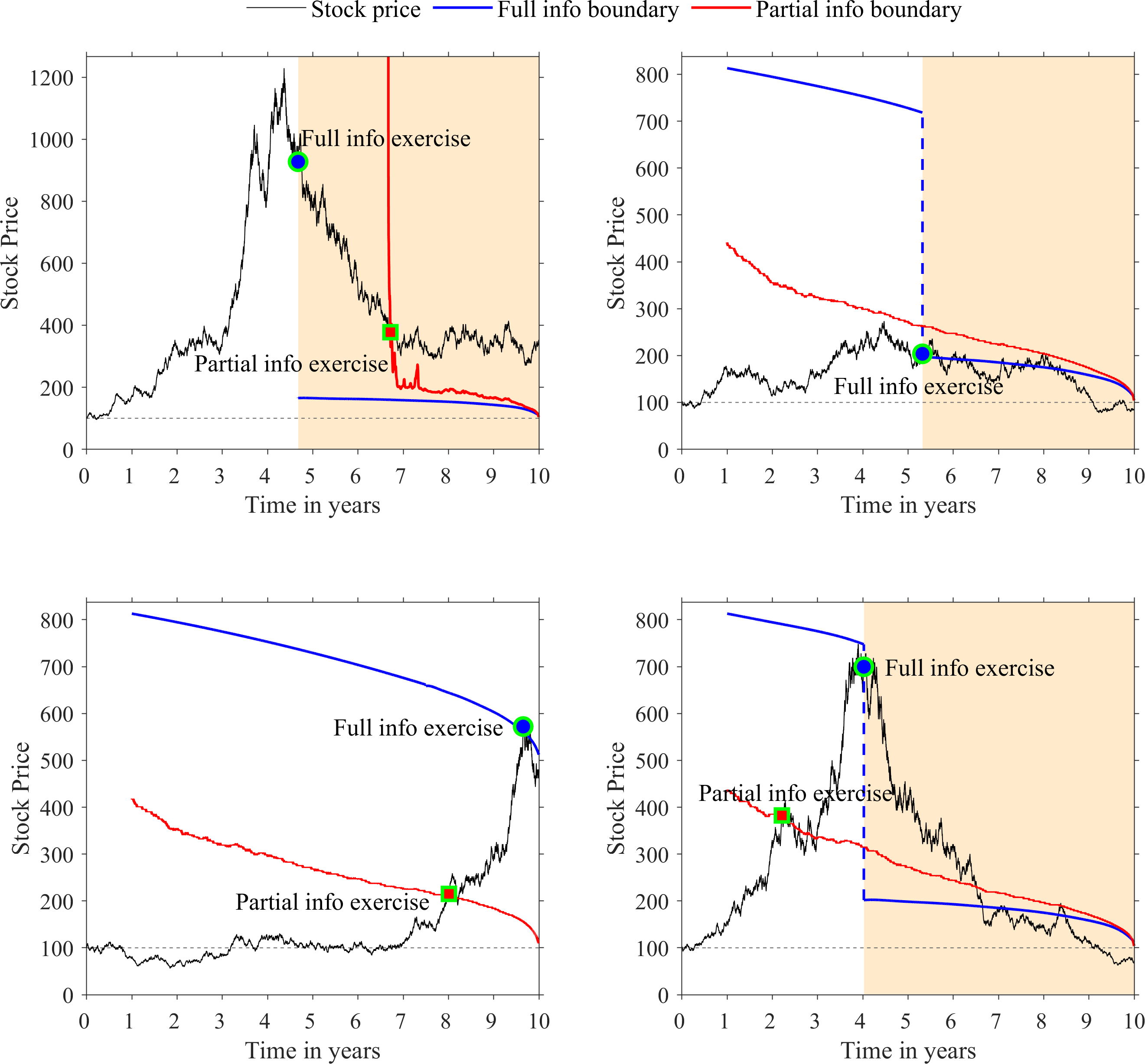}}
\caption{\small{ {\bf Monte Carlo simulations of the stock price,
      thresholds and exercise decisions of the agent's with full and
      partial information.}  In each panel we display the stock price,
    the exercise boundary for the full information case, and the
    exercise boundary for the partial information model, with
    $\hat{Y}_{0} = \mathbb{E} [Y_{0}] = y_0 = 0$.  Exercise decisions
    of the (full information agent, partial information agent with
    $y_0=0$) are marked with (circles, squares).  The option maturity
    is ten years with a one-year vesting period $t_v = 1$, and granted
    at-the-money with $X_0=K=100$.  In each panel, the shaded
    background indicates the switch in drift regime to
    $\mu_1 < \mu_0$.  In all panels, we take parameter values for the
    transition intensity $\lambda = 10\%$, volatility $\sigma=30\%$,
    and the riskfree rate is $r=2.5\%$.  In the top left panel,
    expected returns are given by $\mu_0 = 8\%, \mu_1 = -5\%$ so that
    $\mu_0 > r > \mu_1$ holds. In all other panels, expected returns
    in the two regimes are $\mu_0 = 2\%$, $\mu_1 = -2\%$, so that
    $r > \mu_0 > \mu_1$.}} 
\label{fig:montecarlo}
\end{figure}

We are primarily interested in the difference between the exercise
policies for the agents, due to the information differential they
have. To illustrate exercise patterns for both agents, we numerically
solve for the thresholds of both types of agents, and simulate the
stock price to demonstrate exercise behaviour.  A set of outputs with
various parameter values are plotted in Figure {\ref{fig:montecarlo}}.
In each panel we display the stock price, the exercise boundary for
the agent with full information, $x^*_{i}(t) ; i=0,1; t \in [0,T]$,
and the partially informed agent's exercise boundary,
$x^*(t, .); t \in [0,T]$ with
$\hat{Y}_{0} = \mathbb{E} [Y_{0}] = y_0 = 0$. We set the switch
intensity to be $\lambda = 10\%$ which implies a probability of 63 \%
of $\mu_0$ switching to $\mu_1$ during the option's life. Given the
``vast majority of options are granted at-the-money'' with maturities
of ten years (Carpenter et al.~\cite{csw2015}) we consider an ESO
granted at-the-money with $X_0=K=100$ and maturity $T=10$ years. We
include a vesting period of one year, $t_v = 1$. The shaded area in
each panel denotes the time after the changepoint has occurred,
ie. the drift has switched from $\mu_0$ to $\mu_1$. Exercise decisions
are recorded on each plot for both the partial information agent (with
a square) and the fully informed agent (with a circle).

In the top-left panel, we observe, since $ \mu_0 > r$,
$x^*_{0}(\cdot) = \infty$ and no exercise occurs before the change
point. The agent with full information exercises on the change point.
The threshold of the partially informed agent, $x^*(\cdot,\cdot) $,
rapidly drops from infinity following the change point, as the
filtering puts higher weight on the switch having occurred. The agent
with partial information exercises as the stock price reaches the
threshold. However, the fully informed agent has obtained a far larger
option payoff in this scenario.

The remaining three panels consider the case $ r > \mu_0 > \mu_1$.
The upper-right panel demonstrates a scenario where the stock price is
not performing as well as in the left panel, and the agent with
partial information never exercises. The agent with full information
exercises on the change point, although the stock price does go
slightly higher after that. The agent with full information has
obtained a higher option payoff than the agent with partial
information, as the latter never exercises and the option is
out-of-the-money at maturity.
 
In the lower-left panel, where no change point occurs before option
maturity, consistent with Proposition \ref{prop:tvfull},
$x^*_{0} (T-) = \max( K, \frac{r}{r-\mu_{0}}K ) = 500$.  In this
panel, the stock does very well. The stock price first reaches the
boundary of the partially informed agent and finally, the much higher
boundary of the agent with full information.  Under this scenario, the
fully informed agent has benefited from the additional information
(the knowledge that the switch has not occurred) and has secured a
much higher payoff than the agent with partial information.

Finally, the lower-right panel demonstrates a scenario where the agent
with full information exercises in direct response to the switch and
benefits from the additional information. In this panel, the partial
information agent has already exercised as the stock price crosses
their boundary. The agent with full information continues to wait as
he knows the switch has not occurred. He then benefits with a larger
exercise payoff by exercising exactly at the change point.

In all panels, we observe that the boundaries respect the mathematical
results of Sections \ref{sec:fiesop} and \ref{sec:piesop}.  The full
information boundaries are in accordance with Corollary
\ref{corr:fiexbds} since we can observe the ordering
$x^*_{0}(t) \geq x^*_{1}(t) \geq K$ for the three panels where
$ r > \mu_0 > \mu_1$, and, when $\mu_0 > r$, we see
$x^*_{0}(t) = \infty$.  For any $\mu_i$, we have $ x^*_{i}(T) = K$,
and $x^*_{i} (T-) = \max(K, \frac{r}{r-\mu_{i}}K)$ for $\mu_i < r$
from Proposition \ref{prop:tvfull} is also satisfied.  In the top left
panel with $\mu_0 > r$, consistent with Remark \ref{rem:murpi}, we
have no early exercise for the agent with partial information.  The
exercise boundary for the agent with partial information,
$x^{*}(t,.)$, is indeed decreasing in $t$, in accordance with Lemma
\ref{lem:cmtdpi}, and the boundaries respect Lemma
\ref{lem:tvpartial}.

In Figure \ref{fig:surface}, we illustrate the complete exercise
surfaces generated by the model for the agents with full and partial
information.  We plot the full information thresholds,
$x^*_0(t), x^*_{1}(t); t \in [0,T]$ and the partial information
surface, $x^*(t, y); t \in [0,T], y \in [0,1]$.  The behaviour with
the full and partial information thresholds with respect to time is
consistent with that displayed in Figure \ref{fig:montecarlo}. For
example, consistent with Proposition \ref{prop:tvfull}, we have for
the full information boundaries,
$x^*_{0}(10-) = 500, x^*_{1}(10-)=100$.  Turning to the behaviour of
the thresholds with respect to varying $\hat{Y}$, the exercise surface
for the agent with partial information, $x^{*}(t,y)$ is indeed
decreasing in $y$, in accordance with Lemma \ref{lem:cmtdpi}.

\begin{figure}[!htbp]
\centering
{\includegraphics[width=0.8\textwidth] {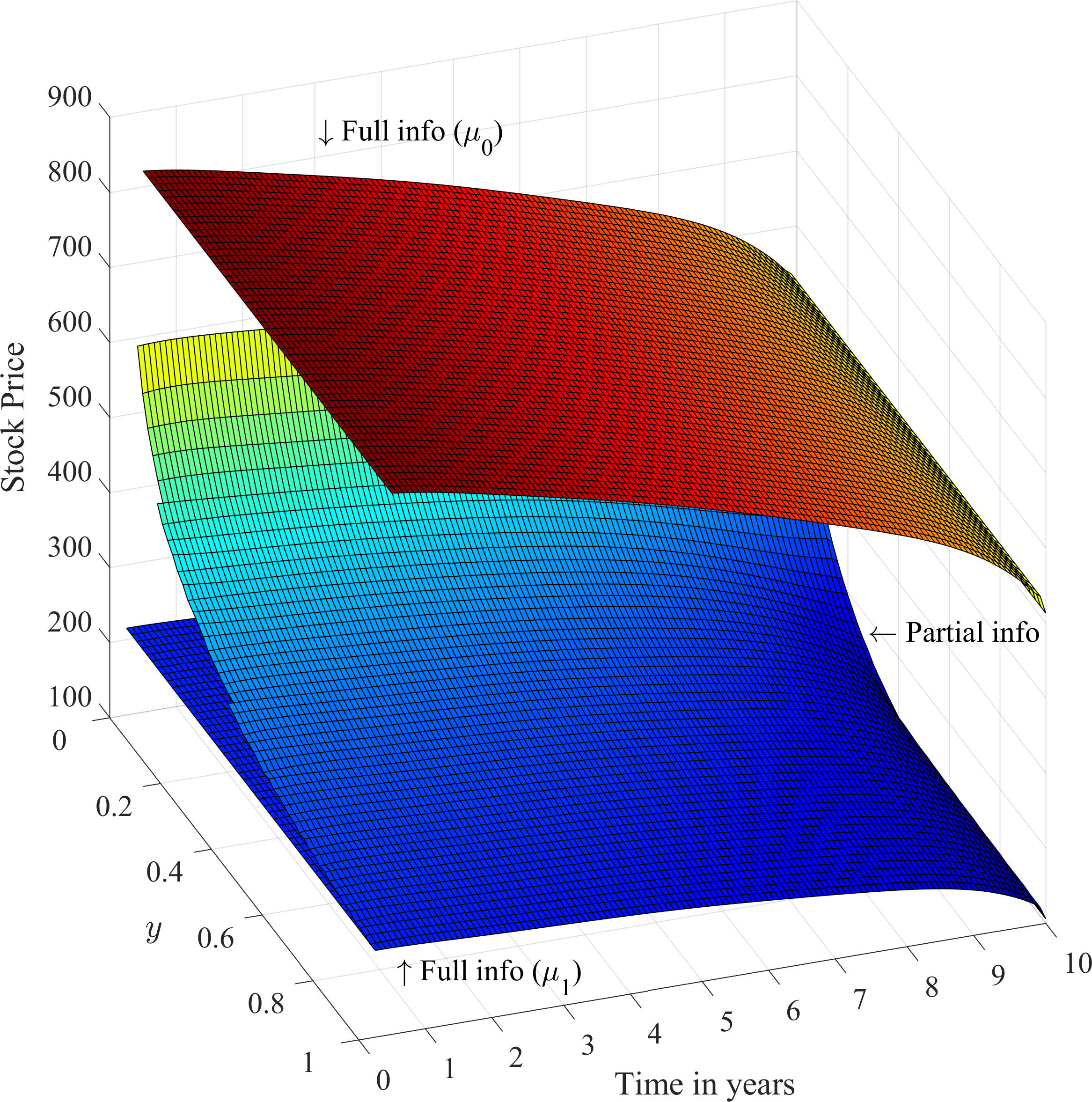}}  
\caption{ \small{ {\bf Exercise surfaces under full and partial
information against time and $y \in [0,1]$ the spatial dependence
arising from the filtered process $\hat{Y}$.}  The uppermost and
lowermost surfaces are those of the agent with full information: the
uppermost surface $x^*_0 (.)$ in regime 0 with $\mu_0$, and the
lowermost surface $x^*_1 (.)$ in regime 1 with $\mu_1$. These do not
depend upon $y$ so each surface for the full information agent is
constant in the $y$ direction, and has been plotted for comparison
with the surface of the agent with partial information.  The exercise
surface $x^{*}(t,y); t \in [0,T], y \in [0,1]$ for the agent with
partial information lies between the two surfaces from the full
information problem.  The option maturity is ten years and granted
at-the-money with $X_0=K=100$.  Expected returns in the two regimes
are $\mu_0 = 2\%$, $\mu_1 = -2\%$, transition intensity $\lambda =
10\%$, volatility $\sigma=30\%$, and the riskfree rate is $r=2.5\%$.
}}
\label{fig:surface}
\end{figure}

\subsection{An application to post-exercise returns}
\label{subsec:postex} 

\begin{figure}[!htbp]
\centering
{\includegraphics[width=0.8\textwidth] {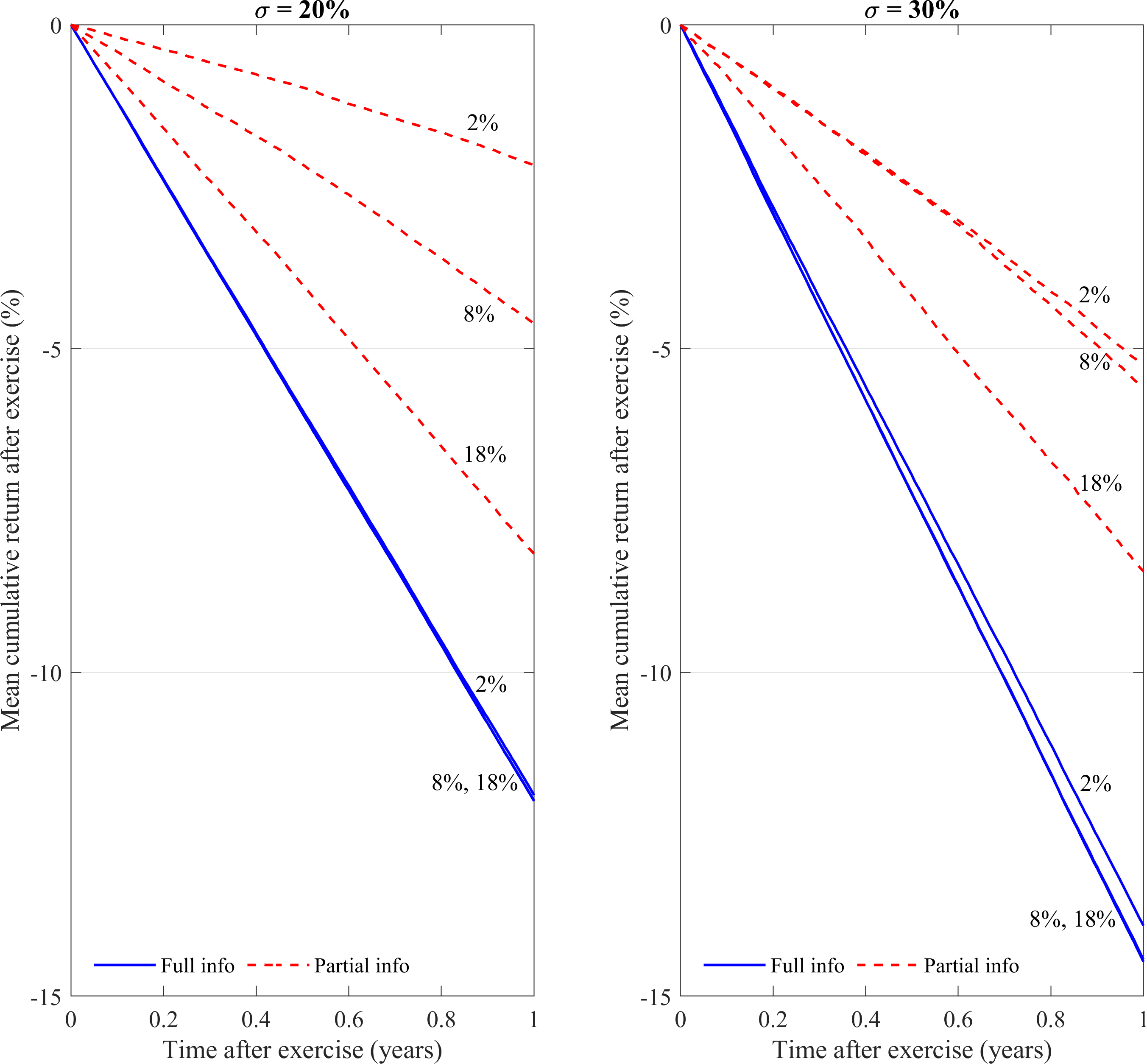}}  
\caption{ \small{ {\bf Mean cumulative post-exercise returns
with full and partial information over one year.}  In the left panel,
volatility is $\sigma=20\%$, in the right panel, volatility is
$\sigma=30\%$.  The expected return $\mu_1 = -10\%$ is fixed, and
expected return $\mu_0 = 2, 8, 18\%$.  The transition intensity is
$\lambda = 10\%$, and the riskfree rate is $r=2.5\%$.  The option
maturity is ten years and granted at-the-money with $X_0=K=100$.
Simulations use 1 million price paths.  }}
\label{fig:postex}
\end{figure}

In this section, we demonstrate how our model can be linked to the
empirical finance literature on private information and the exercise
of ESOs. In fact, our model provides a consistent theoretical
foundation for the empirical tests conducted in this literature.  A
body of papers (Aboody et al.~\cite{aboody08}, Brooks et
al.~\cite{brooks2012} and Cicero \cite{cicero09}) aim to identify and
evidence that executives use private information when exercising their
company ESOs. (Note these papers, and ours, do not take any stance on
the legality of such exercises). The idea is ``if the executive has
negative information, the stock (owned by them) would almost surely be
sold, and in all likelihood the stock would perform poorly for a
period of time thereafter." (Brooks et al.~\cite{brooks2012},
p733). These studies examine ESO exercise data in which the stock is
sold upon exercise.  The general approach is then to examine the
long-term abnormal returns after the exercise of ESOs. If the abnormal
returns are significantly negative following exercise, there is
support for the explanation of private information being a factor in
exercise decisions.  Brooks et al.~\cite{brooks2012} match firms with
ESO exercises of top executives, believed to hold private information,
to firms with no record of top executive ESO exercises, but with
similar firm characteristics. They observe one year of stock data
following each top executive option exercise, and compute the BHAR
(buy-and-hold-abnormal returns) to be the so-called insider returns
minus the matched returns.  Brooks et al.~\cite{brooks2012} find
strong evidence of ESO exercise due to insider information, via
significant negative differences in the returns. Insider exercises are
linked to significantly negative post-exercise returns over the
following year.

We use our model of differential information to generate post-exercise
returns over the following year, and compare any difference between
returns following exercises by our agent with full information versus
our agent with partial information. Our full information agent knows
the change point in the stock price process, when the expected return
of the stock drops.  Then, if our model is to be consistent with the
approach of Brooks et al.~\cite{brooks2012}, we need to demonstrate
that the difference between the average post-exercise returns from
fully and partially informed agents is also negative.  To be in line
with the literature, we consider simulated returns for one year
following each option exercise, and we only include exercises which
are more than one year before option maturity (exercises closer to
maturity are considered less likely to be information related).

In Figure \ref{fig:postex} we display results of the simulations.  The
left-hand panel uses volatility 20\% whilst the right-hand panel uses
30\%. We keep the expected return after a change point fixed at
$\mu_1= -10\%$, but take three values for the expected return $\mu_0$.
We first observe that the mean cumulative log-returns post-exercise
for the case of full information do not vary much with the different
values of initial expected return $\mu_0$. Recall from Corollary
\ref{corr:fiexbds}, with full information, and with
$\mu_0 = 8\%, 18\% > r $, exercises occur only in the bad state.  Thus
the one year log-returns are $\mu_1 - 0.5 \sigma^2$ (For the left
panel, -12\%, and for the right panel, -14.5\%).  With $\mu_0 = 2\%$,
there are some early exercises in the good state, and their occurrence
increases with volatility, as shown by the plots.  With only partial
information, the cumulative log-returns post-exercise vary much more
with the value of $\mu_0$.  We see the post-exercise returns are
worse, the higher the expected return $\mu_0$. The one year
log-returns for the partial information case vary between about -2.2\%
to -8.2\% when volatility is 20\%, and -5.3\% and 8.4\% for volatility
30\%.

Overall, the simulations support our conjecture that indeed, exercises
by the agent with full information are followed by significantly
negative stock returns, and the difference between average
post-exercise returns for fully and partially informed agents is
significantly negative.  For our simulations, this difference between
mean post-exercise returns for fully and partially informed agents
varies between about -3.8\% and -9.7\%, depending on the expected
stock return $\mu_0$ and volatility, covering the range of values
reported by Brooks et al.~\cite{brooks2012}.  Our model thus provides
theoretical support for the tests conducted in the empirical
literature to evidence so-called insider exercises.

\subsection{ESO valuation}
\label{subsec:val} 

\begin{table}
\begin{center}
\renewcommand{\arraystretch}{1.2}
\setlength\tabcolsep{2.5pt}

\begin{tabular}{|cc|dd|dd|dd|dd|dd|dd|}
\hline
\multicolumn{2}{|c|}{read:} &
  \multicolumn{6}{c|}{$\lambda = 10\%$} &
  \multicolumn{6}{c|}{$\lambda = 20\%$}
  \\
\cline{1-2}$A_V$ &
  $A_U$ &
  \multicolumn{6}{c|}{$\sigma = 20\%$} &
  \multicolumn{6}{c|}{$\sigma = 20\%$}
  \\
  $E_V$ &
  $E_U$ &
  \multicolumn{6}{c|}{$\mu_1$} &
  \multicolumn{6}{c|}{$\mu_1$}
  \\
$V$ &
  $U$ &
  \multicolumn{2}{c|}{$-2\%$} &
  \multicolumn{2}{c|}{$-5\%$} &
  \multicolumn{2}{c|}{$-10\%$} &
  \multicolumn{2}{c|}{$-2\%$} &
  \multicolumn{2}{c|}{$-5\%$} &
  \multicolumn{2}{c|}{$-10\%$}
  \bigstrut[b]\\
\hline
\multicolumn{1}{|c}{\multirow{9}[6]{*}{$\mu_0$}} &
  \multirow{3}[2]{*}{$2\%$} &
  2.9 &
  1.5 &
  5.3 &
  2.7 &
  7.7 &
  3.8 &
  4.2 &
  3.1 &
  7.4 &
  5.3 &
  10.5 &
  7.1
  \\
 &
   &
  23.1 &
  23.1 &
  19.4 &
  19.4 &
  16.0 &
  16.0 &
  18.9 &
  18.9 &
  13.5 &
  13.5 &
  8.9 &
  8.9
  \\
 &
   &
  26.0 &
  24.6 &
  24.7 &
  22.1 &
  23.7 &
  19.8 &
  23.0 &
  21.9 &
  20.9 &
  18.8 &
  19.4 &
  16.0
 \\
\cline{2-14} &
  \multirow{3}[2]{*}{$8\%$} &
  5.5 &
  0.0 &
  9.8 &
  0.3 &
  14.6 &
  1.3 &
  7.3 &
  0.7 &
  13.0 &
  2.5 &
  19.0 &
  5.0
  \\
 &
   &
  60.0 &
  60.0 &
  54.5 &
  54.5 &
  48.8 &
  48.8 &
  40.6 &
  40.6 &
  32.9 &
  32.9 &
  25.5 &
  25.5
  \\
 &
   &
  65.4 &
  60.0 &
  64.2 &
  54.8 &
  63.4 &
  50.0 &
  47.9 &
  41.3 &
  45.9 &
  35.5 &
  44.5 &
  30.5
 \\
\cline{2-14} &
  \multirow{3}[2]{*}{$18\%$} &
  12.8 &
  0.4 &
  21.9 &
  1.8 &
  33.3 &
  4.8 &
  16.1 &
  1.8 &
  27.8 &
  5.2 &
  41.5 &
  10.5
  \\
 &
   &
  210.6 &
  210.6 &
  200.5 &
  200.5 &
  188.6 &
  188.6 &
  122.8 &
  122.8 &
  109.6 &
  109.6 &
  94.9 &
  94.9
  \\
 &
   &
  223.3 &
  211.0 &
  222.4 &
  202.3 &
  221.9 &
  193.4 &
  138.9 &
  124.6 &
  137.4 &
  114.8 &
  136.4 &
  105.4
 \\
\hline
 &
   &
  \multicolumn{6}{c|}{$\sigma = 30\%$} &
  \multicolumn{6}{c|}{$\sigma = 30\%$}
\\
 &
   &
  \multicolumn{6}{c|}{$\mu_1$} &
  \multicolumn{6}{c|}{$\mu_1$}
  \\
 &
   &
  \multicolumn{2}{c|}{$-2\%$} &
  \multicolumn{2}{c|}{$-5\%$} &
  \multicolumn{2}{c|}{$-10\%$} &
  \multicolumn{2}{c|}{$-2\%$} &
  \multicolumn{2}{c|}{$-5\%$} &
  \multicolumn{2}{c|}{$-10\%$}
 \\
\cline{2-14}\multicolumn{1}{|c}{\multirow{9}[6]{*}{$\mu_0$}} &
  \multirow{3}[2]{*}{$2\%$} &
  3.6 &
  2.5 &
  6.2 &
  4.1 &
  9.4 &
  5.9 &
  5.1 &
  4.2 &
  8.8 &
  7.1 &
  13.0 &
  9.9
 \\
 &
   &
  32.2 &
  32.2 &
  27.8 &
  27.8 &
  23.2 &
  23.2 &
  27.7 &
  27.7 &
  21.2 &
  21.2 &
  14.7 &
  14.7
  \\
 &
   &
  35.8 &
  34.7 &
  34.1 &
  32.0 &
  32.6 &
  29.1 &
  32.8 &
  31.9 &
  30.1 &
  28.3 &
  27.7 &
  24.6
\\
\cline{2-14} &
  \multirow{3}[2]{*}{$8\%$} &
  5.7 &
  0.0 &
  10.0 &
  0.2 &
  15.3 &
  0.9 &
  7.8 &
  1.0 &
  13.5 &
  3.2 &
  20.3 &
  6.1
 \\
 &
   &
  68.7 &
  68.7 &
  62.7 &
  62.7 &
  55.9 &
  55.9 &
  49.5 &
  49.5 &
  41.0 &
  41.0 &
  31.8 &
  31.8
  \\
 &
   &
  74.4 &
  68.7 &
  72.7 &
  62.9 &
  71.3 &
  56.8 &
  57.3 &
  50.5 &
  54.5 &
  44.2 &
  52.2 &
  37.9
  \\
\cline{2-14} &
  \multirow{3}[2]{*}{$18\%$} &
  12.2 &
  0.0 &
  20.9 &
  0.2 &
  32.2 &
  1.2 &
  15.6 &
  0.5 &
  26.5 &
  2.0 &
  40.4 &
  5.5
\\
 &
   &
  216.0 &
  216.0 &
  205.8 &
  205.8 &
  193.3 &
  193.3 &
  130.0 &
  130.0 &
  116.4 &
  116.4 &
  100.6 &
  100.6
  \\
 &
   &
  228.2 &
  216.0 &
  226.7 &
  206.0 &
  225.5 &
  194.5 &
  145.5 &
  130.4 &
  143.0 &
  118.5 &
  141.0 &
  106.0
  \\
\hline
 &
   &
  \multicolumn{6}{c|}{$\sigma = 40\%$} &
  \multicolumn{6}{c|}{$\sigma = 40\%$}
 \\
 &
   &
  \multicolumn{6}{c|}{$\mu_1$} &
  \multicolumn{6}{c|}{$\mu_1$}
  \\
 &
   &
  \multicolumn{2}{c|}{$-2\%$} &
  \multicolumn{2}{c|}{$-5\%$} &
  \multicolumn{2}{c|}{$-10\%$} &
  \multicolumn{2}{c|}{$-2\%$} &
  \multicolumn{2}{c|}{$-5\%$} &
  \multicolumn{2}{c|}{$-10\%$}
  \\
\cline{2-14}\multicolumn{1}{|c}{\multirow{9}[6]{*}{$\mu_0$}} &
  \multirow{3}[2]{*}{$2\%$} &
  4.3 &
  3.3 &
  7.3 &
  5.4 &
  11.0 &
  7.7 &
  6.1 &
  5.3 &
  10.3 &
  8.7 &
  15.4 &
  12.5
\\
 &
   &
  40.8 &
  40.8 &
  35.9 &
  35.9 &
  30.3 &
  30.3 &
  36.0 &
  36.0 &
  28.7 &
  28.7 &
  20.6 &
  20.6
  \\
 &
   &
  45.1 &
  44.1 &
  43.2 &
  41.3 &
  41.3 &
  38.0 &
  42.1 &
  41.3 &
  39.0 &
  37.4 &
  36.0 &
  33.1
\\
\cline{2-14} &
  \multirow{3}[2]{*}{$8\%$} &
  6.3 &
  0.0 &
  10.7 &
  0.2 &
  16.4 &
  1.0 &
  8.6 &
  1.5 &
  14.6 &
  4.2 &
  22.1 &
  7.9
\\
 &
   &
  77.9 &
  77.9 &
  71.4 &
  71.4 &
  63.8 &
  63.8 &
  58.4 &
  58.4 &
  49.1 &
  49.1 &
  38.6 &
  38.6
  \\
 &
   &
  84.2 &
  77.9 &
  82.2 &
  71.6 &
  80.3 &
  64.8 &
  67.0 &
  59.8 &
  63.8 &
  53.4 &
  60.7 &
  46.5
  \\
\cline{2-14} &
  \multirow{3}[2]{*}{$18\%$} &
  12.3 &
  0.0 &
  20.8 &
  0.0 &
  32.1 &
  0.4 &
  15.8 &
  0.2 &
  26.6 &
  1.1 &
  40.6 &
  3.7
 \\
 &
   &
  223.5 &
  223.5 &
  213.0 &
  213.0 &
  199.9 &
  199.9 &
  138.3 &
  138.3 &
  124.3 &
  124.3 &
  107.3 &
  107.3
  \\
 &
   &
  235.8 &
  223.5 &
  233.8 &
  213.1 &
  232.0 &
  200.3 &
  154.1 &
  138.4 &
  150.9 &
  125.4 &
  148.0 &
  111.0
  \\
\hline
\end{tabular}
\caption{\label{table_comp} \small{ {\bf Comparative statics for the
full and partial information option values.}  Each subpanel of six
numbers contains the option values for full information in the left
column and partial information in the right column. Each column
contains (from top to bottom) the American component, the European
component and the total ESO value (sum of European and American). We
have, for the full information case, $ V=E_V + A_V$, and for the
partial information model $U = E_U + A_U$.  The option maturity is ten
years and granted at-the-money with $X_0=K=100$.  Parameter values
considered are: $\mu_0 = 2\%, 8\%, 18\%$, $\mu_1 = -2\%, -5\%, -10\%$,
transition intensity $\lambda = 10\%, 20\%$, volatility $\sigma=20\%,
30\%, 40\%$, and the riskfree rate is fixed at $r=2.5\%$. We fix
$y_0=0$.}}
\end{center}
\end{table}

\begin{table}
\begin{center}
\renewcommand{\arraystretch}{1.2}
\setlength\tabcolsep{2.5pt}

\begin{tabular}{|cc|dd|dd|dd|dd|dd|dd|}
\hline
\multicolumn{2}{|c|}{read:} &
  \multicolumn{6}{c|}{$\lambda = 10\%$} &
  \multicolumn{6}{c|}{$\lambda = 20\%$}
  \\
\cline{1-2}$A_V$ &
  $A_U$ &
  \multicolumn{6}{c|}{$ t_v = 3$ years } &
  \multicolumn{6}{c|}{$ t_v = 3 $ years }
  \\
  $E_V$ &
  $E_U$ &
  \multicolumn{6}{c|}{$\mu_1$} &
  \multicolumn{6}{c|}{$\mu_1$}
  \\
$V$ &
  $U$ &
  \multicolumn{2}{c|}{$-2\%$} &
  \multicolumn{2}{c|}{$-5\%$} &
  \multicolumn{2}{c|}{$-10\%$} &
  \multicolumn{2}{c|}{$-2\%$} &
  \multicolumn{2}{c|}{$-5\%$} &
  \multicolumn{2}{c|}{$-10\%$}
  \bigstrut[b]\\
\hline
\multicolumn{1}{|c}{\multirow{9}[6]{*}{$\mu_0$}} &
  \multirow{3}[2]{*}{$2\%$} &
    3.5 &
   2.5 &
 6.0 &
  4.1 &
 8.6  &
 5.8  &
  4.9 &
 4.2  &
  8.3 &
  7.0 &
  11.5  &
  9.5
  \\     
  &
  & 
  32.2 &
  32.2 &
  27.8 &
  27.8 &
  23.2 &
  23.2 &
  27.7 &
  27.7 &
  21.2 &
  21.2 &
  14.7 &
  14.7  
  \\ 
   &
   &
 35.7  &
 34.7  &
  33.8 &
  31.9 &
 31.8  &
 29.0  &
 32.6  &
 31.9  &
 29.5  &
 28.2  &
 26.2 &
24.2  
 \\
\cline{2-14} &
  \multirow{3}[2]{*}{$8\%$} &
    5.6 &
  0.0 &
   9.6 &
  0.2 &
  14.2 &
  0.9 &
 7.5   &
 1.0 &
 12.7  &
  3.2 &
 18.4  &
  6.1
  \\
 &
   &
  68.7 &
  68.7 &
  62.7 &
  62.7 &
  55.9 &
  55.9 &
  49.5 &
  49.5 &
  41.0 &
  41.0 &
  31.8 &
  31.8  \\
 &
   &
  74.3 &
  68.7 &
  72.3 &
  62.9 &
  70.1 &
  56.8 &
  57.0 &
  50.5 &
  53.7 &
  44.2 &
  50.2 &
  37.9
 \\
\cline{2-14} &
  \multirow{3}[2]{*}{$18\%$} &
 11.9 &
 0.0  &
  20.1 &
  0.2 &
  30.1 &
 1.2  &
   15.0 &
 0.4   &
   25.3 &
 2.0  &
    37.4 &
 5.4
  \\
 &
   &
 216.0 &
  216.0 &
  205.8 &
  205.8 &
  193.3 &
  193.3 &
  130.0 &
  130.0 &
  116.4 &
  116.4 &
  100.6 &
  100.6
  \\
 &
   &
 227.9  &
 216.0  &
 225.9  &
 206.0  &
  223.4 &
  194.5 &
   145.0 &
 130.4  &
  141.7  &
  118.4  &
  138.0 &
   106.0
 \\
\hline
 &
   &
  \multicolumn{6}{c|}{$ t_v = 5$ years } &
  \multicolumn{6}{c|}{$ t_v = 5$ years }
\\
 &
   &
  \multicolumn{6}{c|}{$\mu_1$} &
  \multicolumn{6}{c|}{$\mu_1$}
  \\
 &
   &
  \multicolumn{2}{c|}{$-2\%$} &
  \multicolumn{2}{c|}{$-5\%$} &
  \multicolumn{2}{c|}{$-10\%$} &
  \multicolumn{2}{c|}{$-2\%$} &
  \multicolumn{2}{c|}{$-5\%$} &
  \multicolumn{2}{c|}{$-10\%$}
 \\
\cline{2-14}\multicolumn{1}{|c}{\multirow{9}[6]{*}{$\mu_0$}} &
  \multirow{3}[2]{*}{$2\%$} &
  3.3 &
   1.3 &
  5.0   &
   3.8 &
  6.8 &
  5.1 &
  4.3 &
  3.8 &
  6.8 &
  6.0 &
  8.8 &
  7.7
 \\
 &
   &
  32.2 &
  32.2 &
  27.8 &
  27.8 &
  23.2 &
  23.2 &
  27.7 &
  27.7 &
  21.2 &
  21.2 &
  14.7 &
  14.7
  \\
 &
 &
  35.5 &
  34.5 &
  32.8 &
  31.6 &
  30.0 &
  28.3 &
  32.0 &
  31.5 &
  28.0 &
 27.2  &
 23.5  &
 22.4  
\\
\cline{2-14} &
  \multirow{3}[2]{*}{$8\%$} &
  5.0   &
  0.0 &
  8.2 &
  0.2 &
  11.7 &
   0.9 &
  6.5 &
 1.0  &
  10.6 &
   3.1 &
   14.5 &
   5.9
 \\
 &
   &
  68.7 &
  68.7 &
  62.7 &
  62.7 &
  55.9 &
  55.9 &
  49.5 &
  49.5 &
  41.0 &
  41.0 &
  31.8 &
  31.8
  \\
 &
   &
  73.7 &
  68.7 &
  70.9 &
  62.9 &
  67.6 &
 56.8  &
 56.0  &
  50.5 &
  51.6 &
 44.1  &
  46.3 &
  37.7
  \\
\cline{2-14} &
  \multirow{3}[2]{*}{$18\%$} &
 10.7  &
 0.0  &
 17.6  &
  0.2 &
   26.0 &
 1.2  &
 13.2  &
 0.4 &
  21.4 &
 2.0  &
 30.6  &
  5.4
\\
 &
   &
  216.0 &
  216.0 &
  205.8 &
  205.8 &
  193.3 &
  193.3 &
  130.0 &
  130.0 &
  116.4 &
  116.4 &
  100.6 &
  100.6
  \\
   &
   &
  226.7 &
  216.0  &
 223.4  &
  206.0 &
  219.3 &
  194.5 &
  143.2 &
  130.4 &
  137.8 &
  118.4  &
  131.2  &
  106.0
  \\
\hline
\end{tabular}
\caption{ \label{table_vest} \small{ {\bf{The effect of a vesting
period of 3 and 5 years on ESO valuation by agents with full and
partial information.}} We take $\sigma=30\%$ and thus values should be
compared with the middle panels of Table \ref{table_comp}.  Each
subpanel of six numbers contains the option values for full
information in the left column and partial information in the right
column. Each column contains (from top to bottom) the American
component, the European component and the total ESO value (sum of
European and American). We have, for the full information case, $
V=E_V + A_V$, and for the partial information model $U = E_U + A_U$.
The option maturity is ten years and granted at-the-money with
$X_0=K=100$.  We consider vesting periods of $t_v=3$ years and $t_v =
5$ years.  Parameter values considered are: $\mu_0 = 2\%, 8\%, 18\%$,
$\mu_1 = -2\%, -5\%, -10\%$, transition intensity $\lambda = 10\%,
20\%$, and the riskfree rate is fixed at $r=2.5\%$. We fix $y_0=0$.
}}
\end{center}
\end{table}

We now turn to the impact of differential information about the stock
price on ESO valuation by the agents themselves.  We emphasise that
the ESO values we report represent the value to the individual agent,
often termed subjective value in the literature on ESO compensation
(see Carpenter \cite{carpenter98}).  It is the value under the
$\Bbb{P}$ measure.

Table \ref{table_comp} reports the time-zero ESO values for the agent
with full information, $V= V_0$, and for the agent with partial
information, $U = U_0$. The table also gives a breakdown of each ESO
value into its European (labelled $E_V$ and $E_U$) and American early
exercise components (labelled $A_V$ and $A_U$).  This breakdown shows
the value differential arises entirely from the American early
exercise component of the ESO values. As the simulations demonstrate
in Section \ref{subsec:ex}, the agent with full information uses this
knowledge to time his option exercise advantageously.

The additional value that the agent with full information places on
the ESO is significant in magnitude. Consider the American early
exercise value as a proportion of total ESO value for each of the full
and partial information cases. For example, with $\lambda=10\%$,
$\mu_0=8\%$, $\mu_1=-5\%$, $\sigma=30\%$, the American early exercise
value represents 13.8\% (10/72.7) of the ESO value for full
information, and 0.32\% (0.2/62.9) of value for partial information.
If we compare these American-as-proportion-of-total values for the
full and partial information agents, we see that the magnitude is much
larger for the agent with full information. In our example, we see the
13.8\% is about 43 times larger than the 0.32\%.  This ratio varies
between around 1.2, up to values as high as 69.  There are also some
zero values for the American early exercise value under partial
information, which tend to be for high $\mu_0$ and the best case of
-2\% for $\mu_1$, indicating no early exercises take place. In these
scenarios, the agent with full information gains significantly as he
uses his additional information on the change point to time exercise
advantageously.

The table documents how the full and partial information ESO values
vary with changes in stock specific parameters $\mu_0, \mu_1$ and
$\sigma$, and the transition intensity $\lambda$.  The option values
under full and partial information increase with the value of expected
return $\mu_0$. Under the partial information model, the American
component of value often drops with $\mu_0$, consistent with there
being relatively few exercises for high values of $\mu_0$.

Under both full and partial information, option values decrease with
the absolute value of $\mu_1$.  However, the American component of
value increases with $|\mu_1 |$, for both full and partial
information, indicating that the ability to time the exercise of the
option is more valuable when the expected return following a change
point is worse.  For example, scenarios with a low $\mu_0$ of 2\%, the
worst case for $\mu_1$ of -10\%, and the transition probability
$\lambda=0.2$, the American component of option value can be as high
as 40-50\% of ESO value.

Volatility increases the full and partial information option
values. The European component is increasing in volatility but the
American component can increase or decrease. If $\mu_0$ is
sufficiently high, volatility can reduce the American component of
value in both full and partial information scenarios.

A higher probability of a downward jump in expected return (higher
$\lambda$) reduces the full and partial information ESO values. The
European component of value is reduced, as a higher $\lambda$ simply
means a greater chance of switching to the bad regime. However, the
American component of value increases with $\lambda$ because the
ability to time the exercise becomes more important when the chance of
the bad state is increased. This is true for both the agent with full
and the agent with partial information.

We now turn to briefly examine the impact of vesting on ESO
valuation. Section \ref{sec:vesting} described the effect of a vesting
period $[0, t_v)$ on option exercise.  Table \ref{table_vest}
documents the ESO values for both a 3- and a 5-year vesting period for
a representative subset of market parameters from Table
\ref{table_comp} and fixing volatility at $\sigma=30\%$. Hence the ESO
values should be compared to the middle panel of Table
\ref{table_comp} where the same volatility is used but no vesting
period.

As we anticipate, the American early exercise values are
non-increasing as $t_v$ increases, as the option becomes
un-exercisable for a larger share of the life of the option. For
example, when $\mu_0=2\%, \mu_1=-5\%, \sigma=30\%$ and $\lambda=10\%$,
the early exercise value for full information falls from 6.2, to 6, to
5, as $t_v$ increases from 0, to 3 years, to 5 years. Corresponding
early exercise values in the partial information setting are 4.1, 4.0,
3.8.  For some parameters, say when $\mu_0$ is high, the early
exercise value in the case with partial information did not vary with
$t_v$, as these are situations where there are no exercises taking
place when there is no vesting period, and thus additional exercise
restrictions via vesting do not alter the agent's value.

{\small

\bibliography{esoexcp_refs}

\bibliographystyle{siam}

}

\end{document}